\documentclass[11pt]{article}
\usepackage[letterpaper,margin=1in]{geometry}
\usepackage{amsmath,amssymb,amsthm,mathtools}
\usepackage{enumitem}
\usepackage{microtype}
\usepackage{authblk}
\usepackage[colorlinks=true,linkcolor=blue,citecolor=blue,urlcolor=blue]{hyperref}
\usepackage[capitalise]{cleveref}

\newtheorem{theorem}{Theorem}[section]
\newtheorem{lemma}[theorem]{Lemma}
\newtheorem{proposition}[theorem]{Proposition}
\newtheorem{corollary}[theorem]{Corollary}
\newtheorem{definition}[theorem]{Definition}
\newtheorem{remark}[theorem]{Remark}

\newcommand{\A}{\mathcal A}
\newcommand{\PermO}{\mathcal P}
\newcommand{\Hh}{\mathcal H}
\newcommand{\Ll}{\mathcal L}
\newcommand{\I}{\mathcal I}
\newcommand{\T}{\mathcal T}

\newcommand{\dom}{\operatorname{dom}}
\newcommand{\im}{\operatorname{im}}
\newcommand{\ran}{\operatorname{ran}}
\newcommand{\Span}{\operatorname{span}}
\newcommand{\cpO}{\mathsf{cpO}}
\newcommand{\ccO}{\mathsf{ccO}}
\newcommand{\Comp}{\mathsf C}
\newcommand{\wtV}{\widetilde V}
\newcommand{\Ddown}{B^{\mathrm{dom}}}
\newcommand{\Idown}{B^{\mathrm{im}}}
\newcommand{\ket}[1]{\lvert #1\rangle}
\newcommand{\bra}[1]{\langle #1\rvert}
\newcommand{\braket}[2]{\langle #1\mid #2\rangle}
\newcommand{\proj}[1]{\ket{#1}\!\bra{#1}}
\newcommand{\norm}[1]{\left\lVert #1\right\rVert}
\newcommand{\ip}[2]{\left\langle #1,#2\right\rangle}
\newcommand{\one}{\mathbf 1}
\newcommand{\Ccont}[1]{\mathsf c(#1)}

\title{Compressed Permutation Oracles Revisited}
\author[1]{Joseph Carolan}
\author[2]{Christian Majenz}
\affil[1]{University of Maryland}
\affil[2]{Technical University of Denmark}
\date{}

\begin{document}
\maketitle

\begin{abstract}
    The compressed permutation oracle has been used to analyze the quantum security of a number of cryptographic constructions which resisted prior techniques. However, these analyses were fundamentally limited by the poor soundness of the method: the technique was proven sound only up to $O(N^{1/12})$ queries to permutations on $N$ elements. We revisit this analysis, improving the soundness bound to a tight $\Omega(N^{1/2})$. In addition to being tighter, our proof is conceptually simpler and more direct, and gives the same bound in the ideal cipher model. The main technical idea is to construct the compression isometry from a simple POVM on the naive purification, a technique which may find wider applications.

    As immediate applications, our results yield tight, concrete collision and pre-image lower bounds for the sponge hash construction underlying SHA3 and the Davies--Meyer compression function used in SHA1 and SHA2. More broadly, the improved soundness theorem provides a general-purpose tool for analyzing quantum security in settings where random permutations or ideal ciphers serve as the underlying primitive.
\end{abstract}

\section{Introduction}

Random permutations and ideal ciphers are central to the analysis of
symmetric cryptography. They model the primitives underlying constructions
such as sponge hashing and Davies--Meyer, which in turn inform the design
of widely used hash functions. In the quantum setting, an adversary may
query these primitives in superposition, in both the forward and inverse
directions. Understanding the information obtained through these queries
has been a persistent obstacle to proving security.

For random functions, Zhandry's compressed oracle
technique~\cite{Zhandry19} provides a useful way to reason about this
information. It represents the oracle's interaction with an algorithm
through a small database of input-output pairs, allowing quantum security
proofs to resemble classical lazy-sampling arguments. Extending this
approach to permutations is more difficult: injectivity correlates the
database entries, and forward and inverse queries must remain consistent.

The compressed permutation oracle of~\cite{Carolan-Comp} established
such a framework, with databases consisting of partial permutations.
Its applications include security proofs for Feistel and query lower
bounds for sponge and Davies--Meyer. However, its soundness analysis
was limited to the $N^{1/12}$ query scale for permutations on $N$
elements. This simulation error can dominate the resulting security
bounds, even when the compressed database analysis gives the expected
bound.

\paragraph{Our results.}
We give a direct analysis of the same compressed permutation oracle,
improving its soundness guarantee to the square-root query scale.
Specifically, an algorithm making $q$ forward or inverse queries
distinguishes the compressed oracle from a uniformly random permutation
with advantage at most
$O\!\left(\frac{q}{\sqrt N}\right)$.

The same bound holds for ideal ciphers, independently of the number of
keys and allowing queries in superposition over keys. Thus the
compressed simulation is sound whenever $q=o(\sqrt N)$.
Our theorem gives explicit constants, which enables proving concrete security.

The main idea is to construct the compression isometry directly from
the naive purification of a random permutation. Each partial permutation
determines a uniform superposition of its completions. These states
provide a natural candidate for decompression, but their overlaps
prevent the resulting map from being an isometry. We separate the
completion states into orthogonal levels and normalize the resulting
frame (a \emph{frame} is a collection of vectors that spans a Hilbert space but may be linearly dependent). Equivalently, compression is the coherent implementation of a
POVM on the purification. Representation theory of the symmetric group
controls this normalization and allows us to compare individual
purified and compressed queries. We develop this argument in the
technical overview and prove it in
\Cref{sec:intertwiner-soundness}.

\paragraph{Applications.}
The improved soundness bound makes the existing database method more
effective. In \Cref{sec:search-applications}, we prove an explicit-constant
fundamental lemma relating an algorithm's output to the compressed
database, and use it to establish preimage and collision resistance
of sponge and Davies--Meyer.

For a sponge with digest length $d$ and capacity $c$, the resulting
success probabilities are
\[
    O\!\left(\frac{T^2}{2^d}+\frac{T^3}{2^c}\right)
    \quad\text{and}\quad
    O\!\left(\frac{T^3}{2^d}+\frac{T^3}{2^c}\right),
\]
respectively, where $T$ includes the queries needed to obtain and
verify the output transcripts. We give explicit constants and
specialize the bounds to all four SHA-3 hash-function parameter
sets~\cite{FIPS202}, in the random permutation model.
For Davies--Meyer with $n$-bit blocks, we obtain bounds of
$O(T^2/2^n)$ for preimages and $O(T^3/2^n)$ for collisions, including
in the ideal cipher model.

\paragraph{Related work.}
Prior approaches to permutation oracles include the compressed-oracle
arguments of Rosmanis~\cite{Rosmanis22}, Unruh's framework for compressed
permutation oracles~\cite{Unruh23}, and the permutation superposition
oracles of Majenz, Malavolta, and Walter~\cite{MMW24}.
Cojocaru et al.~\cite{CHLYY25} give lifting theorems for random
permutations and ideal ciphers, with applications to sponge and
Davies--Meyer. Other quantum security analyses of these constructions
include~\cite{CGBHS18,HY18,CP24,CPZ24,ACMT25}. Our contribution is a stronger
soundness analysis of the compressed permutation oracle
of~\cite{Carolan-Comp}, allowing its database-based proofs to yield
stronger security bounds.

\section{Technical overview}

To show soundness of the compressed permutation oracle, will compare it against an ideal random permutation oracle. In particular, in the ideal experiment
we keep a register containing a uniformly random permutation and
answer queries using that permutation. We \emph{purify} the randomness of this permutation, initializing a register in the uniform superposition
\[
    \ket{S_N}=\frac{1}{\sqrt{N!}}\sum_{\varphi\in S_N}\ket{\varphi}.
\]
Keeping this register inaccessible to the algorithm gives the same
experiment as sampling a classical random permutation. In the compressed permutation oracle description,
the register, now referred to as the compressed oracle, maintains a superposition of small databases
of input-output pairs, starting from the empty database. Queries are answered using the compressed permutation oracle $\cpO$ defined in~\cite{Carolan-Comp}. At a high level, this operator answers a query by first \emph{decompressing} the database at the queried point, preparing a uniform superposition over responses if it was undefined, then answering the query by looking up in the database, followed by \emph{compressing} the database at the queried point using the inverse of decompression.

We compare these descriptions by constructing a linear map
$\widetilde V$ from the permutation register to the database register.
The map will be an \emph{isometry}, meaning that it preserves inner
products, and hence the norms of states. It maps the uniform
permutation state to the empty database and approximately satisfies
\[
    \cpO_x\widetilde V\approx\widetilde V P_x,
\]
where $P_x$ is a purified query and $\cpO_x$ is a compressed query at
input $x$. Thus answering a query and then compressing should be
close to compressing first and then answering the query.\footnote{We suppress identity operators on the answer register throughout this overview.} A hybrid argument across all queries then shows the result.

\paragraph{From databases to permutations.}
A database $f$ specifies a partial permutation: it fixes some
input-output pairs without repeating a domain or image value.
A natural way to turn it into a state of the permutation register
is to take the uniform superposition of all permutations agreeing
with those pairs. Write this \emph{completion state} as
\[
    \ket{T_f}
    =\frac{1}{\sqrt{(N-r)!}}
      \sum_{\varphi:\,\varphi\supseteq f}\ket{\varphi},
    \qquad |f|=r.
\]
This suggests the linear map
\[
    E_r\ket f=\ket{T_f}
\]
from size-$r$ databases to permutation states. However, different
databases can have common completions, so their completion states
need not be orthogonal. The map $E_r$ therefore need not preserve
norms. These completion states also appear in Rosmanis's analysis of
permutation inversion~\cite[Section~3.1]{Rosmanis22}, where their
span describes the permutation states accessible after at most
$r$ queries. We use the same states to construct an explicit
norm-preserving map from the permutation register to the
compressed database.

We first separate the information represented by different database
sizes. Let $\mathcal H_{\le r}$ be the span of completion states for
databases of size at most $r$. Define $\mathcal H_r$ to be the part
of this space orthogonal to $\mathcal H_{\le r-1}$. We call
$\mathcal H_r$ the $r$th \emph{level}: it consists of the directions
that first appear when size-$r$ databases are allowed. Let $\Pi_r$
be the orthogonal projector onto this level. Projecting the completion
map gives
\[
    \Gamma_r=\Pi_rE_r.
\]
Its outputs now lie entirely in level $r$, although they still
overlap within that level.

\paragraph{Correcting the overlaps.}
The Gram matrix
\[
    G_r=\Gamma_r\Gamma_r^\dagger
\]
describes the normalization needed to correct these overlaps.
We show that it is invertible on $\mathcal H_r$, and define
\[
    \widetilde V_r=\Gamma_r^\dagger G_r^{-1/2}.
\]
The inverse square root makes this map an isometry:
\[
    \widetilde V_r^\dagger\widetilde V_r
    =G_r^{-1/2}G_rG_r^{-1/2}=I_{\mathcal H_r}.
\]
Combining these maps over the orthogonal levels gives
$\widetilde V$.

This construction also has a measurement interpretation, which formed the original intuition behind its construction. A
\emph{positive operator-valued measure}, or POVM, is a collection
of positive operators summing to the identity; each operator
specifies the probability of one measurement outcome. On level $r$,
our outcomes are the size-$r$ databases projected to this level, i.e. the states $\Pi_r \ket{T_f}$ for $|f|=r$. If we take operators
\[
    G_r^{-1/2}\Pi_r\ket{T_f}\bra{T_f}\Pi_rG_r^{-1/2}
\]
in our POVM, then this forms the \emph{pretty good measurement} on the aforementioned states, which is, in fact, optimal for minimum-error state discrimination in our case as we have an ensemble of pure states that forms an orbit under a transitive group action~\cite{EF01,EF02}. 
The map $\widetilde V_r$ records these outcomes in a quantum register (the compressed permutation oracle),
preserving their superposition instead of measuring them.

To analyze the normalization, we use the symmetry under relabeling
the inputs and outputs of a permutation. Representation theory
decomposes the permutation space into
subspaces labeled by irreducible representations of the symmetric group. On each such subspace pieces,
$G_r$ acts by a scalar, which we calculate explicitly. In particular, we show
\[
    I\preceq G_r\preceq
    \exp\!\left(\frac{r}{N-2r+1}\right)I,
    \qquad \text{ when }r\le\frac{N-1}{2}.
\]
Here $\preceq$ denotes the ordering of self-adjoint operators:
$A\preceq B$ means that $B-A$ is positive semidefinite.
Thus the required normalization is small when $r$ is small
compared with $N$.

\paragraph{Comparing one query.}
For the query calculation, it is useful to introduce a second map,
\[
    V_r=\Gamma_r^\dagger G_r^{-1}.
\]
Unlike $\widetilde V_r$, this map is a \emph{contraction}, meaning it can decrease but not increase norms. Its advantage is that it satisfies the exact identity
\[
    E_rV_r=I_{\mathcal H_r}.
\]
Thus applying $V_r$ and then replacing each database by its
completion state recovers the original permutation state exactly.
We write $V$ for the map obtained by combining the $V_r$ over levels.

A query changes the level by at most one, in both descriptions.
Starting from level $r$, we can therefore compare the transitions
to levels $r-1$, $r$, and $r+1$ separately. Moreover, on states
produced by $V_r$, answering with the compressed oracle and then
applying the completion maps reproduces the purified query exactly.

The remaining task is to compare the database states themselves.
There are two issues. First, a compressed query may produce a
component outside the subspaces reached by $V$. Second, even within
those subspaces, the coefficients may differ from those obtained
by applying a purified query followed by $V$.

We control the first issue using cancellation identities satisfied
by states in the range of $V_r$. Fix a database with one pair
removed. If we fix the missing input and sum the coefficients over
all available outputs, the sum is zero. The analogous sum over
available inputs also vanishes. We show that, in the levels of
interest, these conditions characterize the range of $V_r$.
Quantitatively, a state whose sums are small must be close to that
range. Applying this estimate to the state after a query bounds
the component outside the range.

For the second issue, we fix the queried pair $x\mapsto y$ and
compare the normalization operators before and after that pair
is imposed. Their eigenvalues have explicit formulas, and the
relevant ratios are close to one. Together, these calculations give
\[
    \bigl\|(\cpO_xV-VP_x)\Pi_{\le t}\bigr\|
    =O\!\left(\frac{1}{\sqrt{N-t}}\right),
\]
where $\Pi_{\le t}$ projects onto the first $t+1$ levels.
The estimate holds for arbitrary states of the answer register,
including states entangled with the algorithm's workspace.

\paragraph{Accumulating the error.}
The eigenvalue bounds for $G_r$ show that $V$ and $\widetilde V$
differ by $O(t/N)$ on levels at most $t$, when $t$ is small compared
with $N$. Replacing $V$ by the isometry therefore gives a one-query
error of
\[
    O\!\left(\frac1{\sqrt N}+\frac{t+1}{N}\right).
\]
After $t$ queries, the purified permutation register is supported
on levels at most $t$. The algorithm's operations between queries
act only on its accessible registers, so they commute with the
compression map. We can consequently add the one-query errors
over the computation, obtaining final-state Euclidean distance
\[
    O\!\left(\frac q{\sqrt N}+\frac{q^2}{N}\right) = O\!\left(\frac q{\sqrt N}\right).
\]
Because $\widetilde V$ preserves norms and acts only on the
inaccessible oracle register, this also bounds the algorithm's
distinguishing advantage.

Exchanging the roles of inputs and outputs gives the same analysis
for inverse queries. For an ideal cipher, we apply an independent
compression map to the permutation associated with each key.
A query acts on only the selected key's permutation, so the same
bound holds even for superpositions over keys, without a loss
depending on the number of keys.
\section{Preliminaries}

For a finite set $X$, let $\Hh(X)=\Span\{\ket{x}:x\in X\}$ denote the corresponding Hilbert space.  All registers are finite-dimensional, and all unlabeled identity operators are understood from context.  We write $\norm{\cdot}$ for both the Euclidean norm on vectors and the operator norm on linear maps. For subspaces $W\subset V\subset \Hh$ of a finite-dimensional Hilbert space $\Hh$, we denote by $V\ominus W$ the orthocomplement of $V$ in $W$.

\subsection{Permutations}

Let $[N]=\{1,2,\ldots,N\}$ and let $S_N$ denote the symmetric group on $[N]$.  The quantum random permutation model samples a uniformly random permutation $\varphi\in S_N$ and exposes quantum oracle access to both $\varphi$ and $\varphi^{-1}$.  In the standard out of place oracle formulation these can be written as
\begin{align}
    \mathcal O_{\varphi}\ket{x,y}
        &=\ket{x,y+\varphi(x)}, &
    \mathcal O_{\varphi^{-1}}\ket{y,x}
        &=\ket{y,x+\varphi^{-1}(y)},
\end{align}
where addition in the second register is over $\mathbb Z_N$.
For our analysis, it is more convenient to use a distinguished blank symbol $\bot$.  Let $\A=\Hh({[N]\cup\{\bot\}})$ denote the extended register and, for each $z\in[N]$, let $S_z$ be the unitary on $\A$ which swaps $\ket{\bot}$ and $\ket z$ and fixes every other computational-basis state. This allows us to define the swap oracle, an alternate to the standard oracle.

\begin{definition}[Swap oracle]\label{def:swap-oracle}
    For $\varphi\in S_N$, the forward and inverse swap oracles are defined by \begin{align}
        \widetilde{\mathcal O}_{\varphi}
        &\coloneqq \sum_{x\in[N]}\proj{x}\otimes S_{\varphi(x)}, & 
        \widetilde{\mathcal O}_{\varphi^{-1}}
        &\coloneqq \sum_{x\in[N]}\proj{x}\otimes S_{\varphi^{-1}(x)}
    \end{align}
\end{definition}

We will work exclusively with swap oracles, though the addition and swap formulations are equivalent up to a constant factor in the number of queries.\footnote{One could additional define an in-place permutation oracle mapping $\ket{x}$ to $\ket{\varphi(x)}$ and the inverse. This model is also equivalent up to constant factors to the swap model.}

It will be useful to purify the choice of the random permutation.  Let
\(
    \PermO=\Hh({S_N})
\)
be the permutation-memory register, with computational basis
$\{\ket{\varphi}:\varphi\in S_N\}$. Note that this space is also known as the regular representation or the group algebra of the symmetric group, depending on which mathematical structure we equip it with, see \cref{subsec:representation-preliminaries}.
For $x,y\in[N]$, it will be useful to define a projector $Q_{x,y}$ that asserts a permutation maps $x$ to $y$, as well as purified forward and inverse queries $P_x, P_y^{-1}$ respectively.

\begin{definition} For $x, y \in [N]$, define
\begin{align}\label{eq:purified-permutation-oracle}
    Q_{x,y} &\coloneqq \sum_{\substack{\varphi\in S_N\\ \varphi(x)=y}}
      \proj{\varphi}, & P_x
    &\coloneqq \sum_{y\in[N]}S_y\otimes Q_{x,y}, & P^{-1}_y
    &\coloneqq \sum_{x\in[N]}S_x\otimes Q_{x,y},
\end{align}
where the former acts on $\PermO$ and the latter two act on $\A\otimes\PermO$. 
\label[definition]{def:perms-purified-operators}
\end{definition}

When we drop the subscript on the $P_x, P_y^{-1}$ operators these refer to the coherently controlled analogues. In particular, $P = \sum_{x \in [N]} \proj{x} \otimes P_x$, and similarly for $P^{-1}$.
\begin{remark}
If we initialize a permutation register in the uniform state $\ket{S_N} \coloneqq (N!)^{-1/2} \sum_{\varphi \in S_N} \ket{\varphi}$, and replace swap queries $\widetilde O_{\varphi}, \widetilde O_{\varphi^{-1}}$ with purified queries $P, P^{-1}$ respectively acting on the purification register, then tracing out the permutation memory after any interaction recovers the quantum random permutation model experiment with a uniformly random, classically sampled permutation.
    
\end{remark}

\subsection{Ciphers}

The ideal cipher model is the keyed analogue of the random permutation model. Letting $\mathcal K$ be a finite key space, an ideal cipher is a family
\(
    \Phi=(\varphi_k)_{k\in\mathcal K}\in S_N^{\times \mathcal K}
\)
in which the permutations $\varphi_k$ are sampled independently and uniformly.  The adversary is given quantum access to
\[
    (k,x)\longmapsto\varphi_k(x)
    \qquad\text{and}\qquad
    (k,y)\longmapsto\varphi_k^{-1}(y).
\]
A corresponding purification is obtained by taking
\[
    \PermO_{\mathrm{IC}}=\Hh(S_N^{\times\mathcal K}),
    \qquad
    \ket{\mathrm{\Phi}}_{\mathrm{IC}}
    =\frac{1}{(N!)^{|\mathcal K|/2}}
      \sum_{\Phi\in S_N^{\mathcal K}}\ket{\Phi}.
\]
For $k\in\mathcal K$ and $x,y\in[N]$, let
\[
    Q_{k,x,y}
    =\sum_{\substack{\Phi\in S_N^{\mathcal K}\\ \varphi_k(x)=y}}
      \proj{\Phi}.
\]
The corresponding purified forward and inverse oracle operators are
\[
    \PermO_{k,x}
    =\sum_{y\in[N]}S_y\otimes Q_{k,x,y},
    \qquad
    \PermO^{-1}_{k,y}
    =\sum_{x\in[N]}S_x\otimes Q_{k,x,y}.
\]
Thus a query under key $k$ acts only on the permutation indexed by $k$.  This tensor-product structure will allow the permutation analysis to extend directly to ideal ciphers.

\subsection{Compressed permutation oracles}

The compressed permutation oracle is a technique for analyzing quantum algorithms that query random, invertible permutations. The technique involves statefully simulating quantum queries through a \emph{compressed permutation oracle} operator, which updates an initially empty database with points corresponding to knowledge gained by the adversary. We recall the basic definitions of the techniques in this section, but refer to \cite{Carolan-Comp} for a more thorough treatment.

\begin{definition}
    An injective partial function
\[
    f:D\hookrightarrow[N],
    \qquad D\subseteq[N],
\]
will be called a partial permutation.  We write $\dom(f)=D$, $\im(f)=f(D)$, and $|f|=|D|$.
\label{def:partial-funcs}
\end{definition}
If $x\notin\dom(f)$ and $y\notin\im(f)$, then
$f[x\mapsto y]$ denotes the partial permutation obtained by adding the pair $(x,y)$ to $f$. Let $\I_r$ be the set of partial permutations of size $r$, and define
\[
    \Ll_r=\Hh(\I_r),
    \qquad
    \Ll=\bigoplus_{r=0}^{N}\Ll_r.
\]
We call $r$ the \emph{level} or \emph{degree} of a partial function, and extend this terminology to $\Ll_r$ and $\Ll$. The computational-basis vector indexed by $f$ is denoted $\ket f$, and the unique vector in $\Ll_0$ is denoted $\ket\varnothing$.  The compressed permutation oracle uses $\Ll$ as its database register and initializes it in $\ket\varnothing$.

Fix $x\in[N]$.  For every $h$ with $x\notin\dom(h)$, set
\[
    Y_h=[N]\setminus\im(h),
    \qquad
    h_y=h[x\mapsto y],
\]
and define
\[
    \ket{+_{x,h}}
    =\frac{1}{\sqrt{|Y_h|}}
      \sum_{y\in Y_h}\ket{h_y}.
\]
The following spaces are indexed by $h \in \I$ with $x \not\in \dom(h)$:
\[
    \Span\bigl(\ket h,\{\ket{h_y}:y\in Y_h\}\bigr),
    \qquad x\notin\dom(h).
\]
These spaces are mutually orthogonal and exhaust $\Ll$.  Let $\Comp_x$ be the unitary which, on each such space, swaps $\ket h$ and $\ket{+_{x,h}}$ and fixes the orthogonal complement of their span. 

\begin{definition}
    We can extend the assertion projectors $Q_{x,y}$ from \Cref{def:perms-purified-operators} to this space in the natural way, in particular writing
\[
    Q_{x,y}
    =\sum_{\substack{f \in \I,\\ f(x)=y}}\proj f,
    \qquad
    Q_{x,\bot}
    =\sum_{\substack{f \in \I,\\ x\notin\dom(f)}}\proj f,
\]
and their conjugates under compression
\[
    M_{x,y}=\Comp_xQ_{x,y}\Comp_x^\dagger,
    \qquad
    M_{x,\bot}=\Comp_xQ_{x,\bot}\Comp_x^\dagger.
\]
\label[definition]{def:partial-assertions}
\end{definition}
These operators form a projective measurement:
\[
    M_{x,\bot}+\sum_{y\in[N]}M_{x,y}=I_{\Ll}.
\]
With these in place, we can define the compressed permutation oracle succinctly.

\begin{definition}[Compressed permutation oracle]\label{def:compressed-permutation-oracle}
    For a fixed query point $x\in[N]$, define
    \begin{equation}\label{eq:compressed-permutation-oracle}
        \cpO_x
        =I_{\A}\otimes M_{x,\bot}
         +\sum_{y\in[N]}S_y\otimes M_{x,y}.
    \end{equation}
    The full forward oracle is obtained by controlling on the input register, $\cpO = \sum_{x \in [N]} \proj{x} \otimes \cpO_x$.
    \label[definition]{def:comp-perms}
\end{definition}

To define inverse queries, let
\(
    F:\Ll\rightarrow\Ll\) denote the flip operator such that \(
    F\ket f=\ket{f^{-1}}
\).
The compressed inverse oracle is the conjugation of the compressed oracle under flip,
\[
    \cpO^{-1}_y
    =(I_{\A}\otimes F)\cpO_y(I_{\A}\otimes F).
\]
This definition extends in the natural way to compressed \emph{cipher} oracles, with independent databases indexed by each key in $\mathcal K$. In particular, the cipher compressed database database is $\mathcal C = \bigotimes_{k \in \mathcal K}\mathcal H(\I)$, initialized as $\bigotimes_{k \in \mathcal K} \ket{\varnothing}$.

\begin{definition}[Compressed cipher oracle]\label{def:compressed-cipher-oracle}
    For a fixed query point $x\in[N]$ and key $k \in \mathcal K$, define
    \begin{equation}\label{eq:compressed-cipher-oracle}
        \ccO_{x, k}
        = (\cpO_x)_{\mathcal Ak} \otimes \left(\bigotimes_{\widetilde k \in \mathcal K \setminus \{k\}} I_{\widetilde k}\right),
    \end{equation}
    where the final subscript indicates which permutation database is acted on.
    The full forward oracle is obtained by controlling on the input and key register, $\cpO = \sum_{x \in [N], k \in \mathcal K} \proj{x, k} \otimes \cpO_{x, k}$.
    \label[definition]{def:comp-perms}
\end{definition}
An inverse cipher query is defined analogously to the inverse compressed oracle, in particular using the flip operator as 
\[
    \ccO^{-1}_{y, k}
    =(I_{\A}\otimes F^{\otimes \mathcal K})\ccO_{y, k}(I_{\A}\otimes F^{\otimes \mathcal K}).
\]
Note that the compressed permutation oracle and compressed cipher oracle can both be efficiently simulated. In the case of the compressed cipher oracle, one can store only non-empty databases, removing the need to store exponentially many compressed databases.

\subsection{Representation-theoretic preliminaries}
\label{subsec:representation-preliminaries}

In this section, we recall standard representation-theoretic facts which will be useful throughout the proof.
A partition of $n$ is a nonincreasing sequence
\[
    \lambda=(\lambda_1,\lambda_2,\ldots),
    \qquad
    \sum_i\lambda_i=n,
\]
and we write $\lambda\vdash n$.  We identify $\lambda$ with its Young diagram
\[
    \{(i,j):1\le j\le\lambda_i\}.
\]
The conjugate partition is denoted $\lambda'$.  If $\mu\subseteq\lambda$, then $\lambda/\mu$ denotes the corresponding skew diagram.  It is a \emph{horizontal strip} if it contains at most one box in each column.

For $\lambda\vdash n$, let $S^\lambda$ denote the corresponding Specht module and let
\[
    f^\lambda=\dim S^\lambda.
\]
More generally, $f^{\lambda/\mu}$ denotes the number of standard Young tableaux of skew shape $\lambda/\mu$.  The content of a box $(i,j)$ is $j-i$, and the total content of $\lambda$ is
\[
    \Ccont{\lambda}
    =\sum_{(i,j)\in\lambda}(j-i).
\]

The group $S_n\times S_n$ acts on the group algebra $\Hh(S_n)$ by
\[
    (\sigma,\tau)\ket{\varphi}
    =\ket{\sigma\circ\varphi\circ\tau^{-1}}.
\]
Restricting to the left or right action only, the associated representation is known as the regular representation. 
Under this action, the group algebra decomposes as
\begin{equation}\label{eq:regular-bimodule-decomposition}
    \Hh(S_n)
    \cong
    \bigoplus_{\lambda\vdash n}
    S^\lambda\otimes S^\lambda.
\end{equation}

We recall the representation-theoretic facts used below.

\begin{proposition}[Standard representation-theoretic facts]
\label{prop:representation-facts}
Let $\lambda\vdash n$.

\begin{enumerate}[label=(\roman*)]
    \item If $h_\lambda(i,j)$ is the hook length of $(i,j)\in\lambda$, then
    \[
        f^\lambda
        =\frac{n!}{\prod_{(i,j)\in\lambda}h_\lambda(i,j)}.
    \]

    \item Under the standard inclusion $S_m\le S_n$,
    \[
        \dim (S^\lambda)^{S_m}
        =f^{\lambda/(m)}.
    \]
    In particular, this space is nonzero if and only if $\lambda_1\ge m$.

    \item If $\alpha\vdash j$ and $m=n-j$, then the Pieri rule gives
    \[
        \operatorname{Ind}_{S_j\times S_m}^{S_n}
        \bigl(S^\alpha\boxtimes\mathbf 1\bigr)
        \cong
        \bigoplus_{\substack{\lambda\vdash n\\
        \lambda/\alpha\text{ a horizontal }m\text{-strip}}}
        S^\lambda.
    \]

    \item The transposition class sum
    \[
        T_n=\sum_{1\le a<b\le n}(a\ b)
    \]
    acts on $S^\lambda$ as multiplication by $\Ccont{\lambda}$.
\end{enumerate}
\end{proposition}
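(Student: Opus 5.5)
These are classical facts, so the plan is to reduce each item to a textbook result (Sagan, \emph{The Symmetric Group}; Fulton--Harris; or Macdonald), giving only a short derivation where the statement differs slightly from the usual one. Item (i) is the Frame--Robinson--Thrall hook length formula, which we cite directly. If a self-contained argument were wanted, the Greene--Nijenhuis--Wilf hook walk proves $f^\lambda=\sum_{\mu\lessdot\lambda}f^\mu$ with the hook quotient on the right-hand side, and induction on $n$ then closes the argument. We do not reprove it.

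For item (ii), apply the branching rule $\operatorname{Res}^{S_n}_{S_{n-1}}S^\lambda\cong\bigoplus_{\mu\lessdot\lambda}S^\mu$. Iterating it identifies $\operatorname{Res}^{S_n}_{S_m}S^\lambda$ with $\bigoplus_{\nu\vdash m}f^{\lambda/\nu}S^\nu$. The multiplicity $f^{\lambda/\nu}$ arises because a chain of removable boxes from $\lambda$ down to $\nu$ is exactly a standard filling of $\lambda/\nu$. The $S_m$-invariants form the isotypic component of the trivial representation $S^{(m)}$, which gives dimension $f^{\lambda/(m)}$. For the nonvanishing criterion, a standard skew tableau of shape $\lambda/(m)$ exists if and only if $(m)\subseteq\lambda$, that is, if and only if $\lambda_1\ge m$. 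Equivalently, by Frobenius reciprocity the dimension is the multiplicity of $S^\lambda$ in $\operatorname{Ind}_{S_m\times S_{n-m}}^{S_n}(\mathbf 1\boxtimes\,\cdot\,)$, which gives a consistency check against (iii).

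Item (iii) is the Pieri rule, which we read off from the Littlewood--Richardson rule. Induction from a Young subgroup corresponds to multiplication of Schur functions under the Frobenius characteristic map, so $\operatorname{Ind}(S^\alpha\boxtimes\mathbf 1)$ has characteristic $s_\alpha h_m$. Pieri's formula gives $s_\alpha h_m=\sum s_\lambda$, where the sum runs over $\lambda$ with $\lambda/\alpha$ a horizontal $m$-strip. All multiplicities equal one, so the isomorphism holds as stated.

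For item (iv), the class sum $T_n$ is central in $\mathbb C[S_n]$, so by Schur's lemma it acts on the irreducible $S^\lambda$ as a scalar. To compute that scalar, write $T_n=\sum_{k=2}^n J_k$ with Jucys--Murphy elements $J_k=\sum_{a<k}(a\ k)$. By Okounkov--Vershik, each $J_k$ acts diagonally in the Young seminormal basis, with eigenvalue on a standard tableau $t$ equal to the content of the box containing $k$. Summing over $k$ gives $\Ccont{\lambda}$, independent of $t$. As an alternative that avoids Jucys--Murphy theory, we can use the character formula: the scalar equals $\binom n2\chi^\lambda((1\,2))/f^\lambda$, which Frobenius's formula evaluates to $\sum_i\binom{\lambda_i}{2}-\sum_j\binom{\lambda'_j}{2}=\Ccont{\lambda}$. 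The only step needing any care is (iv), specifically checking the sign convention (content $j-i$, so row lengths contribute positively). We verify it on $\lambda=(n)$, where $T_n$ acts by $\binom n2$ and $\Ccont{(n)}=\sum_{j=1}^{n}(j-1)=\binom n2$. As a second check, on $\lambda=(1^n)$ the sign representation gives $-\binom n2$, matching $\Ccont{(1^n)}$.
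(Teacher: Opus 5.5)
Your proposal is correct and matches the paper's treatment. The paper gives no proof of this proposition: it recalls the four items as standard facts (hook length formula, branching, Pieri, and the central character of the transposition class sum) and later cites them as such. Your reductions are the standard textbook justifications and are accurate: iterated branching identifies the $S_m$-invariants with multiplicity $f^{\lambda/(m)}$, the Frobenius characteristic $s_\alpha h_m$ gives Pieri, and the Jucys--Murphy eigenvalues (or Frobenius's character formula at a transposition) give the content scalar, including the sign checks on $(n)$ and $(1^n)$.
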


\subsection{Sponge and Davies--Meyer}
\label{subsec:sponge-davies-meyer}

The sponge and Davies--Meyer are two constructions of hash functions from permutations and ciphers, respectively. In both cases, security is analyzed against quantum algorithms
with forward and inverse access to the underlying primitive.

\paragraph{Sponge.}
The sponge construction builds a variable-input-length hash function
from a public permutation~\cite{BDPV08,BDPV11}. It is parameterized by
a \emph{rate} $r\ge1$, a \emph{capacity} $c\ge1$, and a permutation
\[
    \varphi:\{0,1\}^{r+c}\longrightarrow\{0,1\}^{r+c}.
\]
Thus the underlying permutation acts on $N=2^{r+c}$ elements. The
internal state consists of $r$ rate bits and $c$ capacity bits. Each
message block is XORed into the rate bits, after which the permutation
is applied to the entire state.

More precisely, let $m=m_1\|\cdots\|m_\ell$ be a nonempty sequence
of blocks $m_i\in\{0,1\}^r$. Starting from $s_0=0^{r+c}$, define
\[
    s_i=\varphi\bigl(s_{i-1}\oplus(m_i\|0^c)\bigr),
    \qquad 1\le i\le\ell.
\]
Let $\operatorname{pref}_r(s)$ denote the first $r$ bits of $s$.
The sponge with one output block is
\[
    \mathsf{Sponge}^{\varphi}(m)
    =\operatorname{pref}_r(s_\ell).
\]
Arbitrary bit strings are handled by first applying a fixed,
unambiguous padding rule to obtain a nonempty sequence of rate-sized
blocks. Longer outputs are obtained by repeatedly applying $\varphi$
to the final state and reading another block of rate bits. We restrict
attention here to a single output block. The standardized hash
function SHA-3 uses the sponge construction with a Keccak
permutation and specified padding and output lengths~\cite{FIPS202}.

In the random permutation model, $\varphi$ is sampled uniformly and
the adversary may query both $\varphi$ and $\varphi^{-1}$. For a fixed
target $w\in\{0,1\}^r$, the \emph{preimage problem} is to find a message
$m$ such that $\mathsf{Sponge}^{\varphi}(m)=w$. The \emph{collision
problem} is to find distinct messages $m,m'$ such that
\[
    \mathsf{Sponge}^{\varphi}(m)
    =\mathsf{Sponge}^{\varphi}(m').
\]
If padding is used, these messages are required to be valid padded
encodings. Quantum security bounds for the sponge have been studied
in~\cite{CGBHS18,CP24,ACMT25,CHLYY25,Carolan-Comp}.

\paragraph{Davies--Meyer.}
The Davies--Meyer construction builds a compression function from a
block cipher~\cite{Winternitz84}. Let
$\Phi=(\varphi_k)_{k\in\mathcal K}$ be a family of permutations on
$\{0,1\}^n$, so that $N=2^n$. The compression function is
\[
    \mathsf{DM}^{\Phi}:
    \mathcal K\times\{0,1\}^n\longrightarrow\{0,1\}^n,
    \qquad
    \mathsf{DM}^{\Phi}(k,x)=\varphi_k(x)\oplus x.
\]
Here $x$ is the chaining value and $k$ represents the message block.
The XOR with $x$ is called \emph{feed-forward}. Iterating such a
compression function gives a variable-input-length hash function.
Davies--Meyer-type constructions also allow other feed-forward
operations; SHA-1 and SHA-2 use wordwise modular addition~\cite{FIPS180-4}.
We use XOR throughout.

In the ideal cipher model, the permutations $\varphi_k$ are sampled
independently and uniformly, and the adversary has quantum access to
encryption and decryption, including superpositions over keys. For a
fixed target $w\in\{0,1\}^n$, the preimage problem is to find $(k,x)$
such that $\mathsf{DM}^{\Phi}(k,x)=w$. The collision problem is to find
distinct pairs $(k,x)$ and $(k',x')$ such that
\[
    \varphi_k(x)\oplus x
    =\varphi_{k'}(x')\oplus x'.
\]
These problems have been studied
in~\cite{HY18,CHLYY25,Carolan-Comp}. 

Fixing a single key gives the permutation-based function
\[
    \mathsf{DM}^{\varphi}(x)=\varphi(x)\oplus x.
\]
In particular, finding a preimage of $0^n$ in this specialization is
exactly the problem of finding a fixed point of $\varphi$.

Notable problems have remained open in analyzing both hash constructions, for instance tight bounds for collision resistance are not known for either in prior work.

\section{Improved soundness}
\label{sec:intertwiner-soundness}

We first treat a single random permutation.  The ideal-cipher extension is given at the end of the section and follows similarly. 

\subsection{The construction and the main theorem}

We begin with a set of states that will ease the analysis of the naive purification. For $f\in\I_r$, let
\(
    \T_f=\{\varphi\in S_N:\varphi|_{\dom(f)}=f\}
\) denote the set of (total) permutations which totalize $f$,
and define the normalized completion state
\[
    \ket{T_f}
    =\frac{1}{\sqrt{(N-r)!}}
      \sum_{\varphi\in\T_f}\ket\varphi.
\]
We can use these states to define a hierarchy of spaces determined by the span of completion states, where the $r$-th \emph{level} is spanned by completions of functions of size $r$ after removing completions of size less than $r$. We will interchangably refer to the level as the \emph{degree}. Note that the left and right symmetric group actions both preserve the aforementioned spaces, i.e., they are sub-representations of the group algebra as representation of $S_N\times S_N$. Furthermore, a quantum query to a purified oracle in the $\leq r$-th level of the hierarchy will result in a purified state lying in the $\leq r+1$-th level. We refer to this hierarchy as the \emph{permutation levels}, defined as follows.
\begin{definition}[Permutation levels]
\label{def:bounded-subspaces}
Set
\[
    \Hh_{\le r}
    =\Span\{\ket{T_f}:|f|\le r\},
    \qquad
    \Hh_r=\Hh_{\le r}\ominus\Hh_{\le r-1},
\]
where $\Hh_{\le-1}=0$, and denote the corresponding orthogonal
projectors by $\Pi_{\le r}$ and $\Pi_r$.  Also define
\begin{equation}
\label{eqn:synth-map}
    E_r:\Ll_r\longrightarrow\Hh_{\le r},
    \qquad
    E_r\ket f=\ket{T_f}.
\end{equation}
\end{definition}

The $E_r$ map represents a natural guess for a (de-)compression operator, however it turns out to have non-ideal properties. For instance, the $\ket{T_f}$ states are not orthogonal, and in fact, even fixing the size of $f$ they are not linearly independent. This means that $E_r$ is not norm-preserving, which prevents its use directly. We will instead refer to $E_r$ as the \emph{synthesis} operator, to distinguish it from (de-)compression. To define the actual intertwiner, it will be useful to write 
\begin{equation}
\label{eq:frame-objects}
    \Gamma_r=\Pi_rE_r,
    \qquad
    G_r=\Gamma_r\Gamma_r^\dagger,
    \qquad
    K_r=\ran(\Gamma_r^\dagger),
    \qquad
    \Pi_{K_r}\text{ the orthogonal projector onto }K_r.
\end{equation}
$\Gamma_r$ is a step towards the (de-)compression operator, as it removes overlap between levels: in particular, $\Gamma_r$ maps $\Ll_r$ to $\Hh_r$. Nonetheless, this operator is still not an isometry. The Gram matrix $G_r$ quantifies how far $\Gamma_r$ is from an isometry, and $K_r$ is the set of compressed database states on which de-compression is defined. $K_r$ is, up to approximation error of the compressed permutation oracle, the span of level $r$ databases which appear in the compressed oracle experiment. In this way, it plays a role similar to \emph{validity} in the standard compressed oracle theory, as defined in \cite{Zhandry19,ACMT25}. With these objects in place, we can define the compression operator $\wtV$, which we will alternately refer to as the \emph{intertwiner}. We also define a related contraction $V$ that will simplify some of the analysis.
\begin{equation}
\label{eq:two-intertwiners}
    V_r=\Gamma_r^\dagger G_r^{-1},
    \qquad
    \wtV_r=\Gamma_r^\dagger G_r^{-1/2},
    \qquad
    V=\bigoplus_{r=0}^{N-1}V_r,
    \qquad
    \wtV=\bigoplus_{r=0}^{N-1}\wtV_r.
\end{equation}

The calculation in \Cref{lem:frame-spectrum} below shows that $G_r$
is invertible on $\Hh_r$. Here $V$ is a contraction and $\wtV$ is an isometry.  The proof first establishes that (1) answering a query with the purified oracle then applying $V$, vs. (2) applying $V$ then answering a query with the compressed oracle, are close. This fact, combined with Uhlman's theorem and a standard hybrid argument, establishes our soundness bound. We call this a \textit{one-query comparison} for $V$, which we then transfers to $\wtV$. Using the contraction rather than the isometry will allow us to prove a tighter bound. Broadly, this proof technique is the same proof technique as in~\cite{Carolan-Comp}, though our construction of $V$ is more direct and inspired by~\cite{Rosmanis22}, leading to a tighter bound.

\begin{theorem}[Intertwiner and soundness]
\label{thm:intertwiner-soundness}
Let $N\ge3$.  The map $\wtV:\PermO\to\Ll$ is a degree-preserving isometry, in the sense that $\wtV$ maps $\Hh_r$ to a subspace of $\Ll_r$, satisfying
\begin{equation}
\label{eq:isometry-initial-state}
    \wtV\ket{S_N}=\ket\varnothing.
\end{equation}
For every $x\in[N]$ and every integer $t\ge0$ such that $3t+2\le N$,
\begin{align}
\label{eq:canonical-low-degree-bound}
    \left\|
      \left(
        \cpO_x(I_{\A}\otimes V)
        -(I_{\A}\otimes V)P_x
      \right)
      (I_{\A}\otimes\Pi_{\le t})
    \right\|
    &\le \frac{60}{\sqrt{N-t}},\\
\label{eq:isometric-low-degree-bound}
    \left\|
      \left(
        \cpO_x(I_{\A}\otimes\wtV)
        -(I_{\A}\otimes\wtV)P_x
      \right)
      (I_{\A}\otimes\Pi_{\le t})
    \right\|
    &\le
      \frac{60}{\sqrt{N-t}}
      +\frac{t+1}{N-2t-1}.
\end{align}
The same bounds hold for inverse queries and for the coherently controlled
forward and inverse query operators.

\end{theorem}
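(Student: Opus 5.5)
The proof has three parts: structural facts about $\wtV$, the one-query comparison for the contraction $V$, and the transfer from $V$ to $\wtV$.

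\textbf{Structural facts.} We take as given the spectral statement the overview attributes to \Cref{lem:frame-spectrum}: $G_r$ is invertible on $\Hh_r$, with $I\preceq G_r\preceq \exp(r/(N-2r+1))I$ for $r\le (N-1)/2$. Then $\wtV_r^\dagger\wtV_r=G_r^{-1/2}G_rG_r^{-1/2}=I_{\Hh_r}$. The levels $\Hh_r$ are mutually orthogonal, the images $\Ll_r$ are orthogonal, and the levels exhaust $\PermO$ because $\ket{T_f}$ for total $f$ is a basis vector. Hence $\wtV$ is a degree-preserving isometry. For \eqref{eq:isometry-initial-state}, note $\Hh_0=\Span\{\ket{S_N}\}$ and $\Gamma_0\ket\varnothing=\ket{S_N}$, so $G_0=1$ and $\wtV\ket{S_N}=\ket\varnothing$.

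\textbf{One-query comparison for $V$.} Fix $x$ and a level $r\le t$. We use representation theory: the $S_{N}\times S_N$ symmetry commutes with all the frame objects. Decompose $P_x\Pi_r$ into its components landing in $\Hh_{r-1}$, $\Hh_r$ and $\Hh_{r+1}$; these are the only possibilities, since queries move the level by at most one. We do the same for $\cpO_x$ restricted to $\Ll_r$, which lands in $\Ll_{r-1}\oplus\Ll_r\oplus\Ll_{r+1}$. The exact identity $E_rV_r=I_{\Hh_r}$ means that applying $V$, then $\cpO_x$, then the synthesis maps reproduces $P_x$ exactly. The error therefore splits into two parts.
\begin{enumerate}[label=(\alph*)]
\item The component of $\cpO_xV\psi$ outside $\bigoplus K_{r'}$. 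We bound it through the cancellation characterization of $K_r$: a vector lies in $K_r$ exactly when its coefficient sums over free outputs, and over free inputs, vanish at every size-$(r-1)$ sub-database. We also need a quantitative stability estimate, saying that the distance to $K_r$ is controlled by the size of these sums, with constant depending on the eigenvalue gap at $r\le N/3$.
\item The mismatch inside $K_{r'}$. Here we compare $G_{r'}^{-1}$ before and after imposing $x\mapsto y$, via the branching (Pieri) eigenvalue formulas in terms of contents. The ratios are $1+O(1/(N-t))$.
\end{enumerate}
The $1/\sqrt{N-t}$ scale arises from the uniform $\ket{+_{x,h}}$ amplitude $|Y_h|^{-1/2}$. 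Both directions are summed coherently in $y$ using orthogonality of the $M_{x,y}$, which handles arbitrary answer-register states.

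\textbf{Transfer to $\wtV$.} We write $\wtV-V=\bigoplus_r\Gamma_r^\dagger G_r^{-1}(G_r^{1/2}-I)$. Its norm on level $r$ is at most $\|G_r^{1/2}-I\|\,\|V_r\|\le \exp(r/(2(N-2r+1)))-1$. Apply this on levels $\le t+1$, which covers the output of $P_x$, and also use that $\cpO_x$ is unitary. The triangle inequality gives \eqref{eq:isometric-low-degree-bound}, with $(t+1)/(N-2t-1)$ absorbing the exponential since $3t+2\le N$.

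\textbf{Inverse, controlled, and cipher queries.} Inverse queries follow by conjugating with $F$ and the inversion map $\varphi\mapsto\varphi^{-1}$ on $\PermO$. One checks that $\wtV$ intertwines these two maps, since $\ket{T_{f^{-1}}}$ is the inverted $\ket{T_f}$. Controlled versions follow because the $x$-blocks are orthogonal, so the norm is the maximum over $x$.

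The main obstacle is step (a), the quantitative stability estimate: proving that small cancellation sums force proximity to $K_r$, with an $O(1)$ constant uniform in $r\le N/3$. I would first try to express the sum-map $B$ (the analogue of $\Ddown$/$\Idown$), which sends $\Ll_r$ to $\Ll_{r-1}\otimes\A$, as an equivariant operator. I would then compute the smallest nonzero eigenvalue of $B^\dagger B$ on $K_r^\perp$ isotypic component by component, using \Cref{prop:representation-facts}(iv). The aim is to show this eigenvalue is $\Omega(N-2r)$.
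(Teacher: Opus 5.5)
Your architecture matches the paper's. The isometry and $\wtV\ket{S_N}=\ket\varnothing$ come from the spectrum of $G_r$, and exact synthesis comes from $E_rV_r=I_{\Hh_r}$. The one-query error splits into leakage out of $\bigoplus_rK_r$, controlled by the two zero-sum incidence maps and a content-computed gap of order $N-2r$ as in \Cref{lem:joint-incidence-gap}, and an in-range mismatch controlled by frame eigenvalue ratios. Then come the transfer to $\wtV$ and conjugation by $F$ and $\varphi\mapsto\varphi^{-1}$.

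The transfer step, however, fails as written. Put $s_r=r/(N-2r+1)$. Your per-level bound is $\|\wtV_r-V_r\|\le\|V_r\|\,\|G_r^{1/2}-I\|\le e^{s_r/2}-1$. Summed over the two terms (levels $\le t$ from $\cpO_x(\wtV-V)$, levels $\le t+1$ from $(\wtV-V)P_x$), it does not fit under $(t+1)/(N-2t-1)=s_{t+1}$. Using level $t+1$ twice gives $2(e^{s_{t+1}/2}-1)>s_{t+1}$ for every $t$. Even with $s_t$ in the first term it fails: at $N=32$, $t=10$ one gets $(e^{5/13}-1)+(e^{1/2}-1)\approx1.118>1$. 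The repair is to compute singular values exactly. Since $\wtV_r-V_r=\wtV_r(I-G_r^{-1/2})$ with $\wtV_r$ an isometry, $\|\wtV_r-V_r\|=1-\|G_r\|^{-1/2}\le\frac12\log\|G_r\|\le s_r/2$, and $s_t/2+s_{t+1}/2\le s_{t+1}$.

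Beyond that, you assert \eqref{eq:canonical-low-degree-bound} with the constant $60$ but do not derive it, and two places in your sketch need more than you state.

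First, per-$y$ operator bounds cannot handle an arbitrary answer-register state. Let $D_{y,r}=(M_{x,y}V-VQ_{x,y})\Pi_r$. You need both square functions, $\|\sum_yD_{y,r}^\dagger D_{y,r}\|$ and $\|\sum_yD_{y,r}D_{y,r}^\dagger\|$, to be of order $1/(N-r)$. You also need $M_{x,\bot}V=0$, which follows from the zero sums; it gives $\sum_yD_{y,r}=0$ and rewrites the error as $\sum_y(S_y-I)\otimes D_{y,r}$. Orthogonality of the $M_{x,y}$ alone only gives $O(1)$ square functions. For the leakage part this requires explicit incidence computations (the paper's \Cref{lem:incidence-calculation}), not just the heuristic about the amplitude $|Y_h|^{-1/2}$.

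Second, not every in-range channel is a ratio $1+O(1/(N-t))$. On $S_{N-1}$-branches $\alpha$ with $\alpha_1=N-r-1$ the fixed-edge frame vanishes. There the smallness comes instead from the branching weight $f^\lambda/(Nf^\alpha)\le1/(N-2r)$. Moreover, with $m=N-r$, the middle transition contains the term $\frac1m\Gamma_rR^{00}_{x,y;r}\Gamma_r^\dagger G_r^{-1}$ coming from $x$-absent databases. This is not a with/without-edge comparison and must be bounded separately, for example by positivity using $\sum_yR^{00}_{x,y;r}\preceq mI$.
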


\begin{corollary}
Let $\mathcal A$ be an arbitrary quantum algorithm making $q$ forward and/or inverse queries, where $3q-1\le N$, and outputs a decision bit. The distinguishing advantage of the algorithm between the compressed permutation oracle from an ideal permutation is bounded by
\begin{equation}
\label{eq:main-soundness-bound}
    |\Pr_{\varphi \sim S_N}[1 \leftarrow \mathcal A^{\mathcal O_{\varphi}, \mathcal O_{\varphi^{-1}}}] - \Pr[1 \leftarrow \mathcal A^{\cpO}]|
    \le
    \frac{120q}{\sqrt{N-q+1}}
    +\frac{q(q+1)}{N-2q+1}.
\end{equation}
In particular, constant distinguishing advantage requires
$q=\Omega(\sqrt N)$.
\end{corollary}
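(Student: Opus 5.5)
The plan is a standard hybrid argument. We run the purified experiment and the compressed experiment side by side, map the purified one into the database register with the isometry $\wtV$, and sum the one-query errors from \Cref{thm:intertwiner-soundness}. As elsewhere in this section, we work with swap oracles, so each forward or inverse query of $\mathcal A$ is the coherently controlled operator $P$ or $P^{-1}$ in the purified world and $\cpO$ or $\cpO^{-1}$ in the compressed world. Write $\mathcal A$ as unitaries $U_0,\dots,U_q$ acting on its own registers (including the query registers), interleaved with $q$ queries, followed by a two-outcome measurement $\{\Lambda,I-\Lambda\}$ on its registers.

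First we define the two runs. Let $\ket{\psi_i}$ be the joint state of $\mathcal A$ and the permutation register $\PermO$ after $U_0$ and the first $i$ queries (with the $U_j$ between them) in the purified experiment, started from $\ket{S_N}$. Let $\ket{\chi_i}$ be the analogous state in the compressed experiment, started from $\ket\varnothing$. By the remark after \Cref{def:perms-purified-operators}, the purified experiment has exactly the output distribution of the classical random permutation model. Since $\wtV$ is an isometry acting only on the oracle register, we also have
\[
\bra{\psi_q}U_q^\dagger(\Lambda\otimes I)U_q\ket{\psi_q}=\bra{\psi_q}(I\otimes\wtV)^\dagger U_q^\dagger(\Lambda\otimes I)U_q(I\otimes\wtV)\ket{\psi_q}.
\]
Therefore the distinguishing advantage is at most
\[
2\bigl\|\ket{\chi_q}-(I\otimes\wtV)\ket{\psi_q}\bigr\|,
\]
using $|\bra a\Lambda\ket a-\bra b\Lambda\ket b|\le 2\|a-b\|$ for unit vectors and $0\preceq\Lambda\preceq I$.

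Second, we track the degree of the purified state. We have $\ket{S_N}=\ket{T_\varnothing}\in\Hh_{\le 0}$. The $U_j$ do not touch $\PermO$. Each purified forward or inverse query maps $\A\otimes\Hh_{\le r}$ into $\A\otimes\Hh_{\le r+1}$: since $Q_{x,y}\ket{T_f}$ is either $0$, $\ket{T_f}$, or a multiple of $\ket{T_{f[x\mapsto y]}}$, this holds for forward queries, and the inverse case is symmetric. Hence
\[
\ket{\psi_{i-1}}=(I\otimes\Pi_{\le i-1})\ket{\psi_{i-1}}.
\]

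Third, we telescope the hybrid difference. Because $U_j\otimes I$ commutes with $I\otimes\wtV$, and $\wtV\ket{S_N}=\ket\varnothing$ by \eqref{eq:isometry-initial-state}, the two runs agree exactly before the first query. For the $i$-th query, the triangle inequality gives
\[
\bigl\|\ket{\chi_i}-(I\otimes\wtV)\ket{\psi_i}\bigr\|\le\bigl\|\ket{\chi_{i-1}}-(I\otimes\wtV)\ket{\psi_{i-1}}\bigr\|+\bigl\|\bigl(\cpO(I\otimes\wtV)-(I\otimes\wtV)P\bigr)(I\otimes\Pi_{\le i-1})\bigr\|.
\]
The second term uses that $\cpO$ is unitary, so it does not increase the existing error. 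The same inequality holds with inverses for an inverse query.

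We apply \eqref{eq:isometric-low-degree-bound} in its coherently controlled and inverse form with $t=i-1\le q-1$. The hypothesis $3t+2\le N$ follows from $3q-1\le N$. Using monotonicity in $t$ for the first term, the $i$-th increment is at most
\[
\frac{60}{\sqrt{N-q+1}}+\frac{i}{N-2i+1}\le\frac{60}{\sqrt{N-q+1}}+\frac{i}{N-2q+1}.
\]
Summing over $i=1,\dots,q$ gives a final distance of at most
\[
\frac{60q}{\sqrt{N-q+1}}+\frac{q(q+1)}{2(N-2q+1)}.
\]
Doubling this gives \eqref{eq:main-soundness-bound}. The asymptotic statement follows because for $q=o(\sqrt N)$ both terms vanish.

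All the real difficulty lives in \Cref{thm:intertwiner-soundness}. The only points here needing care are the degree-raising property of purified queries, which justifies applying the $\Pi_{\le t}$-restricted bound, and checking that the theorem's hypothesis $3t+2\le N$ holds for every $t\le q-1$.
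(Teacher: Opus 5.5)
Your proposal is correct and is essentially the paper's own argument: a hybrid over the $q$ queries comparing the compressed state with $\wtV$ applied to the purified state, using bandedness to place the purified memory in $\Hh_{\le i-1}$ before the $i$-th query, summing the increments $60/\sqrt{N-i+1}+i/(N-2i+1)$ from \eqref{eq:isometric-low-degree-bound}, and doubling the final Euclidean distance. Your bookkeeping of the constants, including the check that $3q-1\le N$ yields the hypothesis $3t+2\le N$ for every $t\le q-1$, matches the paper's.
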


\begin{remark}
\label{rem:isometry-versus-contraction}
The
map $V$ satisfies the helpful identity $E_rV_r=I_{\Hh_r}$, but is not an isometry.  The map $\wtV_r$ is an isometry but instead satisfies $E_r\wtV_r=G_r^{1/2}$.  Thus the phrase ``compression isometry'' refers to $\wtV$, not to $V$, though $V$ will simplify the analysis.
\end{remark}

\subsection{One query: exact reduction}

The proof is top down.  In this subsection, we reduce the theorem to three one-query bounds.  The calculations
which establish those bounds are deferred to the next
subsections.
For a family of operators $T=(T_y)_{y\in[N]}$, write
\begin{equation}
\label{eq:row-column-norms}
    \operatorname{Col}(T)
    =\left\|\sum_yT_y^\dagger T_y\right\|^{1/2},
    \qquad
    \operatorname{Row}(T)
    =\left\|\sum_yT_yT_y^\dagger\right\|^{1/2}.
\end{equation}
Fix $x\in[N]$ and abbreviate
\[
    Q_y=Q_{x,y},
    \qquad
    M_y=M_{x,y},
    \qquad
    M_\bot=M_{x,\bot}.
\]
Let $L_j$ be the orthogonal projector from $\Ll$ onto $\Ll_j$.
On the span of extensions of a $h$ at position $x$, where $x\notin\dom(h)$, put
$Y_h=[N]\setminus\im(h)$ and $s=|Y_h|$.  Directly from the definition of
$\Comp_x$,
\begin{equation}
\label{eq:subspace-projector}
    M_y|_{\Span(\ket h,\{\ket{h_z}:z\in Y_h\})}
    =\proj{u_{h,y}},
    \qquad
    \ket{u_{h,y}}
    =\ket{h_y}-\frac1s\sum_{z\in Y_h}\ket{h_z}
      +\frac1{\sqrt s}\ket h,
\end{equation}
when $y\in Y_h$, and the restriction is zero when $y\notin Y_h$. The first thing we will show is that the oracle operators change the subspace level by at most $\pm 1$, in both the compressed and purified picture.

\begin{lemma}[Bandedness]
\label{lem:bandedness}
For every $r$ and $y$,
\begin{equation}
\label{eq:true-bandedness}
    Q_y\Hh_r
    \subseteq
    \Hh_{r-1}\oplus\Hh_r\oplus\Hh_{r+1}.
\end{equation}
Likewise,
\[
    M_y\Ll_r\subseteq
    \Ll_{r-1}\oplus\Ll_r\oplus\Ll_{r+1}.
\]
Both statements hold for inverse queries.  Hence a purified computation
beginning in $\Hh_0$ is supported on $\Hh_{\le q}$ after $q$ queries.
\end{lemma}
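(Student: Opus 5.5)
The plan is to prove each containment by a short structural argument. For the purified side, I first describe how $Q_y=Q_{x,y}$ acts on completion states and then use self-adjointness to get the lower bound on the level. For the compressed side, I use the block decomposition of $\Ll$ that defines $\Comp_x$.

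\emph{Purified side.} Let $f\in\I_r$. Directly from the definition of $\T_f$, there are three cases. If $f(x)=y$, then $Q_y\ket{T_f}=\ket{T_f}$. If $x\in\dom(f)$ with $f(x)\neq y$, or if $y\in\im(f)$ with $x\notin\dom(f)$, then $Q_y\ket{T_f}=0$. Otherwise $x\notin\dom(f)$ and $y\notin\im(f)$, and $Q_y$ keeps exactly the completions that also extend $f[x\mapsto y]$, so
\[
    Q_y\ket{T_f}=(N-r)^{-1/2}\ket{T_{f[x\mapsto y]}}.
\]
In every case $Q_y\Hh_{\le r}\subseteq\Hh_{\le r+1}$ for all $r$.

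Now take $v\in\Hh_r$ and $w\in\Hh_{\le r-2}$. Since $Q_y$ is an orthogonal projector, hence self-adjoint,
\[
    \braket{w}{Q_yv}=\braket{Q_yw}{v}=0,
\]
because $Q_yw\in\Hh_{\le r-1}\perp\Hh_r$. Hence
\[
    Q_yv\in\Hh_{\le r+1}\ominus\Hh_{\le r-2}=\Hh_{r-1}\oplus\Hh_r\oplus\Hh_{r+1},
\]
which is \eqref{eq:true-bandedness}. For inverse queries the same projectors $Q_{x,y}$ appear, with the roles of the fixed coordinate exchanged, so the argument is identical. Alternatively, one can note that $\varphi\mapsto\varphi^{-1}$ maps $\ket{T_f}$ to $\ket{T_{f^{-1}}}$ and therefore preserves every $\Hh_{\le r}$.

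\emph{Compressed side.} I use the decomposition of $\Ll$ into the mutually orthogonal blocks
\[
    B_h=\Span\bigl(\ket h,\{\ket{h_z}:z\in Y_h\}\bigr),\qquad x\notin\dom(h),
\]
each of which lies in $\Ll_{|h|}\oplus\Ll_{|h|+1}$. By \eqref{eq:subspace-projector}, each $M_y$ maps $B_h$ into itself, since $\ket{u_{h,y}}\in B_h$; the same holds for $M_\bot=I-\sum_yM_y$. Take a basis vector $\ket f$ with $f\in\I_r$. If $x\notin\dom(f)$, then $\ket f\in B_f\subseteq\Ll_r\oplus\Ll_{r+1}$. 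If $x\in\dom(f)$, then $\ket f\in B_h$ with $h=f|_{\dom(f)\setminus\{x\}}$, so $\ket f\in\Ll_{r-1}\oplus\Ll_r$. Either way, $M_y\ket f$ stays in the same block, so $M_y\Ll_r\subseteq\Ll_{r-1}\oplus\Ll_r\oplus\Ll_{r+1}$. For inverse queries, the operators are conjugated by $F$, which preserves each $\Ll_r$, so the claim transfers.

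\emph{The final claim.} Because $S_y$ acts only on $\A$, $P_x=\sum_y S_y\otimes Q_y$ maps $\A\otimes\Hh_{\le r}$ into $\A\otimes\Hh_{\le r+1}$. The algorithm's own unitaries act as the identity on $\PermO$, so they preserve these subspaces. The initial state is $\ket{S_N}=\ket{T_\varnothing}\in\Hh_0$. Induction on the number of queries, together with the analogous statement for $P^{-1}$ and for the coherently controlled versions, then gives support in $\Hh_{\le q}$ after $q$ queries.

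There is no real obstacle here. The only point needing care is the lower bound $r-1$ on the purified side, which cannot be read off from the action on completion states alone; the self-adjointness argument above handles it.
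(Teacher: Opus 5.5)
Your proposal is correct and follows the paper's proof. You use the same case analysis of $Q_y\ket{T_f}$ to get $Q_y\Hh_{\le r}\subseteq\Hh_{\le r+1}$, the same self-adjointness argument for the lower level, and the block structure behind \eqref{eq:subspace-projector} for $M_y$; you only spell out the compressed-side blocks, the transfer to inverse queries, and the final induction in more detail than the paper does.
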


\begin{proof}
If $f\in\I_r$ and $x\notin\dom(f)$, then
\[
    Q_y\ket{T_f}
    =
    \begin{cases}
       (N-r)^{-1/2}\ket{T_{f[x\mapsto y]}},
          &y\notin\im(f),\\
       0,&y\in\im(f).
    \end{cases}
\]
If $x\in\dom(f)$, the result is either $\ket{T_f}$ or zero.  Thus
$Q_y\Hh_{\le r}\subseteq\Hh_{\le r+1}$.  Since $Q_y$ is self-adjoint,
it maps $\Hh_r$ orthogonally to $\Hh_{\le r-2}$, proving
\eqref{eq:true-bandedness}.  The statement for $M_y$ follows from
\eqref{eq:subspace-projector}.  Inverse queries are obtained by interchanging
domains and images.
\end{proof}

We next give an alternate formula for the synthesis map $E_r$ on level $r$ which will prove easier to analyze.

\begin{lemma}[Exact synthesis]
\label{lem:exact-synthesis}
For every $a\in K_r$,
\begin{equation}
\label{eq:exact-synthesis}
    \sum_{j=r-1}^{r+1}E_jL_jM_y a=Q_yE_ra,
\end{equation}
where nonexistent levels are omitted.
\end{lemma}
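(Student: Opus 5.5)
The plan is a direct coefficient computation. Both sides of \eqref{eq:exact-synthesis} are linear in $a$, and $M_y$ acts blockwise on the mutually orthogonal spaces $\Span(\ket h,\{\ket{h_z}:z\in Y_h\})$ with $x\notin\dom(h)$. So I will decompose $a=\sum_{f\in\I_r}a_f\ket f$ according to these blocks. Each $f\in\I_r$ lies in exactly one block, in one of two ways. Either $x\in\dom(f)$, in which case $f=h_z$ with $|h|=r-1$. Or $x\notin\dom(f)$, in which case $f=h$ is the ``base'' vector of its block with $|h|=r$. I will compute $\sum_j E_jL_j M_y$ on each block component using \eqref{eq:subspace-projector}, and compare with $Q_yE_r$ using the formula for $Q_y\ket{T_f}$ from the proof of \Cref{lem:bandedness}.

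The one structural input about $K_r$ is a cancellation identity, which I will prove first. Since $K_r=\ran(\Gamma_r^\dagger)=\ran(E_r^\dagger\Pi_r)$, write $a=E_r^\dagger w$ with $w\in\Hh_r$, so that $a_f=\braket{T_f}{w}$. The key relation is
\[
    \sum_{z\in Y_h}\ket{T_{h_z}}=\sqrt{N-r+1}\,\ket{T_h}\qquad\text{for }|h|=r-1,\ x\notin\dom(h),
\]
which holds because the completions of $h$ are partitioned by the value at $x$. Combining this with $\ket{T_h}\in\Hh_{\le r-1}\perp w$ gives
\[
    \sum_{z\in Y_h}a_{h_z}=0.
\]

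Now I treat the two kinds of block components.

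\textbf{Case 1: the component $c=\sum_z a_{h_z}\ket{h_z}$ with $|h|=r-1$.} Set $s=N-r+1$. Using \eqref{eq:subspace-projector} and the cancellation identity, $\braket{u_{h,y}}{c}=a_{h_y}$ if $y\in Y_h$, and the block contributes $0$ otherwise. Hence
\[
    M_yc=a_{h_y}\Bigl(\ket{h_y}-\tfrac1s\sum_z\ket{h_z}+\tfrac1{\sqrt s}\ket h\Bigr).
\]
This has level-$r$ and level-$(r-1)$ parts. Applying $E_r$ and $E_{r-1}$ and using the key relation again, the last two terms cancel:
\[
    -\tfrac1s\sqrt s\,\ket{T_h}+\tfrac1{\sqrt s}\ket{T_h}=0.
\]
What remains is $a_{h_y}\ket{T_{h_y}}$. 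On the other side, $Q_y\ket{T_{h_z}}=\delta_{zy}\ket{T_{h_z}}$, since $x\in\dom(h_z)$. So $Q_yE_rc=a_{h_y}\ket{T_{h_y}}$ as well, and the two sides agree. This also covers $y\in\im(h)$, where both sides vanish.

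\textbf{Case 2: the component $a_h\ket h$ with $|h|=r$ and $x\notin\dom(h)$.} Set $s=N-r$. If $y\in\im(h)$, both sides are zero. Otherwise $\braket{u_{h,y}}{h}=s^{-1/2}$, so $M_y$ produces $s^{-1/2}a_h\ket{u_{h,y}}$, with level-$(r+1)$ and level-$r$ parts. Applying the synthesis maps and the analogous key relation $\sum_z\ket{T_{h_z}}=\sqrt{s}\ket{T_h}$, the $\ket{T_h}$ terms again cancel. What remains is $(N-r)^{-1/2}a_h\ket{T_{h_y}}$, which is exactly $Q_yE_r(a_h\ket h)$ by the proof of \Cref{lem:bandedness}.

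Summing over blocks proves \eqref{eq:exact-synthesis}. The only genuinely nontrivial step is the cancellation identity in Case 1, which is where the hypothesis $a\in K_r$ is used. Without it, $\braket{u_{h,y}}{c}$ would pick up an extra term $-\tfrac1s\sum_z a_{h_z}$, and the $\ket{T_h}$ contributions would no longer cancel. The remaining work is bookkeeping of normalizations, for which I expect to double-check the factor $\sqrt{(N-r)!}$ versus $\sqrt{(N-r+1)!}$ in the key relation. The inverse-query version follows by applying the flip $F$ throughout.
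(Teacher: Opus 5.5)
Your proposal is correct and follows essentially the same route as the paper. Both arguments compute blockwise on the subspaces $\Span(\ket h,\{\ket{h_z}\})$ using \eqref{eq:subspace-projector}, use the zero sum $\sum_{z\in Y_h}a_{h_z}=0$ for $a\in K_r$, and use $\sum_{z\in Y_h}\ket{T_{h_z}}=\sqrt{|Y_h|}\,\ket{T_h}$ to cancel the $\ket{T_h}$ terms. The only difference is that you derive the zero sum directly from $a=E_r^\dagger w$ with $w\in\Hh_r$, whereas the paper cites \Cref{lem:canonical-frame-properties}, whose proof is the same argument. Your normalization $\sqrt{N-r+1}$ in the key relation is correct.
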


\begin{proof}
Consider first the span of extensions of an $f\in\I_r$ on input $x$, with $x\notin\dom(f)$.  If $y\in Y_f$, applying
\eqref{eq:subspace-projector} to the coefficient $a_f$ and then synthesizing
gives
\[
    \frac{a_f}{\sqrt{N-r}}\ket{T_{f[x\mapsto y]}}
    -\frac{a_f}{N-r}\ket{T_f}
    +\frac{a_f}{N-r}\ket{T_f}
    =Q_y(a_f\ket{T_f}).
\]
Both sides vanish if $y\notin Y_f$.

Now consider a subspace spanned by extensions of $h\in\I_{r-1}$.  The input coefficients on
its elements are $a_{h_z}$.  By \Cref{lem:canonical-frame-properties} below,
$\sum_{z\in Y_h}a_{h_z}=0$.  Hence the coefficient against
$u_{h,y}$ is $a_{h_y}$, and the synthesized output on this subspace is
\[
    \frac{a_{h_y}}{\sqrt{|Y_h|}}\ket{T_h}
    +a_{h_y}\left(
       \ket{T_{h_y}}
       -\frac1{|Y_h|}\sum_{z\in Y_h}\ket{T_{h_z}}
     \right)
    =a_{h_y}\ket{T_{h_y}}.
\]
This is precisely the action of $Q_y$ on the synthesized input from the
subspace.  Summing over the database subspaces proves
\eqref{eq:exact-synthesis}.
\end{proof}

We will now build up to stating the key inequalities for the one-query bound. The proofs of these bounds are deferred to \Cref{subsec:joint-incidence,subsec:fixed-pair}. Before stating the inequalities, we define some additional terminology. For fixed $r$, define the linear maps
\begin{equation}
\label{eq:query-components}
    b^+_{y,r}=L_{r+1}M_yV_r,
    \qquad
    b^0_{y,r}=L_rM_yV_r.
\end{equation}
For $r\ge1$, also put
\begin{equation}
\label{eq:down-query-component}
    b^-_{y,r}=L_{r-1}M_yV_r.
\end{equation}
When $\psi\in\Hh_r$, we abbreviate
\[
    a=V_r\psi,
    \qquad
    b^+_y=b^+_{y,r}\psi,
    \qquad
    b^0_y=b^0_{y,r}\psi.
\]
For $r\ge1$, we also write $b^-_y=b^-_{y,r}\psi$.  For $r=0$, every
object carrying the superscript ${}^-$ or the subscript $r-1$ is omitted
and understood to be zero.  Put
\begin{equation}
\label{eq:orthogonal-leakage}
    z^+_{y,r}=(I-\Pi_{K_{r+1}})b^+_{y,r},
    \qquad
    z^0_{y,r}=(I-\Pi_{K_r})b^0_{y,r}.
\end{equation}
For vectors we write $z^+_y=z^+_{y,r}\psi$ and
$z^0_y=z^0_{y,r}\psi$.
The downward component belongs to $K_{r-1}$ by
\Cref{lem:downward-harmonic} below.  Define the two components which remain
inside the canonical range by
\begin{align}
\label{eq:middle-range-error}
    h^0_{y,r}\psi
    &=\Pi_{K_r}b^0_y-V_r\Pi_rQ_y\psi,\\
\label{eq:downward-range-error}
    h^-_{y,r}\psi
    &=b^-_y-V_{r-1}\Pi_{r-1}Q_y\psi.
\end{align}

The core technical inequalities we will establish can now be stated.

\begin{proposition}
\label{prop:three-query-estimates}
If $3r+2\le N$ and $m=N-r$, then
\begin{align}
\label{eq:leakage-estimate-summary}
    \operatorname{Col}(z^+_{\cdot,r}\oplus z^0_{\cdot,r})^2,
    \ \operatorname{Row}(z^+_{\cdot,r}\oplus z^0_{\cdot,r})^2
    &\le\frac{64}{m},\\
\label{eq:middle-estimate-summary}
    \operatorname{Col}(h^0_{\cdot,r})^2,
    \ \operatorname{Row}(h^0_{\cdot,r})^2
    &\le\frac{32}{m},\\
\label{eq:down-estimate-summary}
    \operatorname{Col}(h^-_{\cdot,r})^2,
    \ \operatorname{Row}(h^-_{\cdot,r})^2
    &\le\frac{24}{m}.
\end{align}
\end{proposition}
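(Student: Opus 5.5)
The plan is to do all three estimates in a single representation-theoretic frame. Every operator involved ($E_r$, $\Pi_r$, $G_r$, $V_r$, $M_y$, $Q_y$) is equivariant under the stabilizer of the queried input $x$: relabel inputs in $S_{N-1}$ fixing $x$, and relabel outputs in all of $S_N$. Also, $\Pi_r$ is the isotypic projector onto the irreps $S^\lambda\otimes S^\lambda$ with $\lambda_1=N-r$. This follows by combining \Cref{prop:representation-facts}(ii),(iii): $\Hh_{\le r}$ is spanned by induced modules $\operatorname{Ind}(S^\alpha\boxtimes\one)$ with $\alpha\vdash r$, and the Pieri rule shows these contain exactly the $\lambda$ with $\lambda_1\ge N-r$.

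\textbf{Step 1: the spectrum of $G_r$.} Writing $G_r=\Pi_rE_rE_r^\dagger\Pi_r$, we have that $E_rE_r^\dagger=\sum_{|f|=r}\proj{T_f}$ lies in the span of the two-sided symmetric-group action. On the level-$r$ isotypic part it acts as a scalar, computable from the Pieri coefficients and hook lengths. These scalars are the quantities behind the earlier bound $I\preceq G_r\preceq \exp(r/(N-2r+1))I$, which I take as input (this is the content of \Cref{lem:frame-spectrum}).

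\textbf{Step 2: the leakage bound \eqref{eq:leakage-estimate-summary}.} I would first use the characterization of $K_r$ by vanishing fiber sums. For every $h\in\I_{r-1}$ and $x'\notin\dom(h)$, the sum $\sum_z a_{h[x'\mapsto z]}$ must vanish, and the same holds for the dual sums over inputs with a fixed output. I would then prove a quantitative version: for $3r+2\le N$,
\[
\norm{(I-\Pi_{K_r})a}^2\le C\sum_{\text{fibers}}\tfrac1{m}\Bigl|\sum a\Bigr|^2 .
\]
The strategy is to bound the smallest nonzero eigenvalue of $\Gamma_r^\dagger$'s complement via the fiber-sum (down) operators $\Ddown,\Idown$. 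Their Gram operator $\Ddown{}^\dagger\Ddown+\Idown{}^\dagger\Idown$ is again diagonal on irreps, and its eigenvalue on $K_r^\perp$ is at least of order $m$. Next I would apply this to $b^+_y$ and $b^0_y$ using the explicit form \eqref{eq:subspace-projector} of $M_y$. The fiber sums of $b^\pm_y$ are produced only by the $-\frac1s\sum_z$ correction and by fibers through $x$. Each such sum is an $O(1/\sqrt m)$ multiple of coefficients of $a=V_r\psi$, and $\norm a\le\norm\psi$ since $G_r\succeq I$. Summing over $y$ gives the Col bound. The Row bound follows the same way, using that the coordinates of $y$ in $\sum_yT_yT_y^\dagger$ overlap only through a single fiber.

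\textbf{Step 3: the range errors \eqref{eq:middle-estimate-summary} and \eqref{eq:down-estimate-summary}.} \Cref{lem:exact-synthesis} gives $E_{r-1}b^-_y+E_rb^0_y+E_{r+1}b^+_y=Q_yE_ra=Q_y\psi$. I would project this onto $\Hh_r$ and $\Hh_{r-1}$ and apply $V_j$, which inverts $E_j$ on $K_j$. The result expresses $h^0_y\psi$ as $-V_r\Pi_r E_{r+1}b^+_y$, plus leakage terms already controlled in Step 2, plus a term of the form $V_r\Pi_rE_{r-1}b^-_y$. The cross-level pieces $\Pi_rE_{r\pm1}$ restricted to $K_{r\pm1}$ do not vanish; I would bound them by the ratio of $G$-eigenvalues before and after imposing $x\mapsto y$. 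To do this, I would view everything as a representation of the stabilizer of the pair $(x,y)$, that is $S_{N-1}\times S_{N-1}$. The branching from $\lambda$ to $\lambda$ minus a box then yields eigenvalue ratios of the form $1+O(1/m)$ by hook-length/content formulas (\Cref{prop:representation-facts}(i),(iv)). The norm contributed is then $O(1/\sqrt m)$ after summing over $y$.

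The main obstacle is Step 2's quantitative coercivity estimate, namely bounding the spectral gap of the fiber-sum operators on $K_r^\perp\cap\Ll_r$ uniformly in the regime $3r+2\le N$. I would attempt it by computing that Laplacian as a Jucys–Murphy/transposition-sum expression and reading off its eigenvalues from contents.
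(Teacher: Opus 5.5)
Your Step 2 is essentially the paper's argument, so \eqref{eq:leakage-estimate-summary} is fine in outline. The paper also uses a content/transposition-sum gap for the incidence maps $\partial_j$ on $K_j^\perp$ at levels $j=r,r+1$, followed by an explicit computation of the fiber sums of $b^+_y,b^0_y$.

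The gap is in Step 3, which starts from a false claim: the cross-level pieces \emph{do} vanish on $K_{r\pm1}$. First, $\Pi_rE_{r-1}=0$ identically, since $\ran E_{r-1}\subseteq\Hh_{\le r-1}\perp\Hh_r$. Second, $E_{r+1}K_{r+1}\subseteq\Hh_{r+1}$, since $K_{r+1}=\ran(E_{r+1}^\dagger\Pi_{r+1})$ and $E_{r+1}E_{r+1}^\dagger$ is central, hence preserves every block $S^\lambda\otimes S^\lambda$. So your identity collapses exactly to $h^0_{y,r}\psi=-V_r\Pi_rE_{r+1}z^+_y$. Likewise, using $b^-_y\in K_{r-1}$, it gives $h^-_{y,r}\psi=-V_{r-1}\Pi_{r-1}(E_rz^0_y+E_{r+1}z^+_y)$. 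The whole range error is cross-level synthesis of the leakage, and nothing is left for an eigenvalue-ratio argument to act on.

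These terms are also not ``already controlled by Step 2''. On a level-$r$ block $\lambda=(m,\mu)$ one has $f^{\lambda/(m-1)}=(r+1)f^{\mu}$. This gives $\Pi_rE_{r+1}E_{r+1}^\dagger\Pi_r=mG_r$, hence $\|V_r\Pi_rE_{r+1}\|=\sqrt m$. Already at $r=0$, $\Pi_0E_1=N^{-1/2}\ket{S_N}\sum_{x,y}\bra{x\mapsto y}$ has norm $\sqrt N$, and the operators in the $h^-$ identity are at least as large. Composing with the $O(m^{-1/2})$ leakage bound therefore yields only $\operatorname{Col}(h^0_{\cdot,r})=O(1)$. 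The actual smallness of $h^0,h^-$ is a cancellation invisible to $\|z^\pm\|$: at $r=0$, for instance, $z^+_y\neq0$ but $\Pi_0E_1z^+_y=0$.

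What is missing is a direct expression for the in-range components that never passes through $z^\pm$.
\begin{itemize}
\item \textbf{Middle term.} The paper uses a quadratic-form identity: for $a,c\in K_r$, harmonicity of \emph{both} vectors gives $\ip{c}{L_rM_yL_ra}=\ip{c}{(m^{-1}R^{00}_{x,y;r}+R^{11}_{x,y;r})a}$. Hence $h^0_{y,r}=V_r\bigl(H^{00}_{y,r}G_r^{-1}+H^{11}_{y,r}G_r^{-1}-A_{y,r}\bigr)$.
\item \textbf{Where your fixed-pair idea fits.} $H^{11}_{y,r}$ is the size-$(r-1)$ completion frame on $S_{N-1}$. $S_{N-1}\times S_{N-1}$ branching compares its scalar $h_\alpha$ with $g_{m,\mu}$ on the rank-one seed of $A_{y,r}=\Pi_rQ_y\Pi_r$.
\item \textbf{Free term.} $H^{00}_{y,r}G_r^{-1}$ is not an eigenvalue ratio at all. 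It is controlled by positivity: $0\preceq H^{00}_{y,r}\preceq G_r/m$ and $\sum_yH^{00}_{y,r}\preceq G_r$.
\item \textbf{Downward term.} The coefficient formula gives $b^-_y=\Gamma_{r-1}^\dagger q_{y,r}G_r^{-1}\psi$ with $q_{y,r}=\Pi_{r-1}Q_y\Pi_r$. So $h^-_{y,r}=V_{r-1}\bigl(G_{r-1}q_{y,r}G_r^{-1}-q_{y,r}\bigr)$, which the adjacent-level ratios $g_\nu/g_\lambda\in[1-4/m,1]$ bound.
\end{itemize}
Without such direct formulas, or an equivalent blockwise computation of $V_r\Pi_rE_{r+1}L_{r+1}M_yV_r$, Step 3 does not give \eqref{eq:middle-estimate-summary} or \eqref{eq:down-estimate-summary}.
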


\Cref{prop:three-query-estimates} contains the cross-level terms coming from the synthesis map.  Indeed, if
$d^j_{y,r}=L_j(M_yV-VQ_y)\Pi_r$, then the exact identities are
\begin{align}
\label{eq:correct-triangular-identities}
    d^{r+1}_{y,r}\psi
      &=z^+_y,\nonumber\\
    d^r_{y,r}\psi
      &=z^0_y
        -V_r\Pi_rE_{r+1}z^+_y,\\
    d^{r-1}_{y,r}\psi
      &=-V_{r-1}\Pi_{r-1}
        \bigl(E_rz^0_y+E_{r+1}z^+_y\bigr).\nonumber
\end{align}
Equations \eqref{eq:middle-range-error} and
\eqref{eq:downward-range-error} are exactly the last terms in the second
and third lines, respectively.

To verify \eqref{eq:correct-triangular-identities}, write
$b^+_y=\Pi_{K_{r+1}}b^+_y+z^+_y$ and
$b^0_y=\Pi_{K_r}b^0_y+z^0_y$ in
\eqref{eq:exact-synthesis}.  Since $E_jK_j\subseteq\Hh_j$ by
\Cref{lem:canonical-frame-properties}, projection first onto
$\Hh_{r+1}$, then onto $\Hh_r$, and finally onto $\Hh_{r-1}$ gives
\begin{align*}
    \Pi_{r+1}Q_y\psi
       &=\Gamma_{r+1}b^+_y,\\
    \Pi_rQ_y\psi
       &=\Gamma_rb^0_y+\Pi_rE_{r+1}z^+_y,\\
    \Pi_{r-1}Q_y\psi
       &=\Gamma_{r-1}b^-_y
         +\Pi_{r-1}(E_rz^0_y+E_{r+1}z^+_y).
\end{align*}
Applying the appropriate $V_j$ and using
$V_j\Gamma_j=\Pi_{K_j}$ proves the three identities. We can now set up the one-query bound for the contraction $V$.

\begin{proposition}
\label{prop:canonical-one-query}
If $3t+2\le N$, then
\[
    \left\|
      \left(
        \cpO_x(I_{\A}\otimes V)
        -(I_{\A}\otimes V)P_x
      \right)
      (I_{\A}\otimes\Pi_{\le t})
    \right\|
    \le\frac{60}{\sqrt{N-t}}.
\]
\end{proposition}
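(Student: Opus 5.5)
We reduce the operator to its block components and assemble the bound from \Cref{prop:three-query-estimates}. First expand both sides on $\A\otimes\PermO$. By \Cref{def:comp-perms}, $\cpO_x=I_\A\otimes M_\bot+\sum_y S_y\otimes M_y$, and by \Cref{def:perms-purified-operators}, $P_x=\sum_y S_y\otimes Q_y$. Since $M_\bot=I-\sum_yM_y$, and $\sum_y Q_y=I$ on $\PermO$, the difference becomes
\[
    \cpO_x(I\otimes V)-(I\otimes V)P_x=\sum_y (S_y-I)\otimes (M_yV-VQ_y).
\]
The $I\otimes$ terms are $I\otimes(M_\bot V+\sum_y M_yV - V)=0$. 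Write $D_y=(M_yV-VQ_y)\Pi_{\le t}$. Each $S_y-I$ has norm at most $2$. For a family of the form $\sum_y A_y\otimes D_y$, a Cauchy--Schwarz argument shows the norm is at most $\max_y\|A_y\|\cdot\min(\operatorname{Col}(D),\operatorname{Row}(D))$ up to a constant. More precisely, we split $\sum_y S_y\otimes D_y - I\otimes\sum_y D_y$. The first term is bounded by $\operatorname{Row}$ or $\operatorname{Col}$ via the factorisation $\bigl(\sum_y\bra{y}\otimes S_y\otimes I\bigr)\bigl(\sum_y\ket y\otimes I\otimes D_y\bigr)$. For the second term, $\|\sum_y D_y\|\le\sqrt N\operatorname{Col}$ is too lossy, so instead we note that $\sum_y(M_yV-VQ_y)=(I-M_\bot)V-V$, which we handle by the same level decomposition. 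Alternatively, we keep the $S_y-I$ form and use that $(S_y-I)$ acts nontrivially only on $\ket\bot,\ket y$, which lets us bound the sum by $2\operatorname{Col}+2\operatorname{Row}$-type quantities.

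Next, decompose $D_y$ by source level $r\le t$ and target level $j\in\{r-1,r,r+1\}$, using \Cref{lem:bandedness}, into the pieces $d^j_{y,r}$ given by \eqref{eq:correct-triangular-identities}. These are: $z^+_{y,r}$; then $z^0_{y,r}-V_r\Pi_rE_{r+1}z^+_{y,r}$, where by definition \eqref{eq:middle-range-error} the correction is $h^0_{y,r}$ up to sign; and $h^-_{y,r}$ similarly. \Cref{prop:three-query-estimates} then gives $\operatorname{Col},\operatorname{Row}\le 8/\sqrt m,\ \sqrt{32/m},\ \sqrt{24/m}$ for the three kinds of terms, with $m=N-r\ge N-t$.

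To combine across $r$ we use bandedness. Group the terms into three diagonals, $j=r+1$, $j=r$ and $j=r-1$. Within one diagonal the blocks have orthogonal sources and orthogonal targets, so their Col and Row norms combine as a maximum over $r$ rather than a sum. Summing the three diagonals with the triangle inequality gives
\[
    \operatorname{Col}(D),\operatorname{Row}(D)\le \frac{8+\sqrt{32}+\sqrt{24}}{\sqrt{N-t}}\le\frac{19}{\sqrt{N-t}}.
\]
Multiplying by the constant from the tensor-product step (about $2$–$3$) yields at most $60/\sqrt{N-t}$.

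The main obstacle is the first step: bounding $\|\sum_y(S_y-I)\otimes D_y\|$ by Row/Col norms without a $\sqrt N$ loss. The plan is to test against $\ket{\alpha}=\sum_{w}\ket w\otimes\ket{\beta_w}$ with $w\in[N]\cup\{\bot\}$. The map $S_y-I$ sends $\ket\bot\mapsto\ket y-\ket\bot$ and $\ket y\mapsto\ket\bot-\ket y$. The output's $\ket y$ component is then $D_y(\beta_\bot-\beta_y)$, and its $\ket\bot$ component is $\sum_yD_y(\beta_y-\beta_\bot)$. The $\ket y$ components together are bounded by $\operatorname{Col}$ applied to $\beta_\bot$, plus $\max_y\|D_y\|$ on the $\beta_y$. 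The $\bot$ component is $\sum_y D_y\beta_y$, bounded by $\operatorname{Row}(D)\|\beta\|$, minus $(\sum_yD_y)\beta_\bot$. This last term is exactly where the cancellation $\sum_yD_y=-M_\bot V\Pi_{\le t}+\dots$ must be controlled, and I expect it to need the same leakage estimates via \eqref{eq:exact-synthesis} summed over $y$.
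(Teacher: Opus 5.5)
Your reduction $\cpO_x(I_{\A}\otimes V)-(I_{\A}\otimes V)P_x=\sum_y(S_y-I)\otimes(M_yV-VQ_y)$ is correct, and you identify the right difficulty. The $\ket\bot\to\ket\bot$ block of this operator is $-\sum_yD_y$, and it produces the term $-(\sum_yD_y)\beta_\bot$. You leave that term unresolved, and this is a genuine gap. It is also why your preliminary claims fail as stated: neither a bound $\max_y\|A_y\|\cdot\min(\operatorname{Col}(D),\operatorname{Row}(D))$ for $\sum_yA_y\otimes D_y$, nor a ``$2\operatorname{Col}+2\operatorname{Row}$'' bound from the support of $S_y-I$, holds in general (take $A_y=I$).

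The missing idea is that this block vanishes identically. You already have $\sum_yD_y=-M_\bot V\Pi_{\le t}$, and in fact $M_\bot V=0$. On the subspace based at $h$ with $x\notin\dom(h)$, one has $M_\bot=\proj{+_{x,h}}$. For $a=V_r\psi\in K_r$ and $|h|=r-1$, $\braket{+_{x,h}}{a}=|Y_h|^{-1/2}\sum_{b\notin\im(h)}a_{h[x\mapsto b]}=0$ by the zero-sum relations in \eqref{eq:harmonicity-and-synthesis}. For $|h|=r$, the component $a_h\ket h$ is orthogonal to $\ket{+_{x,h}}$, which lies in level $r+1$. No leakage estimate is needed. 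Summing \eqref{eq:exact-synthesis} over $y$, as you suggest, would only control the synthesized vector $E_rM_\bot a$, not $M_\bot a$ itself. Once $\sum_yD_y=0$, the output is $\sum_y\ket yD_y(\beta_\bot-\beta_y)+\ket\bot\sum_yD_y\beta_y$. Using $\|D_y\|\le\operatorname{Col}(D)$ on the diagonal inputs, its squared norm is at most $\bigl(2\operatorname{Col}(D)^2+\operatorname{Row}(D)^2\bigr)\|\alpha\|^2$. So the answer-register step costs exactly a factor $\sqrt3$.

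On constants, grouping by diagonals separates $z^+_{\cdot,r}$ (upper diagonal) from $z^0_{\cdot,r}$ (middle diagonal). You therefore cannot spend the joint bound $\operatorname{Col}(z^+\oplus z^0)^2\le 64/m$ once for both. The diagonal count gives $(16+\sqrt{32}+\sqrt{24})/\sqrt{N-t}\approx 26.6/\sqrt{N-t}$, not $19/\sqrt{N-t}$. With the factor $\sqrt3$ this is still about $46/\sqrt{N-t}$. Your max-over-diagonals combination is valid, since within a diagonal both sources and targets are orthogonal across $r$, so it does prove the proposition; a factor of $3$ would not. The paper orders the steps differently. It proves $\sum_yD_{y,r}=0$ level by level and bounds each source level by $\sqrt{1200/(N-r)}$ using the four orthogonal components $z^+,z^0,h^0,h^-$. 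It then combines levels via bandedness and Cauchy--Schwarz at a further cost of $\sqrt3$, which gives $60$.
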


\begin{proof}
For fixed $r$, write
\[
    D_{y,r}=(M_yV-VQ_y)\Pi_r.
\]
Its output is the sum of the four components
$z^+_y,z^0_y,h^0_{y,r}\psi,h^-_{y,r}\psi$, lying in the corresponding
orthogonal database subspaces.  The triangle inequality for both family
norms and \Cref{prop:three-query-estimates} give
\begin{equation}
\label{eq:memory-family-bound}
    \operatorname{Col}(D_{\cdot,r}),
    \ \operatorname{Row}(D_{\cdot,r})
    \le
    \frac{8+\sqrt{32}+\sqrt{24}}{\sqrt{N-r}}
    <\frac{20}{\sqrt{N-r}}.
\end{equation}

The range of $V$ is contained in $\bigoplus_rK_r$.  The harmonicity in
\Cref{lem:canonical-frame-properties} therefore implies
$M_\bot V=0$: on each subspace, $M_\bot$ projects the level-$r$ elements onto
their uniform vector, whose coefficient is the corresponding harmonic
sum, while the level-$r$ base vector is orthogonal to that projection.
Since
$M_\bot+\sum_yM_y=I$ and $\sum_yQ_y=I$, it follows that
\begin{equation}
\label{eq:D-zero-sum}
    \sum_yD_{y,r}=0.
\end{equation}
Write a vector in $\A\otimes\Hh_r$ as
\[
    \ket\Phi
    =\ket\bot\ket{\phi_\bot}
      +\sum_y\ket y\ket{\phi_y}.
\]
Using $(S_y-I)\alpha=(\alpha_\bot-\alpha_y)(\ket y-\ket\bot)$ and
\eqref{eq:D-zero-sum}, the answer-register error is
\[
    \sum_y\ket yD_{y,r}(\ket{\phi_\bot}-\ket{\phi_y})
    +\ket\bot\sum_yD_{y,r}\phi_y.
\]
The two displayed answer parts are orthogonal.  Applying the column bound
to the common input $\phi_\bot$, the individual bound
$\|D_{y,r}\|\le\operatorname{Col}(D_{\cdot,r})$ to the diagonal inputs,
and the row bound to the final sum gives
\begin{align*}
    \sum_y\|D_{y,r}(\phi_\bot-\phi_y)\|^2
      &\le 2\kappa_r^2
        \left(\|\phi_\bot\|^2+\sum_y\|\phi_y\|^2\right),\\
    \left\|\sum_yD_{y,r}\phi_y\right\|^2
      &\le\kappa_r^2\sum_y\|\phi_y\|^2,
      \qquad \kappa_r=\frac{20}{\sqrt{N-r}}.
\end{align*}
Therefore
\begin{equation}
\label{eq:answer-level-bound}
    \left\|
      \left(
        \cpO_x(I_{\A}\otimes V)
        -(I_{\A}\otimes V)P_x
      \right)
      (I_{\A}\otimes\Pi_r)
    \right\|^2
    \le\frac{1200}{N-r}.
\end{equation}

Finally decompose an input supported on $\Hh_{\le t}$ into its level
components.  By \Cref{lem:bandedness}, each output database level receives
contributions from at most three input levels.  Cauchy--Schwarz and
\eqref{eq:answer-level-bound} therefore give
\[
    \|B_x\Phi\|^2
    \le3\sum_{r=0}^t\|B_x\Pi_r\Phi\|^2
    \le\frac{3600}{N-t}\|\Phi\|^2,
\]
where
\[
    B_x=\cpO_x(I_{\A}\otimes V)-(I_{\A}\otimes V)P_x.
\]
Taking square roots proves the proposition.
\end{proof}

\begin{proof}[Proof of \Cref{thm:intertwiner-soundness}]
The frame identities in \Cref{lem:canonical-frame-properties} show that
$\wtV$ is an isometry and that it maps $\ket{S_N}$ to
$\ket\varnothing$.  Equation \eqref{eq:canonical-low-degree-bound} is
\Cref{prop:canonical-one-query}.

Let $\Delta=\wtV-V$.  By \Cref{lem:frame-spectrum}, for
$s\le(N-1)/2$,
\begin{equation}
\label{eq:polar-dual-distance}
    \|\Delta\Pi_s\|
    \le
    \frac{s}{2(N-2s+1)}.
\end{equation}
Both $\cpO_x$ and $P_x$ are unitary, and $P_x$ maps
$\Hh_{\le t}$ into $\Hh_{\le t+1}$.  Thus
\begin{align*}
    &\left\|
      \left(
        \cpO_x(I_{\A}\otimes\wtV)
        -(I_{\A}\otimes\wtV)P_x
      \right)
      (I_{\A}\otimes\Pi_{\le t})
    \right\|\\
    &\qquad\le
      \frac{60}{\sqrt{N-t}}
      +\frac{t}{2(N-2t+1)}
      +\frac{t+1}{2(N-2t-1)}\\
    &\qquad\le
      \frac{60}{\sqrt{N-t}}
      +\frac{t+1}{N-2t-1},
\end{align*}
which is \eqref{eq:isometric-low-degree-bound}.

Let $\mathsf{Inv}\ket\varphi=\ket{\varphi^{-1}}$ on $\PermO$.  The flip
$F$ on $\Ll$ and $\mathsf{Inv}$ preserve levels, and the definitions give
\[
    FV=V\mathsf{Inv},
    \qquad
    F\wtV=\wtV\mathsf{Inv}.
\]
Conjugating proves the inverse-query estimates.  Coherently controlled
queries are direct sums over the query point, so they obey the same norm
bounds.

For the hybrid argument, let $\delta_j$ be the Euclidean distance between
the compressed state immediately after query $j$ and the image under
$\wtV$ of the corresponding purified state.  The initial distance is zero
by \eqref{eq:isometry-initial-state}.  Inter-query unitaries act only on
accessible registers and commute with $\wtV$.  Immediately before query
$j+1$, \Cref{lem:bandedness} places the purified memory in
$\Hh_{\le j}$.  Hence
\[
    \delta_{j+1}
    \le\delta_j
       +\frac{60}{\sqrt{N-j}}
       +\frac{j+1}{N-2j-1}.
\]
Summing and using $3q-1\le N$ gives
\begin{equation}
\label{eq:isometric-hybrid-distance}
    \delta_q
    \le
    \frac{60q}{\sqrt{N-q+1}}
    +\frac{q(q+1)}{2(N-2q+1)}.
\end{equation}
Both states in this comparison are normalized because $\wtV$ is an
isometry.  For any effect on the accessible registers, the difference of
its expectation values is at most twice their Euclidean distance.  Applying
this to \eqref{eq:isometric-hybrid-distance} proves
\eqref{eq:main-soundness-bound}.
\end{proof}

\subsection{The POVM isometry}

In this subsection, we use the symmetry of the Gram matrix $G$ and representation theory to establish key properties. The primary property is that $G$ is close to the identity, and is established in \Cref{lem:frame-spectrum} through a direct calculation of the spectrum. This in turn shows that $\wtV$ is close to an isometry. 
We use the notation $\lambda\vdash n$ for a partition of $n$, identified
with its Young diagram, and write $\lambda'$ for the conjugate partition.
The relation $\alpha\nearrow\lambda$ means that $\lambda$ is obtained
from $\alpha$ by adding one box.  A skew diagram is a \emph{horizontal
strip} if it contains at most one box in each column.  Let $S^\lambda$ be
the Specht module indexed by $\lambda$, let
$f^\lambda=\dim S^\lambda$, and, for a skew shape $\lambda/\nu$, let
$f^{\lambda/\nu}$ be its number of standard Young tableaux.  We use the
standard branching, Pieri, Schur-orthogonality, and hook-length formulas
for these modules.  Under the left-right action
\[
    (\sigma,\tau)\ket\varphi=\ket{\sigma\varphi\tau^{-1}},
\]
we have
\begin{equation}
\label{eq:regular-decomposition}
    \PermO\cong
    \bigoplus_{\lambda\vdash N}S^\lambda\otimes S^\lambda.
\end{equation}

\begin{lemma}
\label{lem:frame-spectrum}
Let $0\le r\le N-1$ and let $m=N-r$.  The space $\Hh_r$ is the direct sum
of the blocks in \eqref{eq:regular-decomposition} with
$\lambda_1=m$.  If $\lambda=(m,\mu)$ on such a block, then $G_r$ acts by
\begin{equation}
\label{eq:frame-eigenvalue}
    g_{m,\mu}
    =\frac1{m!}
      \prod_{c=1}^{m}\bigl(m-c+1+\mu'_c\bigr)
    =\prod_{c=1}^{m}
      \left(1+\frac{\mu'_c}{m-c+1}\right).
\end{equation}
In particular,
\begin{equation}
\label{eq:frame-extremes}
    \lambda_{\min}(G_r)=\frac{N}{N-r},
    \qquad
    \|G_r^{-1}\|=\frac{N-r}{N}\le1.
\end{equation}
If $r\le(N-1)/2$, then
\begin{equation}
\label{eq:frame-upper-bound}
    \|G_r\|
    \le
    \exp\left(\frac{r}{N-2r+1}\right).
\end{equation}
Consequently, \eqref{eq:polar-dual-distance} holds.
\end{lemma}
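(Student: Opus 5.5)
The plan is to work entirely inside the bimodule decomposition \eqref{eq:regular-decomposition}. First identify $\Hh_{\le r}$ as a sum of blocks. Then observe that $E_rE_r^\dagger=\sum_{f\in\I_r}\proj{T_f}$ is $S_N\times S_N$-equivariant. Since each block $S^\lambda\otimes S^\lambda$ is an irreducible $S_N\times S_N$-module, this operator, and hence $G_r=\Pi_rE_rE_r^\dagger\Pi_r$, acts on each block by a scalar. That scalar can be computed by a trace.

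\textbf{Step 1: identifying the levels.} Fix $D\subseteq[N]$ with $|D|=r$, and let $H_D\cong S_m$ be the pointwise stabilizer of $D$.

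\begin{itemize}
  \item Since $\varphi\tau^{-1}|_D=\varphi|_D$ for $\tau\in H_D$, each $\ket{T_f}$ with $\dom f=D$ is right-$H_D$-invariant.
  \item Conversely, the states $\ket{T_f}$ with $\dom f=D$ are (up to scaling) the indicator vectors of the right cosets $\varphi H_D$. Hence they span exactly the right $H_D$-invariants.
  \item In the block $S^\lambda\otimes S^\lambda$ these invariants are $S^\lambda\otimes(S^\lambda)^{S_m}$. By \Cref{prop:representation-facts}(ii) this space is nonzero iff $\lambda_1\ge m$.
  \item $\Hh_{\le r}$ is invariant under both actions. Any nonzero $S_N\times S_N$-invariant subspace meeting an irreducible block must contain the whole block.
  \item The sets $D$ of size $\le r$ are all $S_N$-conjugate to subsets of a fixed one. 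Moreover, $\lambda_1\ge N-|D|$ is monotone in $|D|$.
\end{itemize}

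Together these give $\Hh_{\le r}=\bigoplus_{\lambda_1\ge N-r}S^\lambda\otimes S^\lambda$. Therefore $\Hh_r$ is the sum of the blocks with $\lambda_1=m$. I expect this identification to be the main point needing care. In particular, I must check that the spans of the $T_f$ contain all of each block rather than merely meet it, and that $\Gamma_r$ is onto $\Hh_r$, so that $G_r$ is invertible there. Both follow from irreducibility of the blocks under $S_N\times S_N$.

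\textbf{Step 2: the eigenvalue.} On a block $\lambda=(m,\mu)$ the scalar is
\[
g=\frac{|\I_r|\,\langle T_f|\Pi_\lambda|T_f\rangle}{(f^\lambda)^2}.
\]
The numerator uses that all $\ket{T_f}$ are conjugate under the group action, so every term in the trace is equal; the denominator is the dimension of the block. I will use the central projector $\Pi_\lambda=\frac{f^\lambda}{N!}\sum_g\chi^\lambda(g)R(g)$ for the right action. Summing over pairs in $\T_f$ gives
\[
\langle T_f|\Pi_\lambda|T_f\rangle=\frac{f^\lambda}{N!}\sum_{\tau\in S_m}\chi^\lambda(\tau)=\frac{f^\lambda\, m!\, f^{\lambda/(m)}}{N!}.
\]
Also $|\I_r|=\binom Nr^2r!$, and $f^{\lambda/(m)}=f^\mu$ when $\lambda_1=m$. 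Substituting these,
\[
g=\frac{N!\,f^\mu}{r!\,m!\,f^\lambda}=\frac{H(\lambda)}{m!\,H(\mu)}
\]
by the hook-length formula, where $H$ denotes the product of hook lengths. Hooks in rows $\ge2$ coincide for $\lambda$ and $\mu$. The first-row hook at column $c$ is $m-c+1+\mu'_c$. This yields \eqref{eq:frame-eigenvalue}.

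\textbf{Step 3: extremes and the polar distance.} Write $x_c=\mu'_c/(m-c+1)\ge0$, so $g=\prod_c(1+x_c)$.

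\begin{itemize}
  \item Lower bound: $\prod(1+x_c)\ge1+\sum_c x_c\ge1+\sum_c\mu'_c/m=1+r/m=N/m$. Equality holds for the single column $\mu=(1^r)$, which is admissible since $m\ge1$. This gives \eqref{eq:frame-extremes}.
  \item Upper bound: $\log g\le\sum_c x_c$. Only columns $c\le\mu_1\le r$ contribute, and for those $m-c+1\ge m-r+1=N-2r+1$. Hence $\log g\le r/(N-2r+1)$, which is \eqref{eq:frame-upper-bound}.
  \item Polar distance: on a block with eigenvalue $g\ge1$, the difference is $\Delta=\Gamma_r^\dagger(G_r^{-1/2}-G_r^{-1})$. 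Using $\Gamma_r^\dagger$'s norm $\sqrt g$ on that block, $\|\Delta\Pi_s\|\le\max\sqrt g\,(g^{-1/2}-g^{-1})=\max(1-g^{-1/2})\le\tfrac12\log g_{\max}$. This gives \eqref{eq:polar-dual-distance}.
\end{itemize}
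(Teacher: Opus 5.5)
Your proof is correct. Its skeleton matches the paper's: equivariance of $E_rE_r^\dagger$ forces a scalar on each block, the scalar is $\binom Nr f^{\lambda/(m)}/f^\lambda$ by the branching count $\dim(S^\lambda)^{S_m}=f^{\lambda/(m)}$, hook lengths cancel outside the first row, and the extremes and the polar distance come from the same elementary inequalities.

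The middle two steps are organized differently, though. The paper writes down the full kernel $\bra\pi E_rE_r^\dagger\ket\sigma=\frac1{m!}\binom{\operatorname{fix}(\pi^{-1}\sigma)}{r}$ and recognizes it as a multiple of the permutation character on injective $r$-tuples. Schur orthogonality then gives the scalar on every block of $\PermO$ at once, and the levels are read off as the blocks where that scalar is nonzero. You instead identify the levels first, from two facts: the completion states with a fixed domain $D$ span exactly the right invariants of the pointwise stabilizer $H_D$, and the blocks are irreducible. You then get the scalar by a trace against the central idempotent. This needs only the diagonal overlaps $\langle T_f|R(g)|T_f\rangle=\one_{\{g\in H_D\}}$ and transitivity of the action on $\I_r$.

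What each buys: your version explains the level structure conceptually. It also spells out why $\operatorname{span}\{\ket{T_f}:|f|=r\}$ is already all of $\Hh_{\le r}$, so that $\Gamma_r$ is onto $\Hh_r$; the paper leaves this implicit. The paper's explicit-kernel version is shorter, and it is the template reused later for the fixed-edge frame $H^{11}_{y,r}$.

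One small tightening in your last step. Passing from the per-block bound to $\|\Delta\Pi_s\|$ needs the images of distinct blocks under $\Gamma_s^\dagger$ to be orthogonal. This is immediate if you write $\wtV_s-V_s=\wtV_s(I-G_s^{-1/2})$ and use that $\wtV_s$ is an isometry.
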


\begin{proof}
For $\pi,\sigma\in S_N$,
\[
    \bra\pi E_rE_r^\dagger\ket\sigma
    =\frac1{m!}
      \binom{|\{u:\pi(u)=\sigma(u)\}|}{r}.
\]
Thus $E_rE_r^\dagger$ is central.  The class function
$r!\binom{\operatorname{fix}(\tau)}r$ is the character of the
permutation representation on ordered injective $r$-tuples.  Schur
orthogonality and Young branching therefore show that its scalar on
$S^\lambda\otimes S^\lambda$ is
\begin{equation}
\label{eq:general-frame-scalar}
    \binom Nr\frac{f^{\lambda/(m)}}{f^\lambda}.
\end{equation}
This is nonzero exactly when $\lambda_1\ge m$.  Hence
$\Hh_{\le r}$ contains precisely those blocks, while $\Hh_r$ contains the
blocks with $\lambda_1=m$.

For $\lambda=(m,\mu)$, hook-length cancellation in
\eqref{eq:general-frame-scalar} gives \eqref{eq:frame-eigenvalue}.  Since
$\sum_c\mu'_c=r$,
\[
    g_{m,\mu}
    \ge1+\sum_c\frac{\mu'_c}{m}
    =\frac Nm,
\]
with equality for $\mu=(1^r)$, proving
\eqref{eq:frame-extremes}.  If $\mu'_c\ne0$, then
$c\le\mu_1\le r$, and hence
\[
    \log g_{m,\mu}
    \le\sum_c\frac{\mu'_c}{m-c+1}
    \le\frac{r}{m-r+1}.
\]
This proves \eqref{eq:frame-upper-bound}.

On a block with frame eigenvalue $g$, the singular values of $V_r$ and
$\wtV_r$ are $g^{-1/2}$ and $1$, respectively.  Therefore
\[
    \|(\wtV_r-V_r)\Pi_r\|
    =\max_g(1-g^{-1/2})
    \le\frac12\max_g\log g
    \le\frac{r}{2(N-2r+1)},
\]
which is \eqref{eq:polar-dual-distance}.
\end{proof}

\begin{lemma}
\label{lem:canonical-frame-properties}
For every $r$,
\begin{align}
\label{eq:canonical-polar-identities}
    \Gamma_rV_r&=I_{\Hh_r}, &
    V_r^\dagger V_r&=G_r^{-1}, &
    \|V_r\|&\le1,\\
    \wtV_r^\dagger\wtV_r&=I_{\Hh_r}, &
    \wtV_r\wtV_r^\dagger&=\Pi_{K_r}, &
    E_r\wtV_r&=G_r^{1/2}.
\end{align}
Moreover, for every $a\in K_r$,
\begin{equation}
\label{eq:harmonicity-and-synthesis}
    \sum_{b\notin\im(h)}a_{h[u\mapsto b]}=0,
    \qquad
    \sum_{u\notin\dom(h)}a_{h[u\mapsto b]}=0,
    \qquad
    E_ra=\Gamma_ra\in\Hh_r.
\end{equation}
Here the first identity holds for every $h\in\I_{r-1}$ and
$u\notin\dom(h)$, and the second for every $h\in\I_{r-1}$ and
$b\notin\im(h)$; for $r=0$ they are vacuous.  In particular,
\begin{equation}
\label{eq:canonical-right-inverse}
    E_rV_r=I_{\Hh_r},
    \qquad
    \Pi_{K_r}=V_r\Gamma_r.
\end{equation}
Finally, the operators
\begin{equation}
\label{eq:frame-povm}
    G_r^{-1/2}
    \Pi_r\ket{T_f}\!\bra{T_f}\Pi_r
    G_r^{-1/2},
    \qquad f\in\I_r,
\end{equation}
form a POVM on $\Hh_r$, whose coherent measurement is $\wtV_r$.
\end{lemma}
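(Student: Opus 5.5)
The identities in the first two lines are linear algebra once we know, from \Cref{lem:frame-spectrum}, that $G_r=\Gamma_r\Gamma_r^\dagger$ is invertible on $\Hh_r$. So the proof splits into three parts: first the polar/frame identities, then the harmonicity conditions on $K_r$, and finally the synthesis identities and the POVM statement.

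\textbf{Frame identities.} Since $\Gamma_r$ maps $\Ll_r$ into $\Hh_r$ and $G_r$ is invertible there, $\Gamma_r$ is onto $\Hh_r$. Then $\Gamma_rV_r=\Gamma_r\Gamma_r^\dagger G_r^{-1}=I_{\Hh_r}$. Next, $V_r^\dagger V_r=G_r^{-1}\Gamma_r\Gamma_r^\dagger G_r^{-1}=G_r^{-1}$, so $\|V_r\|^2=\|G_r^{-1}\|\le1$ by \eqref{eq:frame-extremes}. Similarly $\wtV_r^\dagger\wtV_r=G_r^{-1/2}G_rG_r^{-1/2}=I$. The operator $\wtV_r\wtV_r^\dagger=\Gamma_r^\dagger G_r^{-1}\Gamma_r$ is a self-adjoint idempotent whose range equals $\ran\Gamma_r^\dagger=K_r$, so it equals $\Pi_{K_r}$. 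The same computation gives $V_r\Gamma_r=\Pi_{K_r}$.

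\textbf{Harmonicity.} Take $a=\Gamma_r^\dagger\psi$ with $\psi\in\Hh_r$. Then $a_f=\langle T_f|\Pi_r\psi\rangle=\langle T_f|\psi\rangle$. Fix $h\in\I_{r-1}$ and $u\notin\dom(h)$. The states $\ket{T_{h[u\mapsto b]}}$ for $b\notin\im(h)$ partition $\T_h$, so
\[
\sum_{b}\ket{T_{h[u\mapsto b]}}=\sqrt{N-r+1}\,\ket{T_h}\in\Hh_{\le r-1}.
\]
This vector is orthogonal to $\psi$, so the first sum in \eqref{eq:harmonicity-and-synthesis} vanishes. The second sum is handled symmetrically, by partitioning $\T_h$ according to the preimage of $b$.

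\textbf{Synthesis identities and POVM.} For $E_ra=\Gamma_ra$, we need $E_ra\perp\Hh_{\le r-1}$, i.e.\ $\langle T_g|E_ra\rangle=0$ for $|g|\le r-1$. I expect this to be the main step. The plan is to compute
\[
\langle T_g|E_ra\rangle=\sum_f a_f\langle T_g|T_f\rangle .
\]
The overlap $\langle T_g|T_f\rangle$ is zero unless $f$ and $g$ are compatible, and then it depends only on $|f\cup g|$. Group the $f$ by their restriction $h=f|_{D}$ to a set $D$ of size $r-1$ containing $\dom(f)\cap\dom(g)$, plus one remaining pair $(u,b)$. If $u\in\dom(g)$, the pair is forced by $g$. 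Otherwise the overlap is constant over the free values $b\notin\im(h)\cup\im(g)$, and these $b$ make up all of $\im$-complement except values taken by $g$ on other inputs.

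The natural way to organize this is to induct on $r-|g|$, using both harmonicity conditions to kill the partial sums. A cleaner alternative, which I will try first, is to avoid the combinatorics entirely. Since $\Pi_{\le r-1}E_r$ commutes with the $S_N\times S_N$ action, the operator $E_r^\dagger\Pi_{\le r-1}E_r$ is built from the maps $\ket{f}\mapsto\sum_{b}\ket{f|_{\mathrm{minus}\ u}[u\mapsto b]}$, i.e.\ the ``restriction-then-extension'' operators. One checks that $E_r^\dagger\ket{T_g}$ lies in the span of the vectors $\sum_b\ket{h[u\mapsto b]}$ and $\sum_u\ket{h[u\mapsto b]}$. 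This can be shown by expressing $\ket{T_g}$ for $|g|<r$ through $\ket{T_h}$ with $|h|=r-1$ and using the partition identity above. Harmonicity says $a$ is orthogonal to exactly these vectors, which gives $E_ra=\Gamma_ra$.

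It then follows that $E_rV_r=\Gamma_rV_r=I_{\Hh_r}$, and also $E_r\wtV_r=\Gamma_r\Gamma_r^\dagger G_r^{-1/2}=G_r^{1/2}$. Finally, the POVM operators sum to
\[
\sum_f G_r^{-1/2}\Gamma_r\ket f\bra f\Gamma_r^\dagger G_r^{-1/2}=G_r^{-1/2}G_rG_r^{-1/2}=I_{\Hh_r}.
\]
The coherent measurement is $\wtV_r$, since $\bra f\wtV_r=\bra{T_f}\Pi_rG_r^{-1/2}$.
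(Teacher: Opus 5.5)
Your frame identities, the zero-sum argument (the same as the paper's, via $\sum_b\ket{T_{h[u\mapsto b]}}=\sqrt{N-r+1}\,\ket{T_h}\in\Hh_{\le r-1}$), and the POVM computation are correct.

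The gap is the step you flag as the main one: showing $E_ra\in\Hh_r$ for $a\in K_r$. Both $E_rV_r=I_{\Hh_r}$ and $E_r\wtV_r=G_r^{1/2}$ depend on it.

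\textbf{The combinatorial route} stops at the obstruction you noticed yourself. For fixed $g$, the overlaps are constant only over $b\notin\im(h)\cup\im(g)$, not over the full range $b\notin\im(h)$ of the zero-sum relation. The induction on $r-|g|$ that is supposed to absorb the excluded values $b\in\im(g)$ is never carried out.

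\textbf{The equivariance shortcut} is not justified as written. For $|g|=r-1$ the partition identity gives $\ket{T_g}=(N-r+1)^{-1/2}E_rw$, where $w=\sum_b\ket{g[u\mapsto b]}$ lies in the span $W$ of your test vectors. Hence $E_r^\dagger\ket{T_g}=(N-r+1)^{-1/2}E_r^\dagger E_rw$, and you would still need $E_r^\dagger E_rW\subseteq W$, which is essentially the claim itself. Nor does $S_N\times S_N$-equivariance force an operator to be \emph{built from} restriction-then-extension maps. The identity on $\Ll_r$ is equivariant, yet its range is not contained in $W$: indeed $W^\perp=\ker\partial_r$ contains $K_r$, which is nonzero whenever $\Hh_r\neq0$. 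The implication from harmonicity to $E_ra\in\Hh_r$ is in fact true for every $r$. Proving it, however, requires the eigenvalue analysis of $\partial_r^\dagger\partial_r$ from \Cref{lem:joint-incidence-gap}, which places $\ker\partial_r$ inside the blocks $S^\lambda\otimes S^\lambda$ with $\lambda_1=N-r$.

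\textbf{The missing idea} is to use $a\in K_r$ directly rather than passing through harmonicity. Write $a=\Gamma_r^\dagger\psi=E_r^\dagger\Pi_r\psi=E_r^\dagger\psi$ with $\psi\in\Hh_r$, so that $E_ra=E_rE_r^\dagger\psi$. With $m=N-r$, the matrix element $\bra\pi E_rE_r^\dagger\ket\sigma=\frac1{m!}\binom{|\{u:\pi(u)=\sigma(u)\}|}{r}$ is invariant under simultaneous left and right multiplication of $\pi$ and $\sigma$. So $E_rE_r^\dagger$ commutes with the left-right action and acts by a scalar on each block of the multiplicity-free decomposition $\bigoplus_\lambda S^\lambda\otimes S^\lambda$. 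Since $\Hh_r$ is a sum of whole blocks (\Cref{lem:frame-spectrum}), $E_rE_r^\dagger\psi\in\Hh_r$, that is, $E_ra=\Pi_rE_ra=\Gamma_ra$. This is the paper's argument. It holds for every $r$ and needs no incidence maps.
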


\begin{proof}
The identities in \eqref{eq:canonical-polar-identities} follow directly
from \eqref{eq:frame-objects}, \eqref{eq:two-intertwiners}, and
\Cref{lem:frame-spectrum}.  Their sum in \eqref{eq:frame-povm} is
$G_r^{-1/2}G_rG_r^{-1/2}=I$.

Let $a=\Gamma_r^\dagger\psi\in K_r$.  If
$h\in\I_{r-1}$ and $u\notin\dom(h)$, then
\begin{align*}
    \sum_{b\notin\im(h)}a_{h[u\mapsto b]}
    &=\left\langle
      \sum_{b\notin\im(h)}\Pi_r\ket{T_{h[u\mapsto b]}},
      \psi
      \right\rangle\\
    &=\sqrt{N-r+1}\,\ip{\Pi_r\ket{T_h}}{\psi}=0.
\end{align*}
The corresponding sum over unused domains vanishes in the same way.  This
proves the first two identities in \eqref{eq:harmonicity-and-synthesis}.

For the last identity, use centrality rather than a termwise overlap
calculation.  By \Cref{lem:frame-spectrum}, $E_rE_r^\dagger$ preserves each
block in \eqref{eq:regular-decomposition}, and $\Hh_r$ is a sum of whole
blocks.  Hence
\[
    E_ra=E_rE_r^\dagger\psi\in\Hh_r,
\]
so $E_ra=\Pi_rE_ra=\Gamma_ra$.  Applying this to $a=V_r\psi$ proves the
first identity in \eqref{eq:canonical-right-inverse}; the second follows
algebraically from $V_r\Gamma_r=\Gamma_r^\dagger G_r^{-1}\Gamma_r$.
\end{proof}

\subsection{Proof of \Cref{eq:leakage-estimate-summary}}
\label{subsec:joint-incidence}

In this subsection, we will prove \Cref{eq:leakage-estimate-summary}.
For $1\le j\le N$, define the maps
\begin{align}
    \Idown_j:\Ll_j&\longrightarrow
    \bigoplus_{\substack{h\in\I_{j-1}\\b\notin\im(h)}}
       \mathbb C\ket{h,b}, &
    (\Idown_ja)_{h,b}
    &=\frac1{\sqrt{N-j+1}}
      \sum_{u\notin\dom(h)}a_{h[u\mapsto b]},
    \label{eq:image-incidence}\\
    \Ddown_j:\Ll_j&\longrightarrow
    \bigoplus_{\substack{h\in\I_{j-1}\\u\notin\dom(h)}}
       \mathbb C\ket{h,u}, &
    (\Ddown_ja)_{h,u}
    &=\frac1{\sqrt{N-j+1}}
      \sum_{b\notin\im(h)}a_{h[u\mapsto b]}.
    \label{eq:domain-incidence}
\end{align}
Write $\partial_j=(\Ddown_j,\Idown_j)$ and set
$\Ddown_0=\Idown_0=0$.
Thus the two zero-sum relations in
\eqref{eq:harmonicity-and-synthesis} say precisely that
$\partial_ja=0$ for $a\in K_j$.

\begin{lemma}
\label{lem:joint-incidence-gap}
Let $1\le j\le N/2$ and let $n=N-j$.  Then
\begin{equation}
\label{eq:joint-incidence-gap}
    \partial_j^\dagger\partial_j
    \succeq
    \frac{n-j+1}{n+1}(I-\Pi_{K_j}).
\end{equation}
Consequently, if $j\le(N+1)/3$, then
\begin{equation}
\label{eq:incidence-coercivity}
    \|(I-\Pi_{K_j})b\|^2
    \le2\bigl(\|\Ddown_jb\|^2+\|\Idown_jb\|^2\bigr).
\end{equation}
\end{lemma}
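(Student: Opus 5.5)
The plan is to diagonalize $\partial_j^\dagger\partial_j=(\Ddown_j)^\dagger\Ddown_j+(\Idown_j)^\dagger\Idown_j$ with the same representation theory used in \Cref{lem:frame-spectrum}. Let $S_N\times S_N$ act on $\Ll_j$ by $(\sigma,\tau)\ket f=\ket{\sigma f\tau^{-1}}$. Both incidence maps are equivariant for this action, so $\partial_j^\dagger\partial_j$ commutes with it. The steps, in order, are: decompose $\Ll_j$ into irreducibles, compute the scalar of $\partial_j^\dagger\partial_j$ on each piece, and identify its kernel with $K_j$.

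\emph{Step 1: decomposition of $\Ll_j$.} A partial permutation of size $j$ is an orbit point, with stabilizer $S_j^{\Delta}\times(S_n\times S_n)$; here the diagonal $S_j$ relabels matched pairs. Decomposing the induced representation over $\alpha\vdash j$ and applying Pieri (\Cref{prop:representation-facts}(iii)) on each side gives
\[
\Ll_j\cong\bigoplus_{\alpha\vdash j}\ \bigoplus_{\lambda,\tilde\lambda}S^\lambda\otimes S^{\tilde\lambda},
\]
where $\lambda/\alpha$ and $\tilde\lambda/\alpha$ range over horizontal $n$-strips. Each triple $(\alpha,\lambda,\tilde\lambda)$ appears with multiplicity one.

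\emph{Step 2: the scalar on each piece.} Expanding $(\Idown_j)^\dagger\Idown_j$ shows that it equals $\frac{1}{n+1}(jI+X)$. Here $X$ sends $\ket f$ to the sum of all $\ket{f'}$ obtained by replacing one image value $f(u)$ with an unused $b\notin\im f$. This is exactly left multiplication by the transpositions that mix $\im f$ with its complement. Splitting the class sum of \Cref{prop:representation-facts}(iv) into three parts, namely transpositions inside $\im f$ (acting through $S^\alpha$, with scalar $\Ccont{\alpha}$), transpositions inside the complement (acting trivially, with scalar $\binom n2$), and the mixed ones, I get
\[
X=\Ccont{\lambda}-\Ccont{\alpha}-\tbinom n2 .
\]
The domain side gives the same formula with $\tilde\lambda$ in place of $\lambda$. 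Hence on the block $(\alpha,\lambda,\tilde\lambda)$,
\[
\partial_j^\dagger\partial_j=\frac{e(\lambda/\alpha)+e(\tilde\lambda/\alpha)}{n+1},
\qquad
e(\lambda/\alpha)=j+\Ccont{\lambda}-\Ccont{\alpha}-\tbinom n2 ,
\]
and $e\ge0$ by positivity.

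\emph{Step 3: the combinatorial minimization, which I expect to be the main obstacle.} I will show two things: $e(\lambda/\alpha)=0$ if and only if $\lambda_1=n$, and otherwise $e(\lambda/\alpha)\ge n-j+1$.

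The zero case is a horizontal $n$-strip with $\lambda_1=n$. Such a strip occupies each of columns $1,\dots,n$ exactly once. Its box in column $c$ sits in row $\lambda'_c$, so its content sum is $\binom n2-\sum_c(\lambda'_c-1)=\binom n2-j$, which gives $e=0$.

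For the gap, suppose $\lambda_1\ne n$, so the strip misses some column or extends past column $n$. I will move a single strip box to the position it would occupy in the $\lambda_1=n$ configuration and track the change in content. Comparing with the extremal example $\alpha=(1^j)$, $\lambda=(n+1,1^{j-1})$, this change should be at least $n-j+1$. I would first check that every non-kernel strip dominates such a one-box move with content gain $\ge n+1-j$. The case $\lambda_1>n$ is the tight one, since the moved box goes from content $\ge n$ to content $\ge -j+1$.

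\emph{Step 4: identify the kernel and conclude.} By Steps 2 and 3, $\ker\partial_j$ is the sum of the blocks with $\lambda_1=\tilde\lambda_1=n$. For such blocks $\alpha$ is forced to be $\lambda$ with its first column entries shifted, so $\alpha$ is determined by $\lambda$, and likewise by $\tilde\lambda$. The kernel therefore has dimension $\sum_{\lambda_1=n}f^\lambda f^{\tilde\lambda}$ over the pairs compatible with a common $\alpha$. I expect this forces $\lambda=\tilde\lambda$, which gives dimension $\dim\Hh_r$ by \Cref{lem:frame-spectrum}. Since $K_j\subseteq\ker\partial_j$ by \eqref{eq:harmonicity-and-synthesis}, and $\Gamma_j^\dagger$ is injective on $\Hh_j$ because $G_j$ is invertible, we have $\dim K_j=\dim\Hh_j$. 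Hence $K_j=\ker\partial_j$. On the complement of the kernel, at least one of $e(\lambda/\alpha),e(\tilde\lambda/\alpha)$ is $\ge n-j+1$, which gives \eqref{eq:joint-incidence-gap}. Finally, $(n-j+1)/(n+1)\ge\frac12$ is equivalent to $n\ge 2j-1$, that is, $j\le(N+1)/3$. Combined with $\|\partial_jb\|^2=\|\Ddown_jb\|^2+\|\Idown_jb\|^2$, this yields \eqref{eq:incidence-coercivity}.
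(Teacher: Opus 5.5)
Your Steps 1, 2 and 4 follow the paper's proof. You use the same triple decomposition of $\Ll_j$: you induce from the stabilizer $S_j^{\Delta}\times S_n\times S_n$, while the paper takes diagonal-$S_j$ invariants in $\Hh([N]^{(j)})\otimes\Hh([N]^{(j)})$. You use the same three-way split of the transposition class sum, giving the eigenvalue $(e(\lambda/\alpha)+e(\tilde\lambda/\alpha))/(n+1)$, and you arrive at the same kernel. Your dimension count in Step 4, $\dim K_j=\operatorname{rank}\Gamma_j=\dim\Hh_j$ because $G_j$ is invertible on $\Hh_j$, is a more explicit version of the paper's one-line identification. The point you leave as an expectation is immediate: $\lambda_1=n$ forces exactly one new box in each of columns $1,\dots,n$, so $\lambda=(n,\alpha_1,\alpha_2,\dots)=\tilde\lambda$. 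In other words, $\alpha$ is $\lambda$ with its first row deleted. A small slip: $\Idown_j$ sums over $u\notin\dom(h)$ at fixed $b$, so $(\Idown_j)^\dagger\Idown_j$ replaces a domain value, not an image value. This is harmless, since only the sum of the two terms matters.

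The genuine gap is Step 3. It is the only quantitative part of the lemma, and you restate it rather than prove it; the heuristics you offer would not establish it. The missing idea is the legality constraint the paper uses:
\begin{itemize}
\item The box a horizontal strip places in column $c$ always has content $c-1-\alpha'_c$.
\item Beyond column $\alpha_1$, where $\alpha'_c=0$, the new first-row boxes must be contiguous starting at $\alpha_1+1$.
\item Hence if $\lambda_1>n$, the strip contains columns $\alpha_1+1,\dots,n+1$. Every omitted column $c_0\le n$ then satisfies $c_0\le\alpha_1\le j$ and $\alpha'_{c_0}\ge 1$.
\end{itemize}
Pair the $k\ge1$ used columns $c_1\ge n+1$ with the $k$ omitted columns. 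This gives $e(\lambda/\alpha)=\sum(c_1-c_0+\alpha'_{c_0})\ge k(n-j+2)$. Without the constraint, nothing forbids a one-box move from column $n+1$ to column $n$, which gains only $1$.

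Your sketch also runs the wrong way. Bounding the gain from below needs an upper bound on the content of the destination slot (here $\le j-2$), not the lower bound ``$\ge -j+1$''. In addition, $\alpha=(1^j)$, $\lambda=(n+1,1^{j-1})$ is not extremal, since its $e$ equals $n+j$. The minimum positive value $n-j+2$ occurs at $\alpha=(j)$, $\lambda=(n+1,j-1)$, so the constant $n-j+1$ in the statement has slack.
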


\begin{proof}
Let $[N]^{(j)}$ be the injective ordered $j$-tuples.  The matching space
$\Ll_j$ is the diagonal-$S_j$ invariant subspace of
$\Hh([N]^{(j)})\otimes\Hh([N]^{(j)})$.  Young branching and the Pieri rule
give
\[
    \Hh([N]^{(j)})
    \cong
    \bigoplus_{\substack{
       \alpha\vdash j,\ \lambda\vdash N\\
       \lambda/\alpha\text{ a horizontal }n\text{-strip}}}
       S^\alpha\otimes S^\lambda.
\]
Taking diagonal-$S_j$ invariants in the tensor square leaves the summands
$S^\lambda\otimes S^\mu$ indexed by a common $\alpha\vdash j$ such that
both $\lambda/\alpha$ and $\mu/\alpha$ are horizontal $n$-strips.

Let $A_{\mathrm d}$ replace one domain entry of an ordered matching by an
unused domain value, and let $A_{\mathrm i}$ do the same on the image side.
Directly from \eqref{eq:image-incidence} and
\eqref{eq:domain-incidence},
\begin{equation}
\label{eq:incidence-adjacency}
    \partial_j^\dagger\partial_j
    =\frac{2jI+A_{\mathrm d}+A_{\mathrm i}}{n+1}.
\end{equation}
For a partition $\rho$, put
$\Ccont{\rho}=\sum_{(a,b)\in\rho}(b-a)$.  The transposition class sum acts
on $S^\rho$ by $\Ccont{\rho}$.  Splitting this class sum into
transpositions internal to the first $j$ entries, internal to the final
$n$ entries, and crossing between them shows that the eigenvalue of
\eqref{eq:incidence-adjacency} on the summand indexed by
$(\alpha,\lambda,\mu)$ is
\begin{equation}
\label{eq:incidence-eigenvalue}
    \frac{e_\alpha(\lambda)+e_\alpha(\mu)}{n+1},
    \qquad
    e_\alpha(\lambda)
    =\Ccont{\lambda}-\Ccont{\alpha}-\binom n2+j.
\end{equation}

A horizontal $n$-strip is obtained by adding boxes at the bottoms of $n$
distinct columns.  Since $j\le n$, the smallest content is obtained from
the first $n$ columns; call the resulting partition
$\alpha[N]=(n,\alpha_1,\alpha_2,\ldots)$.  Its content excess is
\[
    \sum_{c=1}^{n}(c-1-\alpha'_c)=\binom n2-j,
\]
so $e_\alpha(\alpha[N])=0$.  Any other choice of $n$ columns omits a
column at most $n$ and includes a column at least $n+1$.  Legality forces
an omitted column to have positive height in $\alpha$: otherwise the new
column heights would fail to be nonincreasing.  Thus the omitted column
is at most $j$, and the exchange raises the content by at least
$n-j+1$.  Repeating the exchange shows that every nonzero eigenvalue in
\eqref{eq:incidence-eigenvalue} is at least $(n-j+1)/(n+1)$.

The zero eigenspace consists of the blocks
$S^{\alpha[N]}\otimes S^{\alpha[N]}$.  These are precisely the blocks with
first row $n$, so \Cref{lem:frame-spectrum} and
\eqref{eq:harmonicity-and-synthesis} identify this kernel with $K_j$.
This proves \eqref{eq:joint-incidence-gap}; the final assertion follows
because $(n-j+1)/(n+1)\ge1/2$ when $j\le(N+1)/3$.
\end{proof}

Fix $a\in K_r$, let $m=N-r$ and $M=m+1$, and in this subsection write
$b^+_y=L_{r+1}M_ya$, $b^0_y=L_rM_ya$, and
$b^-_y=L_{r-1}M_ya$ when $r\ge1$.  If
$x\notin\dom(h)$, also put
$X_h=[N]\setminus\dom(h)$ and $Y_h=[N]\setminus\im(h)$.  The subspace
formula gives
\begin{align}
    (b^+_y)_{f_z}
    &=\frac{a_f}{\sqrt m}
      \one_{\{x\notin\dom(f),\ y,z\in Y_f\}}
      \left(\one_{\{z=y\}}-\frac1m\right),
    \label{eq:upward-coefficients}\\
    (b^0_y)_f
    &=\frac{a_f}{m}
      \one_{\{x\notin\dom(f),\ y\in Y_f\}},
    \label{eq:middle-absent-coefficients}\\
    (b^0_y)_{h_z}
    &=a_{h_y}
      \one_{\{x\notin\dom(h),\ y,z\in Y_h\}}
      \left(\one_{\{z=y\}}-\frac1M\right),
    \label{eq:middle-present-coefficients}\\
    (b^-_y)_h
    &=\frac{a_{h_y}}{\sqrt M}
      \one_{\{x\notin\dom(h),\ y\in Y_h\}}.
    \label{eq:downward-coefficients}
\end{align}

\begin{lemma}
\label{lem:downward-harmonic}
If $3r+2\le N$, then $b^-_y\in K_{r-1}$ for every $a\in K_r$ and $y$.
The statement is vacuous for $r=0$.
\end{lemma}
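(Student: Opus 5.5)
The plan is to reduce the claim to the two zero-sum relations and then use the spectral gap of \Cref{lem:joint-incidence-gap}, which identifies $K_j$ with the kernel of $\partial_j$. With $j=r-1$, the inequality \eqref{eq:joint-incidence-gap} shows that $\partial_{r-1}b=0$ forces $(I-\Pi_{K_{r-1}})b=0$, provided $1\le r-1\le N/2$. The hypothesis $3r+2\le N$ gives $r-1\le N/2$. So when $r\ge2$ it suffices to show $\Ddown_{r-1}b^-_y=0$ and $\Idown_{r-1}b^-_y=0$.

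The case $r=1$ is handled separately. There $b^-_y\in\Ll_0$, which is one-dimensional and spanned by $\ket\varnothing$. Moreover $\Gamma_0\ket\varnothing=\ket{S_N}\ne0$, since $\Hh_0$ is the span of $\ket{S_N}$. Hence $K_0=\ran(\Gamma_0^\dagger)=\Ll_0$, and the claim is trivial.

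For $r\ge2$, start from the explicit coefficients \eqref{eq:downward-coefficients}:
\[
(b^-_y)_h=M^{-1/2}a_{h[x\mapsto y]}
\]
when $x\notin\dom(h)$ and $y\notin\im(h)$, and the coefficient is zero otherwise.

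\textbf{Domain relation.} Fix $g\in\I_{r-2}$ and $u\notin\dom(g)$, and consider $\sum_{b\notin\im(g)}(b^-_y)_{g[u\mapsto b]}$.
\begin{itemize}
\item If $u=x$, or $x\in\dom(g)$, or $y\in\im(g)$, every term vanishes.
\item Otherwise, the nonzero terms are those with $b\ne y$, and they are $M^{-1/2}a_{g[x\mapsto y][u\mapsto b]}$. Set $h=g[x\mapsto y]\in\I_{r-1}$. Then the index set $\{b\notin\im(g),\,b\ne y\}$ is exactly $\{b\notin\im(h)\}$, and $u\notin\dom(h)$.
\end{itemize}
So in the second case the sum equals $M^{-1/2}\sum_{b\notin\im(h)}a_{h[u\mapsto b]}$. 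This vanishes by the first harmonicity identity in \eqref{eq:harmonicity-and-synthesis}, since $a\in K_r$.

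\textbf{Image relation.} Fix $g\in\I_{r-2}$ and $b\notin\im(g)$, and sum over $u\notin\dom(g)$.
\begin{itemize}
\item If $b=y$, every term vanishes because $y$ is already in the image.
\item If $x\in\dom(g)$ or $y\in\im(g)$, every term vanishes as well.
\item Otherwise, the nonzero terms are those with $u\ne x$. Reindexing with $h=g[x\mapsto y]$ turns the sum into $M^{-1/2}\sum_{u\notin\dom(h)}a_{h[u\mapsto b]}$, where $b\notin\im(h)$.
\end{itemize}
This vanishes by the second harmonicity identity.

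Thus $\partial_{r-1}b^-_y=0$, and the gap lemma gives $b^-_y\in K_{r-1}$.

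The only delicate point is the bookkeeping. One must check that the excluded index ($b=y$, respectively $u=x$) is exactly the one removed when passing from $g$ to $h=g[x\mapsto y]$, so that the sums match the harmonicity sums over $\I_{r-1}$ exactly. One must also confirm that the hypothesis of \Cref{lem:joint-incidence-gap} is met. Only the kernel statement \eqref{eq:joint-incidence-gap} is needed, not the coercivity bound, so the condition required is just $1\le r-1\le N/2$.
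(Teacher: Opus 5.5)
Your proof is correct and follows the paper's argument: you handle $r=1$ via $K_0=\Ll_0$, and for $r\ge2$ you substitute \eqref{eq:downward-coefficients} into $\Ddown_{r-1}$ and $\Idown_{r-1}$, reducing each entry to a harmonicity sum of $a$ over the extension $h=g[x\mapsto y]$, then invoke \Cref{lem:joint-incidence-gap} to identify $\ker\partial_{r-1}$ with $K_{r-1}$. Your case analysis, in which the excluded index $b=y$ or $u=x$ is exactly the pair added in $h$, spells out what the paper compresses into ``with the edge $x\mapsto y$ held fixed.''
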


\begin{proof}
For $r=1$, this follows from $K_0=\Ll_0$.  For $r\ge2$, substituting
\eqref{eq:downward-coefficients} into either incidence map at level $r-1$
produces, up to the common factor $M^{-1/2}$, one of the two zero-sum
relations in \eqref{eq:harmonicity-and-synthesis}, with the edge
$x\mapsto y$ held fixed.  Thus $\partial_{r-1}b^-_y=0$.  Since
$r-1\le N/2$, \Cref{lem:joint-incidence-gap} gives
$\ker\partial_{r-1}=K_{r-1}$.
\end{proof}

\begin{lemma}
\label{lem:incidence-calculation}
Assume $r\le m$.  Stack
\[
    \Theta_{y,r}a
    =\bigl(
       \Idown_{r+1}b^+_y,
       \Ddown_{r+1}b^+_y,
       \Idown_rb^0_y,
       \Ddown_rb^0_y
      \bigr).
\]
Then, as operators on $K_r$,
\begin{equation}
\label{eq:incidence-family-bounds}
    \operatorname{Col}(\Theta_{\cdot,r})^2\le\frac8m,
    \qquad
    \operatorname{Row}(\Theta_{\cdot,r})^2\le\frac{22}{m}.
\end{equation}
\end{lemma}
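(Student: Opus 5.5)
The plan is to compute each of the four stacked components of $\Theta_{y,r}a$ explicitly from the coefficient formulas \eqref{eq:upward-coefficients}--\eqref{eq:downward-coefficients}. I will then simplify them using the two zero-sum relations \eqref{eq:harmonicity-and-synthesis} satisfied by $a\in K_r$, and finally estimate the two family norms by counting.

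\textbf{The upward pieces.} Fix $g\in\I_r$ and $b\notin\im(g)$, and consider $(\Idown_{r+1}b^+_y)_{g,b}=(N-r)^{-1/2}\sum_{u\notin\dom(g)}(b^+_y)_{g[u\mapsto b]}$. A term is nonzero only if $x\in\dom(g[u\mapsto b])$, and there are two cases.
\begin{enumerate}[label=(\alph*)]
\item If $u=x$, then $x\notin\dom(g)$ and the single surviving term is $m^{-1}a_g(\one_{\{b=y\}}-1/m)$, using $m=N-r$.
\item If $u\neq x$, then $x\in\dom(g)$ and $g=f[x\mapsto z]$ with $f$ ranging over $h[u\mapsto b]$, where $h=g\setminus\{x\}$. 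In this case the sum over $u$ is a domain zero-sum for $a$ with the single term $u=x$ missing. Harmonicity therefore collapses it to $-a_{h[x\mapsto b]}$, multiplied by $m^{-1}(\one_{\{z=y\}}-1/m)$.
\end{enumerate}
The same reduction applies to $\Ddown_{r+1}b^+_y$, with the roles of domain and image exchanged and the image zero-sum used instead.

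\textbf{The middle pieces.} For $\Idown_rb^0_y$ and $\Ddown_rb^0_y$ I would do the analogous bookkeeping. The absent-$x$ coefficients \eqref{eq:middle-absent-coefficients} are $a_f/m$. The present-$x$ coefficients \eqref{eq:middle-present-coefficients} are $a_{h_y}(\one_{\{z=y\}}-1/M)$. After one application of harmonicity, every incidence sum again reduces to at most a constant number of individual coefficients of $a$.

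\textbf{Structure of the result.} The outcome of both steps is that each $\Theta_{y,r}$ splits as $\Theta_{y,r}=D_y+A_y$.
\begin{itemize}
\item $D_y$ is a ``diagonal'' part. Its entries are of the form $c\,m^{-1/2}a_{\cdot}$ or $c\,m^{-1}a_{\cdot}$ and are supported on coordinates that record the pair $x\mapsto y$, so different $y$ produce orthogonal outputs.
\item $A_y$ is an ``averaged'' part carrying an explicit factor $1/m$ or $1/M$. Apart from this factor it is independent of $y$, except through indicators like $\one_{\{y\in Y_f\}}$.
\end{itemize}

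\textbf{Column bound.} For $\operatorname{Col}$ I would expand $\sum_y\|\Theta_{y,r}a\|^2$ and count. Each coefficient $a_f$ appears in $O(1)$ output coordinates per $y$ with weight $O(m^{-1/2})$ when $y$ is tied to $f$, and with weight $O(m^{-1})$ for each of the roughly $m$ other values of $y$. This gives $\sum_y\|\Theta_{y,r}a\|^2\le 8\|a\|^2/m$ once the constants are tallied across the four components. I will use $r\le m$ so that $M/m$ and similar ratios are at most $2$.

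\textbf{Row bound, the main obstacle.} For $\operatorname{Row}$ I will bound $\|\sum_y\Theta_{y,r}^\dagger w_y\|^2$ for an arbitrary family $(w_y)$. The diagonal parts have orthogonal ranges across $y$, so for them $\operatorname{Row}$ is controlled by $\max_y\|D_y\|^2=O(1/m)$. The averaged parts add coherently over $y$. I will write $\sum_yA_y^\dagger w_y$ as a $y$-independent operator, of norm $O(1/m)$, applied to $\sum_y w_y$ restricted to admissible $y$, and then apply Cauchy--Schwarz in the form $\|\sum_y w_y\|^2\le m\sum_y\|w_y\|^2$. This gives $O(m\cdot m^{-2})=O(1/m)$. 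Combining the two parts via $\|D+A\|^2\le 2\|D\|^2+2\|A\|^2$ and carefully counting the overlapping $z=y$ terms is where the larger constant $22$ arises. I expect this row bookkeeping, and in particular correctly isolating which indicator terms genuinely depend on $y$, to be the delicate part of the argument.
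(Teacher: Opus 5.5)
Your route is the paper's. You substitute the coefficient formulas into the four incidence maps and collapse each incidence sum with a zero sum; your cases (a) and (b) reproduce the paper's upward image rows, up to indicators. You then bound $\operatorname{Col}$ by expanding $\sum_y\|\Theta_{y,r}a\|^2$ and counting, and $\operatorname{Row}$ by Cauchy--Schwarz over $y$. The column plan is fine, but the row step as written has two concrete errors.

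First, since $\operatorname{Row}(T)^2=\|\sum_yT_yT_y^\dagger\|$, the quantity to control is $\|\sum_y\Theta_{y,r}a^{(y)}\|^2$ for inputs $a^{(y)}\in K_r$. The quantity you wrote, $\|\sum_y\Theta_{y,r}^\dagger w_y\|^2$, is governed by $\|\sum_y\Theta_{y,r}^\dagger\Theta_{y,r}\|=\operatorname{Col}(\Theta_{\cdot,r})^2$, so you would be reproving the column bound. Your ``orthogonal ranges'' step is false in that adjoint form. With $D_y=e_ye_1^\dagger$ the ranges are orthogonal and $\max_y\|D_y\|=1$, yet $\|\sum_yD_y^\dagger e_y\|^2=N^2$ while $\sum_y\|e_y\|^2=N$. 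Moreover, your harmonically simplified entries hold only for inputs in $K_r$. Dropping the dagger fixes this, and your diagonal argument is then right: the $\one_{\{b=y\}}$ and $\one_{\{z=y\}}$ pieces land in rows indexed by $y$, contributing $\max_y\|D_y\|^2=O(1/m)$.

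Second, the averaged part is \emph{not} independent of $y$ apart from indicators, so there is no fixed operator to apply to $\sum_ya^{(y)}$. Several entries depend on $y$ through which coefficient of $a$ they read.

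For $\Ddown_{r+1}b^+_y$ the reduction is not the mirror image of yours. Rows based at an $x$-absent $p$ vanish because $\sum_{c\in Y_p}(\one_{\{c=y\}}-\frac1m)=0$, not by harmonicity. At a row $(h[x\mapsto z],u)$ with $u\ne x$, the image zero sum is missing the \emph{two} terms $c=z$ and $c=y$. This gives the entry $-\frac{\one_{\{y\in Y_h\}}}{m}(\one_{\{z=y\}}-\frac1m)(A_z+\one_{\{y\ne z\}}A_y)$ with $A_c=a_{h[u\mapsto c]}$, and the $A_y$ piece reads a different coefficient for each $y$.

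Likewise, $\Idown_rb^0_y$ spreads $-M^{-3/2}a_{p[x\mapsto y]}$ over every $b$; after summing over $y$ this is of the same order $1/M$ as the diagonal term. And $\Ddown_rb^0_y$ contains $-\frac{1}{m\sqrt M}a_{p[u\mapsto y]}$ and $M^{-3/2}a_{h[u\mapsto z,x\mapsto y]}$.

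Replace your mechanism with the paper's. Apply Cauchy--Schwarz over $y$ coordinate by coordinate, then count the rows in which a fixed $a^{(y)}_f$ occurs: once, at most $r$ times, or at most $m$ times, depending on the term. Alternatively, use the cruder $\operatorname{Row}(A)^2\le\sum_y\|A_y\|^2$. This is $O(1/m)$ because each $\|A_y\|=O(1/m)$ and $N=m+r\le2m$.

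Finally, your split $\|D+A\|^2\le2\|D\|^2+2\|A\|^2$ also breaks the projector $I_m-m^{-1}J_m$ into two pieces, and so loses constant factors. The paper avoids this by bounding the four stacked components separately, so their squares add, and by using $\|I_m-m^{-1}J_m\|=1$. You would still have to check that your tally lands under $22/m$.
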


\begin{proof}
For $r=0$, the middle incidence terms vanish and the downward term does
not exist.  Here $m=N$, and
\[
    \sum_y\|\Idown_1b^+_y(a)\|^2
       =\frac{N-1}{N^2}\|a\|^2,
    \qquad
    \left\|\sum_y\Idown_1b^+_y(a^{(y)})\right\|^2
       \le\frac1{N^2}\sum_y\|a^{(y)}\|^2,
\]
while $\Ddown_1b^+_y=0$.  This proves the lemma at $r=0$.  Assume from
now on that $r\ge1$.  Substitution of
\eqref{eq:upward-coefficients}--\eqref{eq:middle-present-coefficients} into
the incidence maps, followed by the two harmonic zero sums, gives the
following nonzero rows.  For the upward image incidence,
\begin{align}
    (\Idown_{r+1}b^+_y)_{p,b}
    &=\frac{a_p}{m}
      \one_{\{x\notin\dom(p),\ y,b\in Y_p\}}
      \left(\one_{\{b=y\}}-\frac1m\right),
    \label{eq:image-up-a}\\
    (\Idown_{r+1}b^+_y)_{h_z,b}
    &=-\frac{a_{h_b}}m
      \one_{\{y\in Y_h\setminus\{b\},\ b\ne z\}}
      \left(\one_{\{z=y\}}-\frac1m\right).
    \label{eq:image-up-b}
\end{align}
For upward domain incidence, the rows based at $p$ with
$x\notin\dom(p)$ vanish.  If $p=h_z$,
$u\in X_h\setminus\{x\}$, and $A_b=a_{h[u\mapsto b]}$, then
\begin{equation}
\label{eq:domain-up}
    (\Ddown_{r+1}b^+_y)_{h_z,u}
    =-\frac{\one_{\{y\in Y_h\}}}{m}
      \left(\one_{\{z=y\}}-\frac1m\right)
      \left(A_z+\one_{\{y\ne z\}}A_y\right),
    \qquad
    \sum_{b\in Y_h}A_b=0.
\end{equation}
For the middle level,
\begin{align}
    (\Idown_rb^0_y)_{p,b}
    &=\frac{\one_{\{y\in Y_p\}}}{\sqrt M}
      \left[
        a_{p_y}\left(\one_{\{b=y\}}-\frac1M\right)
        -\one_{\{b\ne y\}}\frac{a_{p_b}}m
      \right],
    \label{eq:image-middle}\\
    (\Ddown_rb^0_y)_{p,u}
    &=-\frac{
       \one_{\{x\notin\dom(p),\ u\ne x,\ y\in Y_p\}}
      }{m\sqrt M}
      a_{p[u\mapsto y]},
    \label{eq:domain-middle-a}\\
    (\Ddown_rb^0_y)_{h_z,u}
    &=\frac{
       \one_{\{y\in Y_h\setminus\{z\},\
                    u\in X_h\setminus\{x\}\}}
      }{M^{3/2}}
      a_{h[u\mapsto z,x\mapsto y]}.
    \label{eq:domain-middle-b}
\end{align}
Image-incidence rows based at $p$ with $x\in\dom(p)$ vanish.

We first estimate the column square function.  The two expressions in
\eqref{eq:image-up-a}--\eqref{eq:image-up-b} are multiples of
$I_m-m^{-1}J_m$, where $J_m$ is the $m\times m$ all-ones matrix; this is
an orthogonal projector of squared Frobenius norm $m-1$.  Counting the
possible removed edge gives
\begin{equation}
\label{eq:image-up-column-bound}
    \sum_y\|\Idown_{r+1}b^+_y\|^2
    \le\frac1m\|a\|^2.
\end{equation}
For a zero-sum vector $(A_b)_{b\in Y_h}$, a direct expansion of
\eqref{eq:domain-up} gives
\[
    \sum_{y,z}|(\Ddown_{r+1}b^+_y)_{h_z,u}|^2
    =\frac{m^2-1}{m^4}\sum_b|A_b|^2.
\]
Every coefficient of $a$ occurs for at most $r$ choices of the removed
edge, whence
\begin{equation}
\label{eq:domain-up-column-bound}
    \sum_y\|\Ddown_{r+1}b^+_y\|^2
    \le\frac r{m^2}\|a\|^2.
\end{equation}
Using $|s+t|^2\le2|s|^2+2|t|^2$ in
\eqref{eq:image-middle}, and directly counting the occurrences of each
coefficient in \eqref{eq:domain-middle-a}--\eqref{eq:domain-middle-b},
gives
\begin{align}
\label{eq:middle-incidence-column-bounds}
    \sum_y\|\Idown_rb^0_y\|^2
       &\le\frac4M\|a\|^2,\\
    \sum_y\|\Ddown_rb^0_y\|^2
       &\le
       \left(\frac r{m^2M}+\frac{r-1}{M^3}\right)\|a\|^2.
\end{align}
Since $r\le m$ and $M=m+1$, the right-hand sides of
\eqref{eq:image-up-column-bound}--
\eqref{eq:middle-incidence-column-bounds} sum to at most
$8m^{-1}\|a\|^2$.

For the row square function, take an arbitrary family of inputs
$a^{(y)}\in K_r$ and write $b^\bullet_y(a^{(y)})$ for the preceding
linear expressions evaluated at $a^{(y)}$.  Sum the corresponding
incidence vectors over $y$.
Equations \eqref{eq:image-up-a}--\eqref{eq:image-up-b} again apply
$I_m-m^{-1}J_m$, now to the vector of inputs indexed by $y$, and yield
\begin{equation}
\label{eq:image-up-row-bound}
    \left\|\sum_y\Idown_{r+1}b^+_y(a^{(y)})\right\|^2
    \le\frac1{m^2}\sum_y\|a^{(y)}\|^2.
\end{equation}
Expanding \eqref{eq:domain-up} gives three terms,
\[
    -\frac{m-1}{m^2}a^{(z)}_{h[u\mapsto z]},
    \qquad
    \frac1{m^2}\sum_{y\ne z}a^{(y)}_{h[u\mapsto z]},
    \qquad
    \frac1{m^2}\sum_{y\ne z}a^{(y)}_{h[u\mapsto y]}.
\]
Cauchy--Schwarz, followed by counting the rows in which a fixed
coefficient occurs, gives
\begin{equation}
\label{eq:domain-up-row-bound}
    \left\|\sum_y\Ddown_{r+1}b^+_y(a^{(y)})\right\|^2
    \le\frac9{m^2}\sum_y\|a^{(y)}\|^2.
\end{equation}
Indeed, the second displayed term counts a fixed coefficient at most $r$
times, and the third at most $m$ times.  Splitting
\eqref{eq:image-middle} into its diagonal term and two off-diagonal sums
and applying Cauchy--Schwarz to each gives
\begin{equation}
\label{eq:image-middle-row-bound}
    \left\|\sum_y\Idown_rb^0_y(a^{(y)})\right\|^2
    \le\frac9M\sum_y\|a^{(y)}\|^2.
\end{equation}
In \eqref{eq:domain-middle-a}, each coefficient is counted once after the
Cauchy--Schwarz sum over $y$; in \eqref{eq:domain-middle-b}, it is counted
at most $r-1$ times.  Consequently,
\begin{align}
\label{eq:domain-middle-row-bound}
    \left\|\sum_y\Ddown_rb^0_y(a^{(y)})\right\|^2
    &\le
      \left(\frac1{m^2}+\frac{r-1}{M^2}\right)
      \sum_y\|a^{(y)}\|^2\nonumber\\
    &\le\frac2m\sum_y\|a^{(y)}\|^2.
\end{align}
The right-hand sides of
\eqref{eq:image-up-row-bound}--\eqref{eq:domain-middle-row-bound} sum to
at most $22m^{-1}\sum_y\|a^{(y)}\|^2$.  Taking the supremum over the
common input for the column bound and over the tuple $(a^{(y)})_y$ for
the row bound proves \eqref{eq:incidence-family-bounds}.
\end{proof}

\begin{proof}[Proof of \eqref{eq:leakage-estimate-summary}]
For each positive $j\in\{r,r+1\}$, the hypothesis $3r+2\le N$ implies
$j\le(N+1)/3$.  By \Cref{lem:joint-incidence-gap}, the restriction of
$\partial_j$ to $K_j^\perp$ has a left inverse of norm at most $\sqrt2$.
Since $\partial_j\Pi_{K_j}=0$, applying this left inverse to the two
relevant components of \Cref{lem:incidence-calculation} gives column
constant at most $2\cdot8=16$ and row constant at most
$2\cdot22=44$.  The uniform value $64$ proves
\eqref{eq:leakage-estimate-summary}.  The $r=0$ case is obtained by
observing that the middle leakage is zero and omitting the nonexistent
downward term.  Finally, composing the maps on $K_r$ furnished by
\Cref{lem:incidence-calculation} with the contraction $V_r$ preserves
both family bounds, so the conclusion applies to the operators in
\eqref{eq:orthogonal-leakage}.
\end{proof}

\subsection{Proof of \Cref{eq:middle-estimate-summary,eq:down-estimate-summary}}
\label{subsec:fixed-pair}

In this subsection, we will prove \Cref{eq:middle-estimate-summary,eq:down-estimate-summary}. On $\Ll_r$, define the diagonal selectors
\begin{align}
\label{eq:fixed-pair-selectors}
    R^{00}_{x,y;r}
    &=\sum_{\substack{f\in\I_r\\
             x\notin\dom(f),\ y\notin\im(f)}}\proj f,
    &
    R^{11}_{x,y;r}
    &=\sum_{\substack{f\in\I_r\\f(x)=y}}\proj f.
\end{align}
Put
\begin{equation}
\label{eq:fixed-pair-frames}
    A_{y,r}=\Pi_rQ_y\Pi_r,
    \qquad
    H^{00}_{y,r}=\frac1m\Gamma_rR^{00}_{x,y;r}\Gamma_r^\dagger,
    \qquad
    H^{11}_{y,r}=\Gamma_rR^{11}_{x,y;r}\Gamma_r^\dagger.
\end{equation}

\begin{lemma}
\label{lem:exact-middle-frame}
For $a,c\in K_r$,
\begin{equation}
\label{eq:middle-quadratic-form}
    \ip{c}{L_rM_yL_ra}
    =\left\langle c,
      \left(\frac1mR^{00}_{x,y;r}+R^{11}_{x,y;r}\right)a
      \right\rangle.
\end{equation}
Consequently,
\begin{equation}
\label{eq:middle-error-frame-formula}
    h^0_{y,r}
    =V_r\left(
       H^{00}_{y,r}+H^{11}_{y,r}-A_{y,r}G_r
      \right)G_r^{-1}
    =V_r\left(
       H^{00}_{y,r}G_r^{-1}
       +H^{11}_{y,r}G_r^{-1}-A_{y,r}
      \right).
\end{equation}
\end{lemma}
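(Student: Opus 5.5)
The quadratic-form identity \eqref{eq:middle-quadratic-form} can be read off the subspace formula \eqref{eq:subspace-projector}, and the frame formula then follows by substituting $a=V_r\psi$ and using the identities of \Cref{lem:canonical-frame-properties}.

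\textbf{Step 1: the quadratic form.} Decompose $\Ll$ into the mutually orthogonal subspaces $\Span(\ket h,\{\ket{h_z}:z\in Y_h\})$ with $x\notin\dom(h)$. A level-$r$ basis vector $\ket f$ lies in exactly one of them, in one of two ways.
\begin{enumerate}[label=(\roman*)]
    \item If $x\notin\dom(f)$, then $\ket f$ is the base vector of its own subspace ($h=f$, $s=m$). The base coefficient of $u_{f,y}$ is $m^{-1/2}$, so $L_rM_yL_r$ acts there as $\frac1m\proj f$ when $y\in Y_f$, and as zero otherwise. This is exactly the $R^{00}$ term.
    \item If $f(x)=z$, then $f=h_z$ with $h\in\I_{r-1}$ and $s=M$. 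On the level-$r$ part of this subspace, $L_rM_yL_r$ is the rank-one operator $\proj{w}$ with $w=\ket{h_y}-\frac1M\sum_{z}\ket{h_z}$ (when $y\in Y_h$).
\end{enumerate}
For $a,c\in K_r$, harmonicity \eqref{eq:harmonicity-and-synthesis} gives $\sum_z a_{h_z}=0$ and $\sum_z c_{h_z}=0$. Hence $\ip{w}{a}=a_{h_y}$ and $\ip{c}{w}=\overline{c_{h_y}}$, so the form on this subspace reduces to $\overline{c_{h_y}}a_{h_y}$. Summing over $h$ gives exactly $\ip{c}{R^{11}a}$, since $f(x)=y$ forces $f=h_y$. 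Adding the two cases proves \eqref{eq:middle-quadratic-form}.

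\textbf{Step 2: the frame formula.} Since the identity holds for all $c\in K_r$, it gives
\[
\Pi_{K_r}L_rM_yL_r a=\Pi_{K_r}\Bigl(\tfrac1mR^{00}+R^{11}\Bigr)a .
\]
Put $a=V_r\psi=\Gamma_r^\dagger G_r^{-1}\psi$ and use $\Pi_{K_r}=V_r\Gamma_r$ from \eqref{eq:canonical-right-inverse}. This yields
\[
\Pi_{K_r}b^0_y=V_r\bigl(H^{00}_{y,r}+H^{11}_{y,r}\bigr)G_r^{-1}\psi .
\]
For the subtracted term in \eqref{eq:middle-range-error}, write $V_r\Pi_rQ_y\psi=V_rA_{y,r}\psi$, valid since $\psi\in\Hh_r$, and then $V_rA_{y,r}\psi=V_rA_{y,r}G_rG_r^{-1}\psi$. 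Subtracting gives the first expression in \eqref{eq:middle-error-frame-formula}. The second expression is the same operator rewritten, using that $G_r$ is invertible on $\Hh_r$ (\Cref{lem:frame-spectrum}).

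\textbf{Expected obstacle.} The main point needing care is the cross terms in case (ii): the off-diagonal $-\frac1M$ entries of $\proj{w}$ must vanish. This uses harmonicity of \emph{both} $a$ and $c$, which is why the identity is stated only as a quadratic form on $K_r$. Some bookkeeping is also needed for the indicator conditions: $y\in Y_h$ and $x\notin\dom(h)$ must match those in $R^{11}$ and $R^{00}$. The edge case $r=0$ needs separate treatment; there $R^{11}=0$ and only case (i) occurs.
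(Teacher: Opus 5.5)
Your proposal is correct and follows essentially the same route as the paper: a blockwise reading of \eqref{eq:subspace-projector}, with harmonicity of both $a$ and $c$ removing the $-\frac1M$ cross terms, then substituting $a=V_r\psi$ and $\Pi_{K_r}=V_r\Gamma_r$ and subtracting $V_rA_{y,r}$. The only difference is cosmetic: you compute the sesquilinear form directly, whereas the paper computes the quadratic form and then polarizes.
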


\begin{proof}
On an $x$-absent level-$r$ subspace, the level-$r$ coordinate of
$u_{f,y}$ is $m^{-1/2}\ket f$, giving
$m^{-1}\overline{c_f}a_f$.  On a subspace based at
$h\in\I_{r-1}$, the level-$r$ part of $u_{h,y}$ is
$\ket{h_y}-M^{-1}\sum_z\ket{h_z}$.  The two harmonic zero sums in
\eqref{eq:harmonicity-and-synthesis} reduce its inner products with $a$
and $c$ to $a_{h_y}$ and $c_{h_y}$.  This proves
\eqref{eq:middle-quadratic-form} by polarization.

Since $\Pi_{K_r}=V_r\Gamma_r$ and
$V_r\psi=\Gamma_r^\dagger G_r^{-1}\psi$, the $K_r$ part of the middle
compressed transition is
\[
    \Pi_{K_r}L_rM_yL_rV_r
    =V_r(H^{00}_{y,r}+H^{11}_{y,r})G_r^{-1}.
\]
Subtracting $V_rA_{y,r}$ proves
\eqref{eq:middle-error-frame-formula}.
\end{proof}

\begin{lemma}
\label{lem:middle-fixed-pair-comparison}
If $3r+2\le N$, then
\[
    \operatorname{Col}(h^0_{\cdot,r})^2,
    \ \operatorname{Row}(h^0_{\cdot,r})^2
    \le\frac{32}{m}.
\]
\end{lemma}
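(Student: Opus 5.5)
We start from the exact formula \eqref{eq:middle-error-frame-formula},
\[
    h^0_{y,r}=V_r\bigl(H^{00}_{y,r}G_r^{-1}+H^{11}_{y,r}G_r^{-1}-A_{y,r}\bigr),
\]
and use $\|V_r\|\le1$, so it suffices to bound the family norms of $D_y:=H^{00}_{y,r}G_r^{-1}+H^{11}_{y,r}G_r^{-1}-A_{y,r}$ acting on $\Hh_r$. The guiding idea is to compare each compressed transition with the purified one by fixing the queried pair $x\mapsto y$. On completion states we have $A_{y,r}=\Pi_rQ_y\Pi_r$. Moreover, $Q_y\ket{T_f}=\ket{T_f}$ when $f(x)=y$, and $Q_y\ket{T_f}=m^{-1/2}\ket{T_{f[x\mapsto y]}}$ when $f$ is $(x,y)$-free. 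From these two facts we expect the decomposition
\[
    A_{y,r}=\Gamma_r R^{11}\Gamma_r^\dagger+\tfrac1m\Gamma_r R^{00}\Gamma_r^\dagger+(\text{corrections}),
\]
where the corrections come from the parts of $\Pi_r\ket{T_{f[x\mapsto y]}}$ already expressed at level $r$. To make this precise, we will compute $\Pi_rQ_y\Gamma_r=\Pi_rQ_yE_r$ on $R^{11}$, on $R^{00}$, and on the mixed sectors (for example $x\in\dom f$ but $f(x)\ne y$, where $Q_y$ kills the state). The result should be an identity of the form
\[
    A_{y,r}G_r=H^{11}_{y,r}+H^{00}_{y,r}+\Gamma_r(\text{mixed-sector terms})\Gamma_r^\dagger,
\]
and hence $D_y=-(\text{mixed terms})G_r^{-1}$ together with commutator-type terms $[A_{y,r},G_r]G_r^{-1}$.

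The second step handles the commutator. The projector $Q_{x,y}$ is, under the left--right action, the projector onto functions invariant under the stabilizer pair $S_{N-1}\times S_{N-1}$ restricted to the $(x,y)$-matching coset. We therefore restrict $G_r$ to the fixed-pair picture: conditioned on $\varphi(x)=y$, the space is the group algebra of $S_{N-1}$, and the completions of databases containing $x\mapsto y$ become completions of size $r-1$ databases on $N-1$ points. \Cref{lem:frame-spectrum} applied with $N-1$ gives the eigenvalues $g_{m,\mu}$ with $m=N-r$ unchanged, for the level-$(r-1)$ blocks of $S_{N-1}$. The key comparison is then between $g_{m,\mu}$ for $\lambda=(m,\mu)\vdash N$ and the eigenvalues of the restricted blocks obtained by branching, that is, after removing one box from $\mu$. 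By \eqref{eq:frame-eigenvalue}, removing a box in column $c$ changes the product by the single factor $(m-c+1+\mu'_c)/(m-c+\mu'_c)$. This ratio is $1+O(1/m)$, since $c\le r$ and $3r+2\le N$ give $m-c+1\ge m-r+1\gtrsim m/2$. It follows that $G_r$ commutes with $Q_y$ up to an operator of norm $O(1/m)$ in the appropriate weighted sense, and $\|G_r^{-1}\|\le1$ from \eqref{eq:frame-extremes}.

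The third step passes to Col and Row bounds rather than individual operator norms. We will write each error term as $\sum_y$ of operators factoring through $Q_y$ (or through $R^{00}_{x,y}$, $R^{11}_{x,y}$), which are mutually orthogonal projectors in $y$ for fixed $x$. For such a family $T_y=Q_yX_y$ with $\|X_y\|\le\epsilon$ uniformly, we have
\[
    \operatorname{Row}(T)^2=\Bigl\|\sum_y Q_yX_yX_y^\dagger Q_y\Bigr\|\le\epsilon^2,
\]
and symmetrically for Col when the projector sits on the right. Terms of the form $O(m^{-1/2})$ then contribute $O(1/m)$ to the squares. The $\frac1m R^{00}$ terms are summed over $y$ using the fact that each $f$ lies in $R^{00}_{x,y}$ for at most $m$ values of $y$, which yields a bound of the form $m\cdot m^{-2}=1/m$. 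Collecting the constants from three or four such terms and using $(\sum\sqrt{c_i})^2$ should give the stated $32/m$.

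I expect the main obstacle to be the first step: obtaining an exact algebraic identity relating $A_{y,r}G_r$ to $H^{00}+H^{11}$. The difficulty is that $\Pi_r\ket{T_{f[x\mapsto y]}}$ for $|f|=r$ lies partly at level $r+1$ and partly at level $r$, and the level-$r$ piece must be re-expanded in the frame. My first attempt will be to use \Cref{lem:exact-synthesis} sandwiched with $\Pi_r$: it gives $\Pi_rQ_yE_ra=\Gamma_rL_rM_ya+\Pi_rE_{r+1}L_{r+1}M_ya$, which is the same as the middle line of the triangular identities. Then \Cref{lem:exact-middle-frame} identifies the first term with $(H^{00}+H^{11})G_r^{-1}$ applied on $K_r$. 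Consequently $h^0_{y,r}=-V_r\Pi_rE_{r+1}z^+_{y,r}$ up to the range projection, and we can reuse the leakage bound \eqref{eq:leakage-estimate-summary} together with $\|\Pi_rE_{r+1}\|$ restricted to $K_{r+1}^\perp$, bounded via the frame eigenvalues. If $\|\Pi_rE_{r+1}\|$ turns out too large, we will fall back on the fixed-pair eigenvalue-ratio argument of the second step.
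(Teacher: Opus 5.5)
\textbf{There is a genuine gap: neither of your routes controls the actual error term.}

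\emph{Your main route loses a factor of $m$.} Your last paragraph reaches an exact identity: $h^0_{y,r}=-V_r\Pi_rE_{r+1}z^+_{y,r}$, which is the middle line of \eqref{eq:correct-triangular-identities}. It cannot be closed with the leakage estimate.
\begin{itemize}
\item Since $E_{r+1}K_{r+1}\subseteq\Hh_{r+1}$, the operator $\Pi_rE_{r+1}$ already vanishes on $K_{r+1}$. Restricting to $K_{r+1}^\perp$ therefore gains nothing.
\item On a block $\lambda=(m,\mu)$ of $\Hh_r$, the box $(1,m)$ of $\lambda/(m-1)$ is disconnected from $\mu$, because $\mu_1\le r<m$. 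So $E_{r+1}E_{r+1}^\dagger$ acts there by $\binom N{r+1}f^{\lambda/(m-1)}/f^\lambda=m\,g_{m,\mu}$.
\item Hence $\|\Pi_rE_{r+1}\|^2\ge N$; already for $r=0$, $\Pi_0E_1$ has norm $\sqrt N$.
\item Multiplying by $\operatorname{Col}(z^+_{\cdot,r})^2\le64/m$ gives at best $\operatorname{Col}(h^0_{\cdot,r})^2=O(1)$, not $O(1/m)$. Any smallness of $h^0$ must come from cancellation inside $\Pi_rE_{r+1}z^+$, which a product-of-norms bound cannot see.
\end{itemize}

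\emph{Your fallback bounds the wrong quantity.} The commutator $[A_{y,r},G_r]$ is indeed $O(1/m)$, but $D_y$ is not a commutator. Since $\Gamma_r\Gamma_r^\dagger=E_r\Gamma_r^\dagger$ and $Q_yE_r=E_rR^{11}_{x,y;r}+m^{-1/2}E_{r+1}\iota_{xy}R^{00}_{x,y;r}$, where $\iota_{xy}\ket f=\ket{f[x\mapsto y]}$, one gets exactly
\[
D_y=H^{00}_{y,r}G_r^{-1}-X_yG_r^{-1},\qquad X_y=A_{y,r}G_r-H^{11}_{y,r}=m^{-1/2}\Pi_rE_{r+1}\iota_{xy}R^{00}_{x,y;r}\Gamma_r^\dagger .
\]
There are no mixed-sector terms, since $Q_y\ket{T_f}=0$ when $x\in\dom f$ and $f(x)\ne y$. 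You never bound $X_y$. Your branching discussion also only removes boxes from $\mu$. It misses the children $\alpha$ with $\alpha_1=m-1$, on which $H^{11}_{y,r}$ vanishes and $X_y$ is all of $A_{y,r}G_r$.

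\textbf{The missing idea} is to compare $A_{y,r}$ with $H^{11}_{y,r}G_r^{-1}$ directly on each $S_{N-1}$ branch $\alpha$, and to treat $H^{00}$ separately.
\begin{itemize}
\item By Schur orthogonality, $A_{y,r}$ has on branch $\alpha$ the rank-one seed $v^\alpha(v^\alpha)^\dagger$, with $v^\alpha_\lambda=\sqrt{f^\lambda/(Nf^\alpha)}$.
\item $E_rR^{11}_{x,y;r}E_r^\dagger$ is the size-$(r-1)$ completion frame on $S_{N-1}$; you already have this identification. Hence $H^{11}_{y,r}$ has seed $h_\alpha v^\alpha(v^\alpha)^\dagger$, along the \emph{same} vector.
\item So $A_{y,r}-H^{11}_{y,r}G_r^{-1}$ has seed $v^\alpha(w^\alpha)^\dagger$ with $w^\alpha_\lambda=v^\alpha_\lambda(1-h_\alpha/g_\lambda)$.
\item When $\alpha_1=m$, your eigenvalue ratio bounds this by $1/(m-r+2)$. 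When $\alpha_1=m-1$, you need the separate hook-length bound $f^\lambda/(Nf^\alpha)\le1/(m-\mu_1)$ on the unique parent.
\item The free term $H^{00}_{y,r}G_r^{-1}$ is not matched against any part of $A_{y,r}$. It is small by positivity alone: $0\preceq H^{00}_{y,r}\preceq G_r/m$ and $\sum_yH^{00}_{y,r}\preceq G_r$.
\end{itemize}
This gives per-$y$ errors of order $1/m$. The family norms then follow by crude summation over $N\le3m/2$ values of $y$. Your Step~3 orthogonality in $y$ is not available anyway, because the $\Pi_r$ and $\Gamma_r$ sandwich does not commute with $Q_y$.
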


\begin{proof}
If $r=0$, then $G_0=I$, $R^{11}_{x,y;0}=0$, and
$H^{00}_{y,0}=A_{y,0}=I/N$, so $h^0_{y,0}=0$.  Assume henceforth that
$r\ge1$.

It is enough to bound the operator in parentheses in the second expression
of \eqref{eq:middle-error-frame-formula}, because $V_r$ is a contraction.
We first compare the true assertion with the fixed-edge frame.

By left-right symmetry, fix $x=y=N$ and write $H=S_{N-1}$.  Under the
$H\times H$ stabilizer action, Young branching is multiplicity free:
\[
    S^\lambda\!\downarrow_H
    =\bigoplus_{\alpha\nearrow\lambda}S^\alpha.
\]
In normalized matrix-coefficient bases, let $\rho^\lambda$ denote a
unitary matrix realization of $S^\lambda$.  Schur orthogonality gives
\begin{equation}
\label{eq:fixed-coset-seed}
    \frac{\sqrt{f^\lambda f^\nu}}{N!}
    \sum_{h\in H}
       \overline{\rho^\lambda(h)_{ab}}
       \rho^\nu(h)_{cd}
    =\frac{\sqrt{f^\lambda f^\nu}}{Nf^\alpha}
      \delta_{ac}\delta_{bd}
\end{equation}
on a common diagonal $S^\alpha\otimes S^\alpha$ branch, and zero on
unequal branches.  Thus $A_{y,r}$ has, on the multiplicity space of this
branch, the rank-one seed
\[
    v^\alpha(v^\alpha)^\dagger,
    \qquad
    v^\alpha_\lambda
    =\sqrt{\frac{f^\lambda}{Nf^\alpha}},
\]
where only parents $\lambda$ with $\lambda_1=m$ are retained.

Deleting the fixed edge $x\mapsto y$ identifies $H^{11}_{y,r}$ with the
size-$(r-1)$ completion frame on $S_{N-1}$.  Indeed, if
$\pi(x)=\sigma(x)=y$, then
\[
    \bra\pi E_rR^{11}_{x,y;r}E_r^\dagger\ket\sigma
    =\frac1{m!}
      \binom{|\{u\ne x:\pi(u)=\sigma(u)\}|}{r-1},
\]
and the matrix element is zero otherwise.  Hence its seed is
$h_\alpha v^\alpha(v^\alpha)^\dagger$, where
\begin{equation}
\label{eq:smaller-frame-scalar}
    h_\alpha
    =\binom{N-1}{r-1}
       \frac{f^{\alpha/(m)}}{f^\alpha}.
\end{equation}
Every child of a partition $\lambda$ with $\lambda_1=m$ has first row
$m$ or $m-1$.

If $\alpha_1=m$, write $\alpha=(m,\eta)$.  Every relevant parent
$\lambda=(m,\mu)$ is obtained by adding one lower box to $\eta$, say in
column $d$.  The product formula \eqref{eq:frame-eigenvalue} gives
\begin{equation}
\label{eq:fixed-edge-frame-ratio}
    1-\frac{h_\alpha}{g_{m,\mu}}
    =\frac1{m-d+1+\eta'_d+1}
    \le\frac1{m-r+2}.
\end{equation}
Since $\|v^\alpha\|\le1$, the norm of the corresponding seed of
$A_{y,r}-H^{11}_{y,r}G_r^{-1}$ is at most the final quantity in
\eqref{eq:fixed-edge-frame-ratio}.  More explicitly, this seed is
$v^\alpha(w^\alpha)^\dagger$, where
\[
    w^\alpha_\lambda
      =v^\alpha_\lambda
       \left(1-\frac{h_\alpha}{g_\lambda}\right),
\]
where $g_\lambda$ is the eigenvalue of $G_r$ on the parent $\lambda$.

If $\alpha_1=m-1$, then $h_\alpha=0$ and there is a unique relevant parent
$\lambda=(m,\mu)$.  Hook cancellation gives
\begin{align}
\label{eq:first-row-branch-weight}
    \frac{f^\lambda}{Nf^\alpha}
    &=\prod_{c=1}^{m-1}
      \frac{m-c+\mu'_c}{m-c+1+\mu'_c}\\
    &\le
      \prod_{c=\mu_1+1}^{m-1}\frac{m-c}{m-c+1}
      =\frac1{m-\mu_1}
      \le\frac1{m-r}.
\end{align}
The hypothesis gives $m\ge2r+2$, so in both cases
\begin{equation}
\label{eq:fixed-edge-per-outcome}
    \|A_{y,r}-H^{11}_{y,r}G_r^{-1}\|\le\frac2m.
\end{equation}
As $N=m+r\le3m/2$, the crude row and column estimates obtained by summing
the squares of \eqref{eq:fixed-edge-per-outcome} are both at most $6/m$.

It remains to bound the free-frame term
$C_y=H^{00}_{y,r}G_r^{-1}$.  Positivity gives
\[
    0\preceq H^{00}_{y,r}\preceq\frac1mG_r,
    \qquad
    \sum_yH^{00}_{y,r}\preceq G_r,
\]
where the second relation uses
\[
    \sum_yR^{00}_{x,y;r}
      =m\sum_{\substack{f\in\I_r\\x\notin\dom(f)}}\proj f
      \preceq mI_{\Ll_r}.
\]
Since $\|G_r\|\le e$ in the present range, we have
$(H^{00}_{y,r})^2\preceq(e/m)H^{00}_{y,r}$.  Thus, without assuming that
$G_r$ and $H^{00}_{y,r}$ commute,
\begin{align*}
    \sum_yC_y^\dagger C_y
      &=G_r^{-1}\left(\sum_y(H^{00}_{y,r})^2\right)G_r^{-1}
        \preceq\frac emG_r^{-1},\\
    \sum_yC_yC_y^\dagger
      &=\sum_yH^{00}_{y,r}G_r^{-2}H^{00}_{y,r}
        \preceq\sum_y(H^{00}_{y,r})^2
        \preceq\frac emG_r.
\end{align*}
Here $G_r^{-1}\preceq I$, and hence $G_r^{-2}\preceq I$, was used.
Taking norms yields
$\operatorname{Col}(C)^2\le e/m$ and
$\operatorname{Row}(C)^2\le e^2/m$.
Combining these bounds with
$\|X+Y\|^2\le2\|X\|^2+2\|Y\|^2$ gives
\[
    \operatorname{Col}(h^0_{\cdot,r})^2
      \le\frac{2(6+e)}m<\frac{18}m,
    \qquad
    \operatorname{Row}(h^0_{\cdot,r})^2
      \le\frac{2(6+e^2)}m<\frac{27}m.
\]
The common bound $32/m$ proves the lemma and
\eqref{eq:middle-estimate-summary}.
\end{proof}

\begin{lemma}
\label{lem:downward-frame-comparison}
If $3r+2\le N$, then
\[
    \operatorname{Col}(h^-_{\cdot,r})^2,
    \ \operatorname{Row}(h^-_{\cdot,r})^2
    \le\frac{24}{m}.
\]
\end{lemma}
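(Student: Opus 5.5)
Following the proof of \Cref{lem:middle-fixed-pair-comparison}, I will first write $h^-_{y,r}$ in frame form, then compare the frame operator with the true assertion $\Pi_{r-1}Q_y\Pi_r$ branch by branch.

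\textbf{Frame formula.} Since $b^-_y\in K_{r-1}$ by \Cref{lem:downward-harmonic}, we have $b^-_y=\Pi_{K_{r-1}}b^-_y=V_{r-1}\Gamma_{r-1}b^-_y$. By \eqref{eq:downward-coefficients}, $b^-_y$ has coefficients $M^{-1/2}a_{h_y}$ on bases $h$ with $x\notin\dom(h)$ and $y\notin\im(h)$. So the map $a\mapsto b^-_y$ is $M^{-1/2}$ times the partial isometry $D_y:\Ll_r\to\Ll_{r-1}$ that deletes the edge $x\mapsto y$, restricted to $R^{11}_{x,y;r}$. With $a=V_r\psi=\Gamma_r^\dagger G_r^{-1}\psi$ this gives
\[
    h^-_{y,r}=V_{r-1}\Bigl(M^{-1/2}\Gamma_{r-1}D_yR^{11}_{x,y;r}\Gamma_r^\dagger G_r^{-1}-\Pi_{r-1}Q_y\Pi_r\Bigr).
\]
Because $V_{r-1}$ is a contraction, it suffices to bound the bracket, call it $W_y$, in both family norms.

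\textbf{Branch computation.} By left--right symmetry I fix $x=y=N$ and restrict to $H\times H$ with $H=S_{N-1}$. Both terms in $W_y$ intertwine this action, so by multiplicity-free branching they act on each common branch $S^\alpha\otimes S^\alpha$ through its multiplicity space, which is indexed by parents $\lambda\nearrow\alpha$ in $\Hh_r$ and parents in $\Hh_{r-1}$.
\begin{itemize}
\item[(i)] By the Schur computation \eqref{eq:fixed-coset-seed}, $\Pi_{r-1}Q_y\Pi_r$ has seed $v^\alpha_{(r-1)}(v^\alpha_{(r)})^\dagger$, with entries $\sqrt{f^\lambda/(Nf^\alpha)}$ and parents restricted to first row $m+1$ and $m$ respectively.
\item[(ii)] The frame term factors through the completion frame on $S_{N-1}$ at size $r-1$, as in the identification used for $H^{11}$. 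Its seed is therefore $h_\alpha\,v^\alpha_{(r-1)}(v^\alpha_{(r)})^\dagger$ scaled by $1/g_\lambda$ on the right, with $h_\alpha$ as in \eqref{eq:smaller-frame-scalar}.
\item[(iii)] A branch contributes to both sides only if $\alpha$ has a parent with first row $m+1$ and a parent with first row $m$. This forces $\alpha_1=m$, so $\alpha=(m,\eta)$. The $\Hh_{r-1}$ parent is then unique (a box added to the first row), and the $\Hh_r$ parents are $(m,\mu)$ with $\mu$ obtained from $\eta$ by adding one box.
\item[(iv)] The seed of $W_y$ is $v^\alpha_{(r-1)}(w^\alpha)^\dagger$ with $w_\lambda=v_\lambda(1-h_\alpha/g_\lambda)$. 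By \eqref{eq:fixed-edge-frame-ratio}, $|1-h_\alpha/g_\lambda|\le 1/(m-r+2)$, and both vectors $v$ have norm at most $1$. This gives $\|W_y\|\lesssim 1/m$.
\end{itemize}
The obstacle I expect is the normalization: the $M^{-1/2}$ and the level shift $m\to m+1$ must match exactly the prefactor of $v_{(r-1)}$ and $h_\alpha$. I would verify this on the matrix elements $\bra\pi E_{r-1}D_yR^{11}E_r^\dagger\ket\sigma$ by counting common completions.

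\textbf{Remaining branches and family norms.} Branches that meet only one side are where a crude bound is needed.
\begin{itemize}
\item The frame term vanishes there.
\item On branches with $\alpha_1=m-1$, no parent lies in $\Hh_{r-1}$, so the assertion term is zero.
\item On branches with $\alpha_1=m+1$, no parent lies in $\Hh_r$, so again there is nothing to bound.
\end{itemize}
Hence the per-$y$ bound is $\|W_y\|\le 2/m$. Summing squares over the at most $N\le 3m/2$ values of $y$, and using $m\ge 2r+2$, gives Col and Row constants of about $6/m$. If this crude count fails, I would instead use $\sum_y Q_y=I$ and positivity, as was done for $C_y$ in \Cref{lem:middle-fixed-pair-comparison}. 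Either route lands under the stated $24/m$.
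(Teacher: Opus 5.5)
\textbf{Gap in step (ii): the scalar on the frame term is wrong.} Your architecture matches the paper's. You write $h^-_{y,r}=V_{r-1}W_y$, branch under $S_{N-1}\times S_{N-1}$, and note that only branches $\alpha=(m,\eta)$ carry a channel, from parents $\lambda=(m,\mu)$ with $\mu=\eta+\square_d$ to the unique $\nu=(m+1,\eta)$. You then bound the rank-one seed per $y$ and sum crudely. The problem is step (ii), which is the heart of the lemma: the scalar multiplying the frame seed is not $h_\alpha$.

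The analogy with $H^{11}_{y,r}$ breaks for two reasons. First, the output $\Gamma_{r-1}\ket h$ is not confined to the coset $\{\pi:\pi(x)=y\}$. Second, after the factor $M^{-1/2}$, the matrix elements $\bra\pi E_{r-1}D_yR^{11}_{x,y;r}E_r^\dagger\ket\sigma$ carry normalization $1/(m+1)!$, not the $1/m!$ of the size-$(r-1)$ frame on $S_{N-1}$.

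The correct identification comes from bandedness. For $\phi\in\Hh_r$ and $h\in\I_{r-1}$,
\[
\braket{T_h}{\Pi_{r-1}Q_y\phi}=\braket{T_h}{Q_y\phi}=M^{-1/2}\braket{T_{h[x\mapsto y]}}{\phi}.
\]
Hence $M^{-1/2}D_yR^{11}_{x,y;r}\Gamma_r^\dagger=\Gamma_{r-1}^\dagger q_{y,r}$ on $\Hh_r$, where $q_{y,r}=\Pi_{r-1}Q_y\Pi_r$. Your bracket is therefore $W_y=G_{r-1}q_{y,r}G_r^{-1}-q_{y,r}$. The scalar on the seed is the eigenvalue $g_{m+1,\eta}$ of $G_{r-1}$ on $\nu$, not $h_\alpha=g_{m,\eta}$ computed on $S_{N-1}$. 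These already differ at $r=2$: $g_\nu=N/(N-1)$, whereas $h_\alpha=(N-1)/(N-2)$.

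\textbf{Consequence for step (iv) and the fix.} \eqref{eq:fixed-edge-frame-ratio} does not apply, so your bound $|1-h_\alpha/g_\lambda|\le 1/(m-r+2)$ controls the wrong quantity. What you need is the adjacent-level ratio \eqref{eq:downward-frame-ratio}, obtained by applying \eqref{eq:frame-eigenvalue} at levels $r-1$ and $r$. With $q_c=m-c+1$,
\[
\rho_{\lambda,\nu}=\frac{g_{m+1,\eta}}{g_{m,\mu}}=\frac{q_d}{q_d+1}\prod_{c\neq d}\Bigl(1-\frac{\mu'_c}{(q_c+1)(q_c+\mu'_c)}\Bigr).
\]
Its deviation from $1$ can exceed your claimed bound. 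For $N=10$, $r=2$, $\lambda=(8,2)$, one gets $1-\rho=11/81>1/8$.

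Bounding the first factor and the product separately gives
\[
1-\rho\le\frac{m+1}{(m-r+1)(m-r+2)}\le\frac4m \quad\text{when } m\ge 2r+2.
\]
So the per-$y$ norm is $4/m$, not your $2/m$. The crude summation then gives $16N/m^2\le 24/m$, which is exactly the stated constant rather than the $6/m$ you predicted. With (ii) and (iv) replaced this way, the rest of your argument goes through unchanged.
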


\begin{proof}
The case $r=0$ is vacuous, so assume $r\ge1$.
Let $q_{y,r}=\Pi_{r-1}Q_y\Pi_r$.  If
$a=V_r\psi=\Gamma_r^\dagger G_r^{-1}\psi$, then
\eqref{eq:downward-coefficients} gives, coefficient by coefficient,
\[
    b^-_y
    =\Gamma_{r-1}^\dagger q_{y,r}G_r^{-1}\psi.
\]
Indeed, for $h\in\I_{r-1}$,
\[
    \left(\Gamma_{r-1}^\dagger
       q_{y,r}G_r^{-1}\psi\right)_h
    =\braket{T_h}{Q_yG_r^{-1}\psi}
    =\frac1{\sqrt{m+1}}
       \braket{T_{h[x\mapsto y]}}{G_r^{-1}\psi},
\]
with both sides zero when the extension is not legal.  In the first
equality, the projector $\Pi_{r-1}$ may be removed because
\Cref{lem:bandedness} gives
$Q_yG_r^{-1}\psi\perp\Hh_{\le r-2}$.  Therefore
\begin{equation}
\label{eq:downward-frame-residual}
    h^-_{y,r}
    =V_{r-1}
      \left(G_{r-1}q_{y,r}G_r^{-1}-q_{y,r}\right).
\end{equation}

It remains to estimate the operator in parentheses.  In the common
$S_{N-1}$ branching basis, a nonzero channel goes from
$\lambda=(m,\mu)$ to
$\nu=(m+1,\mu-\square_d)$, where $\square_d$ is a removable box in column
$d$.  Let $q_c=m-c+1$.  Applying
\eqref{eq:frame-eigenvalue} at the two adjacent levels gives
\begin{equation}
\label{eq:downward-frame-ratio}
    \rho_{\lambda,\nu}
    :=\frac{g_{\nu}}{g_{\lambda}}
    =\frac{q_d}{q_d+1}
      \prod_{\substack{1\le c\le m\\c\ne d}}
      \left(
        1-\frac{\mu'_c}
        {(q_c+1)(q_c+\mu'_c)}
      \right).
\end{equation}
All factors lie in $[0,1]$, and
\begin{align}
\label{eq:downward-ratio-bound}
    0\le1-\rho_{\lambda,\nu}
    &\le
      \frac1{m-r+2}
      +\frac{r}{(m-r+1)(m-r+2)}
      \le\frac4m.
\end{align}
For completeness, on a common $S_{N-1}$ branch $\alpha$, the assertion
$q_{y,r}$ has a rank-one seed $u^\alpha(v^\alpha)^\dagger$: the output
weight vector $u^\alpha$ is supported on the unique relevant $\nu$, and
the input vector has coordinates
\[
    v^\alpha_\lambda=\sqrt{\frac{f^\lambda}{Nf^\alpha}}.
\]
Both weight vectors have norm at most one by the branching rule.  The
residual in \eqref{eq:downward-frame-residual} has seed
\[
    u^\alpha
    \bigl((\delta^\alpha_\lambda v^\alpha_\lambda)_\lambda\bigr)^\dagger,
    \qquad
    \delta^\alpha_\lambda=\rho_{\lambda,\nu}-1.
\]
It therefore has norm at most
$\max_\lambda|\delta^\alpha_\lambda|\le4/m$ for each $y$.
Since $N\le3m/2$, both its row and column family norms squared are at most
$16N/m^2\le24/m$.  Finally $V_{r-1}$ is a contraction, so
\eqref{eq:downward-frame-residual} proves the lemma and
\eqref{eq:down-estimate-summary}.
\end{proof}

\subsection{Ideal cipher extension}
We can readily extend this framework to the ideal cipher model as follows.

\begin{corollary}[Ideal-cipher extension]
\label{cor:ideal-cipher-extension}
For the ideal cipher of \Cref{def:compressed-cipher-oracle}, give each key $k\in\mathcal K$
an independent compressed database $\Ll^{(k)}$ and let a query under key
$k$ act on the $k$th database factor.  On
\[
    \PermO_{\mathrm{IC}}
      =\bigotimes_{k\in\mathcal K}\PermO^{(k)},
    \qquad
    \Ll_{\mathrm{IC}}
      =\bigotimes_{k\in\mathcal K}\Ll^{(k)},
\]
the tensor-product map
\[
    \wtV_{\mathrm{IC}}
      =\bigotimes_{k\in\mathcal K}\wtV^{(k)}
\]
is an isometry.  With degree defined as the sum of the database degrees
over all keys, the isometric one-query bound and the soundness bound of
\Cref{thm:intertwiner-soundness} hold without any dependence on
$|\mathcal K|$.
\end{corollary}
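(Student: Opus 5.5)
The plan is to reduce everything to the single-key estimates of \Cref{thm:intertwiner-soundness} by exploiting the tensor-product structure. Two facts handle the static part. First, $\wtV_{\mathrm{IC}}$ is an isometry because a tensor product of isometries is one: $\wtV_{\mathrm{IC}}^\dagger\wtV_{\mathrm{IC}}=\bigotimes_k\wtV^{(k)\dagger}\wtV^{(k)}=I$. Second, by \eqref{eq:isometry-initial-state} applied factorwise, it maps the purified initial state $\bigotimes_k\ket{S_N}$ to $\bigotimes_k\ket\varnothing$.

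For the per-key degree structure, let $\Pi^{(k)}_{r}$ denote the level projector on $\PermO^{(k)}$. I define the total-degree projector
\[
    \Pi^{\mathrm{IC}}_{\le t}=\sum_{\sum_k r_k\le t}\bigotimes_k\Pi^{(k)}_{r_k}.
\]
By \Cref{lem:bandedness}, a query under key $k$ changes only $r_k$, and only by at most one. Hence after $j$ queries, even ones in superposition over keys, the purified state lies in the range of $\Pi^{\mathrm{IC}}_{\le j}$.

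The one-query bound comes next. Fix $(x,k)$. The purified query $\PermO_{k,x}$ acts as $P_x$ on factor $k$ and as the identity elsewhere, and by \Cref{def:compressed-cipher-oracle} $\ccO_{x,k}$ does the same with $\cpO_x$. So the error operator factors as
\[
    \bigl(\cpO_x\wtV^{(k)}-\wtV^{(k)}P_x\bigr)\otimes\bigotimes_{k'\ne k}\wtV^{(k')}.
\]
I restrict this to $\Pi^{\mathrm{IC}}_{\le t}$ and decompose the input by the degree tuple of the other keys. On each block the $k$-th factor has degree $r_k\le t$, so its norm is at most the bound in \eqref{eq:isometric-low-degree-bound}. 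The remaining factors are isometries, and they map different degree tuples to orthogonal subspaces, since $\wtV^{(k')}$ preserves degree.

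The outputs for distinct tuples of the other keys are therefore orthogonal, and the blocks add in squared norm. This gives the bound
\[
    \frac{60}{\sqrt{N-t}}+\frac{t+1}{N-2t-1},
\]
with no dependence on $|\mathcal K|$. Inverse queries are handled by conjugating with $F^{\otimes\mathcal K}$ and $\mathsf{Inv}^{\otimes\mathcal K}$, exactly as in the single-permutation case.

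For coherent control over $(x,k)$, the controlled operator is a direct sum over the orthogonal control branches $\ket{x,k}$. Its restricted norm is the maximum over branches, so it obeys the same bound. The hybrid argument in the proof of \Cref{thm:intertwiner-soundness} then goes through verbatim with $\Pi^{\mathrm{IC}}_{\le j}$ in place of $\Pi_{\le j}$, yielding the same soundness bound.

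The step needing the most care is the orthogonality bookkeeping in the one-query bound. The other factors $\wtV^{(k')}$ carry the input into $\bigotimes_{k'}\Ll^{(k')}$, and I must check that decomposing by degree tuples keeps the contributions orthogonal on the output side as well. Otherwise a naive triangle inequality would lose a factor of the number of tuples. I would verify this first, using degree preservation of each $\wtV^{(k')}$ and the fact that the $k$-th factor acts only on its own register.
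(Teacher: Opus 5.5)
Your proposal is correct and follows essentially the same route as the paper. Both arguments factor the keyed query error as the single-key error on factor $k$ tensored with the isometries $\wtV^{(k')}$ on the other keys, apply \eqref{eq:isometric-low-degree-bound} because total degree at most $t$ forces degree at most $t$ on factor $k$, and rerun the hybrid argument using that each query raises total degree by at most one. The orthogonality bookkeeping you flag as the delicate step is valid but unnecessary: since $\Pi^{\mathrm{IC}}_{\le t}=(\Pi^{(k)}_{\le t}\otimes I)\Pi^{\mathrm{IC}}_{\le t}$ and the operator norm is multiplicative under tensor products, the bound follows at once, which is all the paper's remark that ``the other factors carry isometries'' amounts to.
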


\begin{proof}
A query controlled by $k$ is a direct sum over keys and acts nontrivially
on only one tensor factor.  On total degree at most $t$, the degree of that
factor is at most $t$, so \eqref{eq:isometric-low-degree-bound} applies.
The other factors carry isometries and do not change the norm.  A query
raises total degree by at most one, after which the hybrid proof of
\Cref{thm:intertwiner-soundness} applies verbatim.
\end{proof}

\begin{corollary}
\label{cor:sqrt-N-lower-bound}
Any algorithm which distinguishes the compressed and purified experiments
with constant advantage makes $\Omega(\sqrt N)$ forward or inverse
queries.  The same statement holds in the ideal-cipher model.
\end{corollary}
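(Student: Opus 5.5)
The statement in \Cref{cor:sqrt-N-lower-bound} should follow directly from the soundness bound \eqref{eq:main-soundness-bound} and its ideal-cipher analogue in \Cref{cor:ideal-cipher-extension}. The plan is to argue by contrapositive. Suppose an algorithm makes $q$ forward or inverse queries and distinguishes the compressed experiment from the purified one with advantage at least a constant $\varepsilon>0$. By the remark following \Cref{def:perms-purified-operators}, the purified experiment is operationally identical to querying a classically sampled uniform permutation. The advantage in question is therefore exactly the left-hand side of \eqref{eq:main-soundness-bound}.

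The first step handles the range restriction. The bound \eqref{eq:main-soundness-bound} is only stated for $3q-1\le N$. If $3q-1>N$, then already $q>(N+1)/3\ge\sqrt N/3$ for $N\ge1$, so $q=\Omega(\sqrt N)$ and there is nothing to prove. In the remaining regime, $3q-1\le N$ gives $N-q+1\ge 2N/3$ and $N-2q+1\ge (N+1)/3$. So the right-hand side of \eqref{eq:main-soundness-bound} is at most
\[
    \frac{120q}{\sqrt{2N/3}}+\frac{3q(q+1)}{N+1}
    \le C\left(\frac{q}{\sqrt N}+\frac{q^2}{N}\right)
\]
for an absolute constant $C$, using $q+1\le 2q$ for $q\ge1$. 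If $q<c\sqrt N$ for a small constant $c$, this is at most $C(c+c^2)$, which is less than $\varepsilon$ once $c$ is chosen small enough in terms of $\varepsilon$. That contradicts the assumed advantage, so $q\ge c\sqrt N$.

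For the ideal cipher, the argument is the same. Apply the hybrid argument with the tensor-product isometry $\wtV_{\mathrm{IC}}$ from \Cref{cor:ideal-cipher-extension}. The total degree grows by at most one per query, and each query touches a single key's factor whose degree is at most the total degree. So the same distance bound \eqref{eq:isometric-hybrid-distance} holds with no dependence on $|\mathcal K|$, and the arithmetic above repeats verbatim.

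There is no real obstacle here. The only point needing care is that the soundness bound has a validity range, and that range must be disposed of separately, as in the first step above.
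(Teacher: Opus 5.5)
Your proposal is correct and follows essentially the same route as the paper: there too, one observes that both terms of the soundness bound are $o(1)$ when $q=o(\sqrt N)$, and the ideal-cipher case is obtained by invoking the tensor-product extension. Your explicit treatment of the regime $3q-1>N$ and the quantitative choice of $c$ in terms of $\varepsilon$ simply make precise what the paper's asymptotic one-liner leaves implicit. The paper does not need that case split, because $q=o(\sqrt N)$ already forces $3q-1\le N$ for large $N$.
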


\begin{proof}
If $q=o(\sqrt N)$, then both terms on the right-hand side of
\eqref{eq:main-soundness-bound} are $o(1)$.  Therefore constant advantage
is impossible.  The ideal-cipher statement follows from
\Cref{cor:ideal-cipher-extension}.
\end{proof}

\section{Applications}
\label{sec:search-applications}

We apply the soundness theorem to search problems in the random permutation
and ideal cipher models. We first give an explicit-constant version of the
fundamental lemma of~\cite{Carolan-Comp}, relating an algorithm's output to
the contents of the compressed database. We then bound the probability of
creating a satisfying database and apply the resulting bound to sponge
and Davies--Meyer.

All query counts in this section refer to the swap oracles of
\Cref{def:swap-oracle}. One standard XOR query can be simulated by two swap
queries: compute the oracle value into a blank register, XOR it into the
answer register, and uncompute the auxiliary register. Consequently, bounds
for an algorithm making $q$ standard XOR queries follow by replacing its
query count below by $2q$.

For $3T-1\le N$, write
\begin{equation}
\label{eq:search-soundness-error}
 \delta_N(T)
 =\frac{60T}{\sqrt{N-T+1}}
  +\frac{T(T+1)}{2(N-2T+1)}.
\end{equation}
By \eqref{eq:isometric-hybrid-distance}, the final compressed state of a
$T$-query computation is within Euclidean distance $\delta_N(T)$ of the
image of the corresponding purified state under $\wtV$. The same statement
holds for ideal ciphers. We use this state-distance bound, rather than the
distinguishing-advantage bound, so that the error is added to the square
root of the success probability.

\subsection{The fundamental lemma}
\label{subsec:explicit-fundamental}

A database property is a set $\mathcal B$ of partial permutations. We call
it increasing if $f\in\mathcal B$ and $f\subseteq g$ imply
$g\in\mathcal B$. Let
\[
 B=\sum_{f\in\mathcal B}\ket f\bra f
\]
be its projector. A list of input-output pairs witnesses $\mathcal B$ if
it is consistent with a partial permutation belonging to $\mathcal B$.
Repeated pairs are allowed and impose no additional condition.

\begin{lemma}
\label{lem:explicit-fundamental}
Suppose an algorithm makes at most $q$ queries and outputs a list of at
most $\ell$ input-output pairs. Let $p$ be the probability that the list
witnesses an increasing property $\mathcal B$ and every pair agrees with
the real oracle, on measuring the purification. Let $p_{\mathrm{db}}$ be the probability that the final
database belongs to $\mathcal B$ when the same algorithm uses the
compressed oracle, on measuring the compressed database. If $3(q+\ell)-1\le N$, then
\begin{equation}
\label{eq:explicit-fundamental}
 \sqrt p
 \le \sqrt{p_{\mathrm{db}}}
    +\ell\sqrt{\frac{2}{N-q}}
    +\delta_N(q+\ell).
\end{equation}
The same bound holds for an ideal cipher, with lists of triples
$(k,x,y)$ and increasing properties of the family of keyed databases.
\end{lemma}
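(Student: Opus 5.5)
Following Zhandry's fundamental lemma, I will reduce to a statement about the final database by letting the algorithm verify its own output. Define $\A'$ as the algorithm that runs the original $q$-query algorithm, then queries the oracle (forward, through the swap oracle into a fresh blank register) at each of the $\ell$ inputs in its output list, and keeps the answers. So $\A'$ makes $q+\ell$ queries, which is why the soundness error is $\delta_N(q+\ell)$ and why $3(q+\ell)-1\le N$ is required. In the purified experiment, let $\Pi_{\mathrm{win}}$ project onto runs where each returned answer equals the claimed output and the list witnesses $\mathcal B$. This event has probability exactly $p$, since the verification queries read the true permutation.

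\textbf{Step 1 (transfer to the compressed picture).} Let $\ket{\psi}$ be the final purified state of $\A'$ and $\ket{\chi}$ the final compressed state. By \eqref{eq:isometric-hybrid-distance}, $\|\ket\chi-(I\otimes\wtV)\ket\psi\|\le\delta_N(q+\ell)$. The projector $\Pi_{\mathrm{win}}$ acts only on the algorithm's registers, so it commutes with $\wtV$. Since $\wtV$ is an isometry, $\sqrt p=\|\Pi_{\mathrm{win}}\ket\psi\|\le\|\Pi_{\mathrm{win}}\ket\chi\|+\delta_N(q+\ell)$. It remains to show $\|\Pi_{\mathrm{win}}\ket\chi\|\le\sqrt{p_{\mathrm{db}}}+\ell\sqrt{2/(N-q)}$.

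\textbf{Step 2 (the verification queries write the pairs into the database).} Write $\ket{\chi_q}$ for the compressed state after the first $q$ queries, and split it as $B\ket{\chi_q}+(I-B)\ket{\chi_q}$. The increasing property together with the projections in \eqref{eq:subspace-projector} show that the $\cpO$ queries never remove a defined pair; they only leave it or move it into the $M_{x,\bot}$ branch, which gets re-randomized. I would therefore argue that $\|\Pi_{\mathrm{win}}\,(\text{verification})\,B\ket{\chi_q}\|\le\|B\ket{\chi_q}\|=\sqrt{p_{\mathrm{db}}}$. The later check compares against the database after verification, but since $\mathcal B$ is increasing, the measured database at the end of $\A'$ still lies in $\mathcal B$ whenever it did before.

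The main work is the branch $(I-B)\ket{\chi_q}$. In this branch, at least one output pair $(x_i,y_i)$ is not yet in the database consistently; otherwise the list together with the database would witness $\mathcal B$, because $\mathcal B$ is increasing. When $\cpO_{x_i}$ is applied in a state where $x_i\notin\dom(f)$, the answer register receives a nearly uniform value over $Y_f$, a set of size at least $N-q-\ell$. The projector $\proj{y_i}$ on that answer then has amplitude at most $\sqrt{1/|Y_f|}$, plus the correction terms $-\frac1s\sum_z$ and $\frac1{\sqrt s}\ket h$ in $u_{h,y}$, which contribute a factor of at most $\sqrt2$. This gives a loss of $\sqrt{2/(N-q)}$ per verification query. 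I would make this precise by a hybrid over the $\ell$ verification queries: insert the projector "pair $i$ matches" immediately after query $i$, and at each step bound the part of the amplitude that moves from outside $\mathcal B$-consistency into matching by $\sqrt{2/(N-q)}$, using the triangle inequality. Summing over the $\ell$ steps gives the term $\ell\sqrt{2/(N-q)}$.

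I expect the hardest step to be the careful bookkeeping in Step 2. Pairs in the list can repeat. An output $y_i$ may already lie in $\im(f)$, in which case the answer is zero, and this case helps rather than hurts. The verification queries also act on the database in superposition. Finally, the constant $\sqrt2$ has to be extracted uniformly from $\ket{u_{h,y}}$.

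The ideal cipher case follows the same argument with \Cref{cor:ideal-cipher-extension}, where each verification query acts on a single key's database factor.
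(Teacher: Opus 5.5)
Your overall architecture matches the paper's. You append $\ell$ verification swap queries, transfer to the compressed picture at cost $\delta_N(q+\ell)$ using that the acceptance projector commutes with $\wtV$, and then relate compressed acceptance to $p_{\mathrm{db}}$. Your split $\chi_q=B\chi_q+(I-B)\chi_q$ also works, because the $B$ part is handled by contractivity alone.

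The gap is the $(I-B)$ branch. It carries all the content of the $\ell\sqrt{2/(N-q)}$ term, and you only sketch it heuristically.
\begin{itemize}
\item Your argument ``at least one pair is absent, so its answer is nearly uniform'' charges only the case $x_i\notin\dom(f)$.
\item It ignores that checking a pair that \emph{is} recorded perturbs the database: $M_{x,y}\ket{h_y}=(1-1/s)\ket{u_{h,y}}\ne\ket{h_y}$.
\item It ignores that a recorded wrong value still passes with amplitude $1/s$, and that the $M_i$ need not commute.
\item More seriously, consider a hybrid run along the actual states $M_{i-1}\cdots M_1(I-B)\chi_q$, as ``insert the projector after query $i$'' suggests. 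Those states have databases of size up to $q+i-1$. The per-step loss you can then justify is $\sqrt{2/(N-q-\ell+1)}$, consistent with your own $|Y_f|\ge N-q-\ell$, not the stated $\sqrt{2/(N-q)}$.
\item Charging absent-pair matching and present-pair perturbation separately would also lose a constant factor.
\end{itemize}

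The missing idea is to compare the compressed check with the diagonal recording projector in operator norm, and to order the telescoping so that degree never grows.
\begin{enumerate}
\item On each $\Comp_x$-invariant subspace based at $h$ with $y\notin\im(h)$, $M_{x,y}$ and $Q_{x,y}$ are the rank-one projectors onto $\ket{u_{h,y}}$ and $\ket{h_y}$. Since $\braket{h_y}{u_{h,y}}=1-1/s$, their difference has norm exactly $\sqrt{2/s-1/s^2}$. This gives $\|(M_{x,y}-Q_{x,y})L_{\le q}\|\le\sqrt{2/(N-q)}$.
\item Then use the identity
\[
 M_\ell\cdots M_1-Q_\ell\cdots Q_1=\sum_{i=1}^{\ell}M_\ell\cdots M_{i+1}(M_i-Q_i)Q_{i-1}\cdots Q_1 .
\]
The degree-preserving diagonal $Q$'s sit on the right, so every difference acts at degree at most $q$. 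The $M$'s on the left are contractions.
\item For a witnessing list, $Q_\ell\cdots Q_1\preceq B$ by increasingness. Hence $M_\ell\cdots M_1(I-B)=(M_\ell\cdots M_1-Q_\ell\cdots Q_1)(I-B)$.
\item This yields exactly the bound you need, $\|M_\ell\cdots M_1(I-B)\chi\|\le\ell\sqrt{2/(N-q)}\,\|(I-B)\chi\|$. Summing over output lists uses orthogonality.
\end{enumerate}

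Finally, drop the claims that compressed queries never remove a recorded pair and that the database stays in $\mathcal B$ through verification. Both are false: the $\frac{1}{\sqrt s}\ket h$ component of $\ket{u_{h,y}}$ deletes the pair with amplitude about $1/\sqrt s$. Neither claim is needed.
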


\begin{proof}
We first compare checking a pair against the compressed oracle with
checking whether it occurs in the database. Here $Q_{x,y}$ denotes the
partial-permutation projector from \Cref{def:partial-assertions}, whereas
$M_{x,y}=\Comp_xQ_{x,y}\Comp_x^\dagger$ is its compressed conjugate.
Fix an $x$-absent base $h$ and let $s=N-|h|$. If $y\notin\im(h)$,
these two projectors restrict to
\[
 \ket{h_y}\bra{h_y}
 \quad\text{and}\quad
 \ket{u_{h,y}}\bra{u_{h,y}},
 \qquad
 \ket{u_{h,y}}
 =\ket{h_y}-\frac1s\sum_{z\notin\im(h)}\ket{h_z}
       +\frac1{\sqrt s}\ket h.
\]
Both vectors are normalized and
$\langle h_y\mid u_{h,y}\rangle=1-1/s$. The norm of the difference
of two rank-one orthogonal projectors is the sine of the angle between
their defining vectors. Thus the norm on this subspace is exactly
\[
 \sqrt{1-(1-1/s)^2}
 =\sqrt{\frac2s-\frac1{s^2}}.
\]
If $y\in\im(h)$, both projectors vanish on the subspace. Every subspace
meeting a database of size at most $q$ has $|h|\le q$, so, writing
$L_{\le q}$ for the projector onto $\bigoplus_{j=0}^q\Ll_j$,
\begin{equation}
\label{eq:single-pair-recording}
 \bigl\|(M_{x,y}-Q_{x,y})L_{\le q}\bigr\|
 \le \sqrt{\frac2{N-q}}.
\end{equation}

For a fixed output list, abbreviate its projectors by $M_i,Q_i$.
Checking the listed pairs sequentially with fresh blank answer registers
has acceptance amplitude $M_\ell\cdots M_1$. Indeed, conditioning a
swap query on the answer $y$ applies $M_{x,y}$ to the database.
The $M_i$ need not commute; their product is a contraction, and is not
being treated as an orthogonal projector. In contrast,
$Q_\ell\cdots Q_1$ is the projector asserting that the entire list
occurs in the database. The telescoping identity
\[
 M_\ell\cdots M_1-Q_\ell\cdots Q_1
 =\sum_{i=1}^\ell
 M_\ell\cdots M_{i+1}(M_i-Q_i)Q_{i-1}\cdots Q_1
\]
and \eqref{eq:single-pair-recording} give
\begin{equation}
\label{eq:list-recording}
 \bigl\|(M_\ell\cdots M_1-Q_\ell\cdots Q_1)L_{\le q}\bigr\|
 \le\ell\sqrt{\frac2{N-q}}.
\end{equation}
There is no increase in the degree cutoff on the right: each $Q_i$ is
diagonal and preserves database size. Output-controlled lists obey the
same estimate by direct sums. Lists that do not witness $\mathcal B$
are rejected in both checks.

Append this verifier to the algorithm. It makes at most $q+\ell$
queries. Projection onto its accessible acceptance flag, together with
\eqref{eq:search-soundness-error}, shows that the square roots of its
real and compressed acceptance probabilities differ by at most
$\delta_N(q+\ell)$. Equation~\eqref{eq:list-recording} bounds the latter
by $\sqrt{p_{\mathrm{db}}}+\ell\sqrt{2/(N-q)}$, because a database
containing a witnessing list belongs to $\mathcal B$. This proves
\eqref{eq:explicit-fundamental}.

For an ideal cipher, each subspace belongs to one key. Its number of
available values is $N-|h_k|\ge N-q$, where $h_k$ is that key's base
database. The proof is otherwise unchanged, using the ideal-cipher
state-distance bound from \Cref{cor:ideal-cipher-extension}.
\end{proof}

\subsection{A general search bound}
\label{subsec:explicit-search}

We recall the sparsity framework from \cite{Carolan-Comp,Zhandry19,CFHL21} for proving compressed oracle search lower bounds.
For an increasing property $\mathcal B$ with
$\varnothing\notin\mathcal B$, define its two-sided sparsity by
\begin{align}
\label{eq:explicit-sparsity}
 s_t=\max_{\substack{|h|\le t\\h\notin\mathcal B}}\Bigl\{
 \max\Bigl\{&
 \max_{x\notin\dom(h)}
 \bigl|\{y\notin\im(h):h[x\mapsto y]\in\mathcal B\}\bigr|,
 \nonumber\\[-2mm]
 &\max_{y\notin\im(h)}
 \bigl|\{x\notin\dom(h):h[x\mapsto y]\in\mathcal B\}\bigr|
 \Bigr\}\Bigr\}.
\end{align}
For a keyed database, take an additional maximum over the key being
queried, and use total database size in the condition $|h|\le t$.

\begin{lemma}[Creating a satisfying database]
\label{lem:explicit-progress}
After $q<N$ compressed queries, the probability that the database
belongs to $\mathcal B$ satisfies
\begin{equation}
\label{eq:explicit-progress}
 \sqrt{p_{\mathrm{db}}}
 \le 2\sqrt2\sum_{t=0}^{q-1}\sqrt{\frac{s_t}{N-t}}.
\end{equation}
This holds for forward and inverse queries, including coherent choices
of input, direction, and key.
\end{lemma}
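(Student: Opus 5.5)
We bound the growth of the database's weight on $\mathcal B$ one query at a time. We then sum the per-query increments using the triangle inequality on amplitudes.

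Let $B$ be the projector onto $\mathcal B$ and $\ket{\psi_t}$ the joint state after $t$ compressed queries. Between queries the algorithm's unitaries act only on accessible registers, so they commute with $B$. Since $\varnothing\notin\mathcal B$, we have $\|B\ket{\psi_0}\|=0$. It therefore suffices to prove the one-query estimate
\[
 \bigl\|B\,\cpO\,\ket{\psi}\bigr\|\le\bigl\|B\ket\psi\bigr\|+2\sqrt2\sqrt{\frac{s_t}{N-t}}\,\|\psi\|
\]
for $\ket\psi$ supported on databases of size at most $t$. By the bandedness lemma, after $t$ queries the database has size at most $t$, so this support condition holds. To prove the estimate, we write $\cpO\ket\psi=\cpO B\ket\psi+\cpO(I-B)\ket\psi$. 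The first term has norm $\|B\ket\psi\|$ because $\cpO$ is unitary. It then remains to show $\|B\,\cpO_x(I-B)L_{\le t}\|\le 2\sqrt2\sqrt{s_t/(N-t)}$. The bound must hold uniformly in $x$, and also in the answer-register state, with the coherent control over $x$ handled by a direct sum.

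For the key step, we decompose $\Ll$ into the orthogonal subspaces $\Span(\ket h,\{\ket{h_y}:y\in Y_h\})$ indexed by $x$-absent bases $h$. Both $\cpO_x$ (through $\Comp_x$ and the $M_{x,y}$) and $B$ act within each such subspace, so it suffices to work on a single subspace with $|h|\le t$. Because $\mathcal B$ is increasing, the case $h\in\mathcal B$ contributes nothing: every element of that subspace lies in $\mathcal B$, so $I-B$ kills it. We may therefore assume $h\notin\mathcal B$. Then $B$ restricts to the projector onto $\{\ket{h_y}:y\in W\}$, where $W=\{y:h_y\in\mathcal B\}$ and $|W|\le s_t$. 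Using the explicit vectors $u_{h,y}$ from \eqref{eq:subspace-projector}, we expand $\cpO_x=I\otimes M_\bot+\sum_y S_y\otimes M_y$ on an input $\ket\alpha\otimes(c_0\ket h+\sum_{z\notin W}c_z\ket{h_z})$. We then read off the coefficients on $\ket{h_w}$ for $w\in W$. Each coefficient collects three contributions: amplitude $c_0/\sqrt s$ from the $\ket h$ component via $\ket{+_{x,h}}$, terms of size $O(1/s)\sum_z c_z$ from the uniform parts of the $u_{h,y}$, and terms from the answer-register swaps $S_y$, whose coefficients are weighted by $1/\sqrt s$ or $1/s$. Here $s=N-|h|\ge N-t$. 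Summing the squares over $w\in W$ and over answer values, and applying Cauchy--Schwarz to the sums over $z$, gives a bound of order $|W|/s$. Two groups of terms appear, which produces the constant $2\sqrt2$ after the usual $\|a+b\|^2\le2\|a\|^2+2\|b\|^2$.

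Inverse queries follow by conjugating with the flip $F$, which maps $\mathcal B$ to the inverted property. This swaps the two branches in the definition \eqref{eq:explicit-sparsity} of $s_t$, which is exactly why that definition takes the maximum over both. For keyed databases, a query touches only one key's factor, and total size bounds that key's size, so $s\ge N-t$ still holds. The main obstacle is the careful bookkeeping in the key step. One must ensure that contributions from the answer-register swap, entangled with arbitrary answer states, do not incur a factor of $N$ when summed over $y$. The plan is to use the orthogonality of distinct answer values, so that the sum over $y$ reduces to a sum over $w\in W$ only, giving $s_t$ rather than $N$.
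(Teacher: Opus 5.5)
Your outer argument matches the paper's. This covers the telescoping $\|B\psi_{t+1}\|\le\|B\psi_t\|+\|B\cpO_x(I-B)L_{\le t}\|$, the split into the $\Comp_x$-invariant subspaces based at $x$-absent $h$ with $|h|\le t$, and discarding $h\in\mathcal B$ by monotonicity. It also covers $|W|\le s_t$, $s\ge N-t$, the flip for inverse queries, and direct sums for coherent control and keys. The gap is in the one step with real content: the per-subspace bound $\|B\cpO_x(I-B)\|\le 2\sqrt{2|W|/s}$. You assert it rather than derive it, and the justification you sketch does not hold up.

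First, testing on product inputs $\ket\alpha\otimes(c_0\ket h+\sum_{z\notin W}c_z\ket{h_z})$ does not bound an operator norm on $\A\otimes\Span(\ket h,\{\ket{h_y}\})$. You need a general input $\sum_a\ket a\ket{\phi_a}$.

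More seriously, your plan to use orthogonality of answer values so that only $w\in W$ contribute is false. Take the input $(s-|W|)^{-1/2}\sum_{y\in Y_h\setminus W}\ket y\ket{h_y}$, which is supported only on answer values outside $W$. Then $B\cpO_x(I-B)$ maps it to $-\frac{(s-1)\sqrt{s-|W|}}{s^2}\ket\bot\sum_{w\in W}\ket{h_w}$, plus terms of total squared norm $O(|W|/s^2)$ in the sectors $\ket y$ with $y\in Y_h\setminus W$. Its squared norm is therefore about $|W|/s$ when $|W|\ll s$. The reason is that the $\ket\bot$ answer sector aggregates over all $s$ available outputs, and every sector $\ket a$ with $a\in Y_h$ receives something. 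You can control these terms by Cauchy--Schwarz over all of $Y_h$. But then there are three answer sectors ($\ket w$, $\ket\bot$, and $\ket a$ for $a\in Y_h$), not two groups of terms, and nothing in your sketch shows the resulting constant is $2\sqrt2$.

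The missing idea, which makes the answer register irrelevant, is the factorization the paper uses: $\cpO_x=(I_{\A}\otimes\Comp_x)U_x(I_{\A}\otimes\Comp_x)$ with $U_x=I_{\A}\otimes Q_{x,\bot}+\sum_y S_y\otimes Q_{x,y}$. Since $U_x$ is diagonal on the database, it commutes with $B$. Write $\bar B=I-B$ and $C=\Comp_x$. Then
\[
B\cpO_x\bar B=BCB\,U_x\,BC\bar B+BC\bar B\,U_x\,\bar BC\bar B,
\]
and these are where the two groups of terms genuinely come from. Hence $\|B\cpO_x\bar B\|\le 2\|BC\bar B\|$ for arbitrary, possibly entangled, answer states. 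The paper phrases this as $\|[B,\cpO_x]\|\le2\|[B,\Comp_x]\|$. On a subspace, $BC\bar B$ has $|W|$ identical rows $(1/\sqrt s,-1/s,\ldots,-1/s)$. Its norm is therefore $\sqrt{2|W|/s-|W|^2/s^2}\le\sqrt{2s_t/(N-t)}$, which gives the constant $2\sqrt2$. With this in place of your coefficient expansion, the rest of your proof goes through unchanged.
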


\begin{proof}
Fix a forward-query subspace based at $h$, with $x\notin\dom(h)$, and
let $s=N-|h|$. If $h\in\mathcal B$, every extension is in
$\mathcal B$, and the query cannot cross between $B$ and $I-B$ on
this subspace. Otherwise, let $b$ of its $s$ extensions belong to
$\mathcal B$. The matrix of $B\Comp_x(I-B)$ has one column for
$\ket h$ and $s-b$ columns for the nonsatisfying extensions. Each of
its $b$ rows is
\[
 \left(\frac1{\sqrt s},
       -\frac1s,\ldots,-\frac1s\right).
\]
It consequently has norm
\[
 \sqrt{\frac bs+\frac{b(s-b)}{s^2}}
 =\sqrt{\frac{2b}s-\frac{b^2}{s^2}}
 \le\sqrt{\frac{2b}s}.
\]
Since $\Comp_x$ is self-adjoint, its commutator with $B$ has the
same norm as this off-diagonal block.

Write the query as
$\cpO_x=(I_{\A}\otimes\Comp_x)U_x(I_{\A}\otimes\Comp_x)$,
where $U_x$ answers the query from the uncompressed database and
acts as the identity when $x$ is absent. The operator $U_x$ commutes
with $B$. Therefore, on the whole subspace,
\[
 \|[B,\cpO_x]\|\le2\|[B,\Comp_x]\|
 \le2\sqrt{2b/s},
\]
with identity operators on the answer register suppressed. The query
preserves this subspace. A subspace meeting an input of size at most $t$
has $|h|\le t$, and hence $b\le s_t$ and $s\ge N-t$. Taking
direct sums gives
\[
 \|B\cpO_x(I-B)L_{\le t}\|
 \le2\sqrt{\frac{2s_t}{N-t}}.
\]
Inverse queries follow by exchanging domain and image. Coherent
controls are direct sums, so do not increase the bound.

Let $a_t$ be the norm of the satisfying component after $t$ queries.
Intermediate operations commute with $B$, and a compressed query
increases database size by at most one. Unitarity and the preceding
estimate yield
\[
 a_{t+1}\le a_t+2\sqrt{\frac{2s_t}{N-t}}.
\]
Since $a_0=0$, summing proves the claim.
\end{proof}

Combining the two lemmas gives the following explicit version of the
general search bound in~\cite{Carolan-Comp}.
\begin{theorem}[Explicit search bound]
\label{thm:explicit-search}
Under the hypotheses of \Cref{lem:explicit-fundamental}, with
$\varnothing\notin\mathcal B$,
\begin{equation}
\label{eq:explicit-search}
 p\le\min\left\{1,
 \left(
 2\sqrt2\sum_{t=0}^{q-1}\sqrt{\frac{s_t}{N-t}}
 +\ell\sqrt{\frac2{N-q}}
 +\delta_N(q+\ell)
 \right)^2\right\}.
\end{equation}
\end{theorem}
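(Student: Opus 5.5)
The bound follows by combining \Cref{lem:explicit-fundamental} with \Cref{lem:explicit-progress}; no new estimate should be needed. Fix the algorithm, which makes at most $q$ queries and outputs a list of at most $\ell$ pairs, together with the increasing property $\mathcal B$, where $\varnothing\notin\mathcal B$ and $3(q+\ell)-1\le N$. Since $p$ is a probability, $p\le1$ holds trivially, so only the second term of the minimum needs proof.

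First, apply \Cref{lem:explicit-fundamental} to obtain
\[
 \sqrt p\le\sqrt{p_{\mathrm{db}}}+\ell\sqrt{\tfrac{2}{N-q}}+\delta_N(q+\ell),
\]
where $p_{\mathrm{db}}$ is the probability that the final compressed database lies in $\mathcal B$ after the algorithm's $q$ compressed queries. Here it matters that the verifier queries in the proof of \Cref{lem:explicit-fundamental} are accounted for only in $\delta_N(q+\ell)$ and in the recording term. So $p_{\mathrm{db}}$ refers only to the database after the original $q$ queries, and degree at most $q$ is what enters the recording estimate.

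Second, bound $\sqrt{p_{\mathrm{db}}}$ by \Cref{lem:explicit-progress}. That lemma needs $q<N$, which follows from $3(q+\ell)-1\le N$. It also needs the database to start at $\ket\varnothing\notin\mathcal B$, which is given. Its conclusion is that $\sqrt{p_{\mathrm{db}}}\le 2\sqrt2\sum_{t=0}^{q-1}\sqrt{s_t/(N-t)}$. The algorithm's operations between queries act only on accessible registers and so commute with $B$, which is exactly the setting of that lemma.

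Substitute the second bound into the first and square both sides. Squaring is valid because both sides are nonnegative. This gives the displayed bound. For the ideal cipher case, use the keyed versions of both lemmas, which were stated with the same constants.

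The only point needing care is checking that $p_{\mathrm{db}}$ means the same thing in both lemmas: it is the measurement of the compressed database after exactly the algorithm's $q$ queries, before the appended verifier. Both lemmas are phrased this way, so the combination is immediate.
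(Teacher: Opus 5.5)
Your proposal is correct and matches the paper's argument. The paper obtains the theorem exactly this way: it substitutes the bound on $\sqrt{p_{\mathrm{db}}}$ from \Cref{lem:explicit-progress} into \Cref{lem:explicit-fundamental}, squares, and uses the trivial bound $p\le 1$ for the minimum.
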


To state the applications compactly, let
\begin{equation}
\label{eq:application-errors}
 a_N(T)=\sqrt{\frac N{N-T}},
 \qquad
 e_N(T)=T\sqrt{\frac2{N-T}}+\delta_N(T).
\end{equation}
Here and below an upper bound on a probability may always be truncated
at one. If $N\ge16$ and $1\le T\le\sqrt N/4$, then
\begin{equation}
\label{eq:application-error-simplification}
 a_N(T)\le\sqrt2,
 \qquad e_N(T)\le\frac{88T}{\sqrt N}.
\end{equation}
Indeed, the two terms with square-root denominators in $e_N(T)$ sum
to at most $(2+60\sqrt2)T/\sqrt N$. Its remaining term is at most
$T(T+1)/N\le T/(2\sqrt N)$.

\subsection{Sponge}
\label{subsec:explicit-sponge}

Allow the digest length to be any $1\le d\le r$. Thus, following
absorption, the sponge returns the first $d$ bits of its final state.
This includes the single-output-block construction of
\Cref{subsec:sponge-davies-meyer} and all four SHA-3 hash functions.
Write $b=r+c$ and $N=2^b$. For a state $x$, let $x_{\mathrm C}$
denote its final $c$ bits, and let $x_{[d]}$ denote its first $d$ bits.

We give the combinatorial argument explicitly, including the effect of
inverse queries. Associate with a partial permutation $f$ the directed
multigraph on $\{0,1\}^c$ having one edge
\[
 x_{\mathrm C}\longrightarrow y_{\mathrm C}
 \quad\text{for each }(x,y)\in f.
\]
Edges retain their full input-output labels. Call a vertex reachable
if there is a directed path to it from $0^c$, allowing the empty path.
An edge is reachable if its source is reachable. Every sponge
computation whose transcript occurs in $f$ follows reachable edges.

Call $f$ internally bad if a reachable edge ends at $0^c$, or two
distinct reachable edges have the same endpoint. Otherwise its
reachable graph is a rooted tree with edges directed away from the
root. In particular, every reachable vertex has a unique path from
$0^c$. To see this, a reachable cycle either enters the root or gives
a vertex two distinct incoming reachable edges; the same alternative
holds for two different paths to a vertex.

For a fixed target $w\in\{0,1\}^d$, let
$\mathcal B_{\mathrm{pre}}$ consist of internally bad databases and
databases containing a reachable edge $(x,y)$ with $y_{[d]}=w$.
Let $\mathcal B_{\mathrm{col}}$ consist of internally bad databases
and databases containing two distinct reachable edges $(x,y),(x',y')$
with $y_{[d]}=y'_{[d]}$. Both properties are increasing and exclude
the empty database.

\begin{lemma}[Sponge sparsity]
\label{lem:explicit-sponge-sparsity}
The two-sided sparsities of these properties satisfy
\begin{align}
\label{eq:sponge-pre-sparsity}
 s_t(\mathcal B_{\mathrm{pre}})
 &\le 2^{b-d}+(2t+1)2^r,\\
\label{eq:sponge-col-sparsity}
 s_t(\mathcal B_{\mathrm{col}})
 &\le t2^{b-d}+(2t+1)2^r.
\end{align}
\end{lemma}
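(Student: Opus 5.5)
Fix $h\notin\mathcal B$ with $|h|\le t$. Since $h$ is not internally bad, its reachable graph is a rooted tree with at most $t$ edges, so it has at most $t+1$ reachable vertices, none with an incoming reachable edge at the root. We bound separately the number of $y$ (for a fixed new input $x$) and the number of $x$ (for a fixed new output $y$) such that $g=h[x\mapsto y]\in\mathcal B$. Throughout, the only way the new edge $(x,y)$ can matter is if it becomes reachable or makes new edges reachable. We charge each outcome to one of the following events: (i) the new edge is reachable and its endpoint lands in a bad set of capacity values; (ii) the new edge is reachable and its digest hits the forbidden set; (iii) the new edge is not reachable itself but makes some other edges reachable.

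\emph{Forward case, fixed $x$.} If $x_{\mathrm C}$ is not reachable in $h$, adding the edge cannot make anything newly reachable, since a new edge only extends reachability from its source. In that case $g$ has the same reachable subgraph as $h$, so $g\notin\mathcal B$ and there are zero bad $y$. So assume $x_{\mathrm C}$ is reachable.

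Then $g$ is internally bad when $y_{\mathrm C}$ is either $0^c$ or the endpoint of an existing reachable edge. Adding an edge into a previously unreachable vertex $v$ can also make old edges out of $v$ reachable. This can create badness or a digest hit among those old edges, but only when $y_{\mathrm C}$ equals the source capacity of one of at most $t$ old edges.

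Altogether, the bad capacity values number at most $1+t+t=2t+1$. Each capacity value corresponds to $2^r$ full outputs $y$, giving the $(2t+1)2^r$ term.

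The remaining outcomes are those where $y_{\mathrm C}$ is fresh, so $g$ stays a tree with no new internal badness and the only change is the single new reachable edge $(x,y)$. For $\mathcal B_{\mathrm{pre}}$ this requires $y_{[d]}=w$, which is satisfied by at most $2^{b-d}$ values of $y$. For $\mathcal B_{\mathrm{col}}$ it requires $y_{[d]}$ to equal the digest of one of at most $t$ reachable edges, which gives at most $t2^{b-d}$ values of $y$.

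\emph{Inverse case, fixed $y$.} Now count $x$ with $h[x\mapsto y]\in\mathcal B$. The new edge matters only if $x_{\mathrm C}$ is reachable, and there are at most $t+1$ reachable vertices. The naive count of all such $x$ is therefore $(t+1)2^r$, which already fits under $(2t+1)2^r$. This holds regardless of the digest condition, because $y$ is fixed and the count is over the capacity of $x$. So the inverse sparsity is at most $(t+1)2^r\le(2t+1)2^r$ for both properties.

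The main obstacle is the bookkeeping in the forward case: arguing carefully that a fresh $y_{\mathrm C}$ introduces no new reachability beyond the single edge. This needs the fact that a fresh vertex has no outgoing edges in $h$. The bound also stays within the stated constant by separating the "endpoint collision" charges from the "reattaching subtree" charges; I would verify that these together number at most $2t+1$ capacity values.
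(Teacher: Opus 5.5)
Your proposal is correct and follows the paper's proof essentially step for step: the same exclusion of at most $(t+1)+t=2t+1$ capacity values (reachable vertices plus edge sources) when $x_{\mathrm C}$ is reachable, the observation that any other $y$ adds only a fresh reachable leaf (so only the $2^{b-d}$ or $t2^{b-d}$ digest term remains), and the crude $(t+1)2^r$ count for inverse queries. Your preliminary case (iii) is vacuous, since an unreachable new edge cannot make anything reachable, but the argument you actually carry out never relies on it.
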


\begin{proof}
Fix a nonsatisfying database $h$ of size at most $t$. Its reachable
vertex set has size at most $t+1$. Consider adding a forward pair
$x\mapsto y$. If $x_{\mathrm C}$ is not reachable, no new vertex or
edge becomes reachable, so the property remains false.

Suppose $x_{\mathrm C}$ is reachable. Exclude values of
$y_{\mathrm C}$ that are already reachable or occur as the source
of an edge of $h$. There are at most $(t+1)+t=2t+1$ such capacity
values, accounting for at most $(2t+1)2^r$ choices of $y$.
For every remaining choice, the new reachable edge ends at a new leaf
and makes no old edge reachable. It cannot create an internal bad
event. For preimages, it can satisfy the property only if
$y_{[d]}=w$, allowing at most $2^{b-d}$ further choices. For
collisions, its digest must equal that of an existing reachable edge,
allowing at most $t2^{b-d}$ further choices.

For an inverse query, $y$ is fixed and $x$ varies. Unless
$x_{\mathrm C}$ is already reachable, adding the edge again leaves
the reachable graph unchanged. There are at most $(t+1)2^r$ choices
with reachable $x_{\mathrm C}$. This is bounded by either displayed
right-hand side. Restricting to legal extensions only decreases all
these counts.
\end{proof}

A successful preimage transcript witnesses $\mathcal B_{\mathrm{pre}}$.
Two successful collision transcripts witness $\mathcal B_{\mathrm{col}}$:
if their database is not internally bad, the two distinct padded
messages must end on different edges. Indeed, the terminal edge
determines the unique rooted path, and that path determines every
message block by
$m_i=(x_i)_{\mathrm R}\oplus(s_{i-1})_{\mathrm R}$.
Thus identical terminal edges would imply identical padded messages.
Here the padding rule is assumed injective, as is the SHA-3 encoding.

We keep track of the cost of obtaining these transcripts. Suppose an
algorithm makes $q$ queries and outputs one candidate message, or two
candidate messages for collision finding. Let $L\ge1$ bound the total
number of padded blocks in its output. Evaluating these messages and
recording their input-output pairs uses at most $L$ additional forward
swap queries. Applying \Cref{thm:explicit-search} to this augmented
algorithm uses at most another $L$ queries for the list verification
in the fundamental lemma. Accordingly, set
\begin{equation}
\label{eq:sponge-total-query-budget}
 T=q+2L.
\end{equation}
This accounting applies to ordinary message outputs; it does not assume
that the adversary also supplies the intermediate states.

\begin{theorem}[Explicit sponge security]
\label{thm:explicit-sponge}
Let $1\le d\le r$, $N=2^{r+c}$, and $T=q+2L$. For a uniformly
random permutation and $3T-1\le N$, the fixed-target preimage and
collision success probabilities satisfy
\begin{align}
\label{eq:sponge-pre-exact}
 p_{\mathrm{pre}}
 &\le\left[
 a_N(T)\left(
 \frac{2\sqrt2\,T}{2^{d/2}}+
 \frac{4T^{3/2}}{2^{c/2}}\right)+e_N(T)
 \right]^2,\\
\label{eq:sponge-col-exact}
 p_{\mathrm{col}}
 &\le\left[
 a_N(T)\left(
 \frac{4\sqrt2\,T^{3/2}}{3\cdot2^{d/2}}+
 \frac{4T^{3/2}}{2^{c/2}}\right)+e_N(T)
 \right]^2.
\end{align}
In particular, if $1\le T\le2^{(r+c)/2}/4$, then
\begin{align}
\label{eq:sponge-pre-simple}
 p_{\mathrm{pre}}
 &\le \frac{64T^2}{2^d}
      +\frac{128T^3}{2^c}
      +\frac{32768T^2}{2^{r+c}},\\
\label{eq:sponge-col-simple}
 p_{\mathrm{col}}
 &\le \frac{32T^3}{2^d}
      +\frac{128T^3}{2^c}
      +\frac{32768T^2}{2^{r+c}}.
\end{align}
The preimage target is fixed independently of the permutation.
\end{theorem}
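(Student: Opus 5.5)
The plan is to apply \Cref{thm:explicit-search} to the augmented algorithm with the properties $\mathcal B_{\mathrm{pre}}$ and $\mathcal B_{\mathrm{col}}$, feed in the sparsity bounds of \Cref{lem:explicit-sponge-sparsity}, and sum.

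\textbf{Setting up.} The augmented algorithm runs the adversary's $q$ queries and then evaluates the output message(s) with at most $L$ more forward swap queries, recording the input-output pairs. It therefore makes $q'=q+L$ queries and outputs a list of $\ell\le L$ pairs, so $q'+\ell\le T$. By the discussion preceding the theorem, a successful preimage (resp.\ collision) yields a list that witnesses $\mathcal B_{\mathrm{pre}}$ (resp.\ $\mathcal B_{\mathrm{col}}$) and agrees with the real permutation. Both properties are increasing and exclude $\varnothing$, and $3T-1\le N$ gives the hypothesis of \Cref{lem:explicit-fundamental}. \Cref{thm:explicit-search} then bounds $\sqrt p$ by the sum of three terms:
\begin{itemize}
\item the progress term $2\sqrt2\sum_{t<q'}\sqrt{s_t/(N-t)}$;
\item the recording term $\ell\sqrt{2/(N-q')}$;
\item the soundness term $\delta_N(q'+\ell)$.
\end{itemize}
Since $\ell\le T$, $q'\le T$, and $\delta_N$ is increasing in its argument, the last two terms are at most $e_N(T)$.

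\textbf{Progress term.} For $t<T$ we use $1/\sqrt{N-t}\le a_N(T)/\sqrt N$ and extend the sum to $t<T$. With $b=r+c$ and $N=2^b$, subadditivity of the square root applied to \eqref{eq:sponge-pre-sparsity} gives
\[
  \sqrt{s_t/N}\le 2^{-d/2}+\sqrt{2t+1}\,2^{-c/2}.
\]
The first part sums to $T2^{-d/2}$, producing the term $2\sqrt2\,T/2^{d/2}$. For the second, Cauchy--Schwarz gives
\[
  \sum_{t<T}\sqrt{2t+1}\le\sqrt{T\sum_{t<T}(2t+1)}=T^{3/2},
\]
producing $2\sqrt2\,T^{3/2}/2^{c/2}\le 4T^{3/2}/2^{c/2}$. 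This proves \eqref{eq:sponge-pre-exact}.

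For collisions, \eqref{eq:sponge-col-sparsity} gives $\sqrt t\,2^{-d/2}$ in place of $2^{-d/2}$. Using $\sum_{t<T}\sqrt t\le\int_0^T\sqrt u\,du=\tfrac23T^{3/2}$, the digest term becomes $2\sqrt2\cdot\tfrac23 T^{3/2}/2^{d/2}=4\sqrt2\,T^{3/2}/(3\cdot 2^{d/2})$. This proves \eqref{eq:sponge-col-exact}.

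\textbf{Simplified forms.} Assume $1\le T\le 2^{b/2}/4$. Then $N\ge16$, and \eqref{eq:application-error-simplification} gives $a_N(T)\le\sqrt2$ and $e_N(T)\le 88T/\sqrt N$. Write the bracket as $A+B+C$, with $A$ the digest term, $B$ the capacity term and $C$ the soundness term. By Cauchy--Schwarz with weights $\tfrac14,\tfrac14,\tfrac12$,
\[
  (A+B+C)^2\le 4A^2+4B^2+2C^2.
\]
\begin{itemize}
\item Preimages: $A\le 4T/2^{d/2}$, so $4A^2\le 64T^2/2^d$.
\item Collisions: $A\le \tfrac83T^{3/2}/2^{d/2}$, so $4A^2\le\tfrac{256}{9}T^3/2^d\le 32T^3/2^d$.
\item Capacity, in both cases: $B\le 4\sqrt2\,T^{3/2}/2^{c/2}$, so $4B^2\le 128T^3/2^c$.
\item Soundness: $2C^2\le 2\cdot 88^2T^2/2^{r+c}=15488\,T^2/2^{r+c}\le 32768\,T^2/2^{r+c}$.
\end{itemize}
These give \eqref{eq:sponge-pre-simple} and \eqref{eq:sponge-col-simple}.

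The main point needing care is the query accounting. The sparsity index $t$ must range over total database sizes up to the augmented count $q'$, while the fundamental lemma's $q$ is also $q'$; everything must fit inside the budget $T=q+2L$ so that monotonicity collapses all error terms into $a_N(T)$ and $e_N(T)$. The remaining steps are routine summation.
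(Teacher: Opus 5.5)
Your proposal is correct and follows the paper's proof. You apply \Cref{thm:explicit-search} to the augmented $(q+L)$-query algorithm with the sparsity bounds of \Cref{lem:explicit-sponge-sparsity}, absorb the recording and soundness terms into $e_N(T)$ by monotonicity, and sum. The differences are cosmetic: your Cauchy--Schwarz bound $\sum_{t<T}\sqrt{2t+1}\le T^{3/2}$ is slightly sharper than the paper's $\sqrt2\,T^{3/2}$, and your weighted split $(A+B+C)^2\le 4A^2+4B^2+2C^2$ replaces the paper's $3(u^2+v^2+w^2)$, and both choices still stay within the stated constants.
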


\begin{proof}
Let $q'=q+L$. The augmented algorithm makes at most $q'$ queries
and outputs a transcript of at most $L$ pairs. Use the appropriate
property from \Cref{lem:explicit-sponge-sparsity} in
\Cref{thm:explicit-search}. The relevant success event for the original
algorithm is contained in the event that its transcript witnesses this
property. The sum for preimages is bounded using
\[
 \sum_{t=0}^{q'-1}\sqrt{s_t}
 \le q'2^{(b-d)/2}
      +2^{r/2}\sum_{t=0}^{q'-1}\sqrt{2t+1}
 \le q'2^{(b-d)/2}+\sqrt2\,(q')^{3/2}2^{r/2}.
\]
For collisions, use instead
\[
 \sum_{t=0}^{q'-1}\sqrt t
 \le\int_0^{q'}\sqrt u\,du=\frac23(q')^{3/2}.
\]
Replacing $q'$ and $L$ by the larger value $T$ and every $N-t$ by
$N-T$ gives \eqref{eq:sponge-pre-exact} and
\eqref{eq:sponge-col-exact}.

For the simpler estimates, apply
\eqref{eq:application-error-simplification} and
$(u+v+w)^2\le3(u^2+v^2+w^2)$. The preimage coefficients are at most
$48,96,23232$, and the collision coefficients at most
$64/3,96,23232$. Rounding them upwards gives the displayed constants.
\end{proof}

\subsection{The SHA-3 parameters}
\label{subsec:explicit-sha3}

The four SHA-3 hash functions use a $1600$-bit permutation, digest
length $d\in\{224,256,384,512\}$, capacity $c=2d$, and rate
$r=1600-2d$~\cite{FIPS202}. In each case $d\le r$, so the digest is
obtained from the first squeezing block. The encoding appends the
suffix $01$ and then the multi-rate padding $10^*1$. For a message
of $m$ bits, its padded block count is
\begin{equation}
\label{eq:sha3-padded-length}
 L(m)=\left\lceil\frac{m+4}{r}\right\rceil.
\end{equation}
For collisions, use the sum of the two padded block counts, or a
public upper bound on that sum, in \eqref{eq:sponge-total-query-budget}.

Modeling the underlying permutation as uniform, the preceding theorem
gives, for $1\le T\le2^{798}$,
\begin{align}
\label{eq:sha3-pre-full}
 p_{\mathrm{pre}}
 &\le\frac{64T^2}{2^d}
       +\frac{128T^3}{2^{2d}}
       +\frac{32768T^2}{2^{1600}},\\
\label{eq:sha3-col-full}
 p_{\mathrm{col}}
 &\le\frac{32T^3}{2^d}
       +\frac{128T^3}{2^{2d}}
       +\frac{32768T^2}{2^{1600}}.
\end{align}
In particular, all four parameter sets satisfy the simpler bounds
\begin{equation}
\label{eq:sha3-concrete}
 \boxed{\quad
 p_{\mathrm{pre}}\le\min\{1,2^{7-d}T^2\},
 \qquad
 p_{\mathrm{col}}\le\min\{1,2^{6-d}T^3\}.
 \quad}
\end{equation}
To verify the first bound, it suffices to consider $T\le2^{d/2}$.
After dividing \eqref{eq:sha3-pre-full} by $T^2/2^d$, its coefficient
is at most
$64+128\cdot2^{-d/2}+32768\cdot2^{d-1600}<128$.
For collisions the corresponding coefficient is at most
$32+128\cdot2^{-d}+32768\cdot2^{d-1600}<64$, using $T\ge1$.
For $T>2^{d/2}$, both claims follow from the trivial bound of one.
Thus \eqref{eq:sha3-concrete} holds for every $T\ge1$.

\begin{center}
\begin{tabular}{lrrcc}
\hline
Function & Rate $r$ & Capacity $c$
 & Preimage probability & Collision probability\\
\hline
SHA3-224 & 1152 & 448  & $T^2/2^{217}$ & $T^3/2^{218}$\\
SHA3-256 & 1088 & 512  & $T^2/2^{249}$ & $T^3/2^{250}$\\
SHA3-384 & 832  & 768  & $T^2/2^{377}$ & $T^3/2^{378}$\\
SHA3-512 & 576  & 1024 & $T^2/2^{505}$ & $T^3/2^{506}$\\
\hline
\end{tabular}
\end{center}

For example, success probability at least $1/2$ requires
\[
 T\ge2^{d/2-4}\quad\text{for preimages},
 \qquad
 T\ge2^{(d-7)/3}\quad\text{for collisions}.
\]
The resulting explicit thresholds are
\begin{center}
\begin{tabular}{lcc}
\hline
Function & Preimages & Collisions\\
\hline
SHA3-224 & $2^{108}$ & $2^{217/3}\approx2^{72.33}$\\
SHA3-256 & $2^{124}$ & $2^{83}$\\
SHA3-384 & $2^{188}$ & $2^{377/3}\approx2^{125.67}$\\
SHA3-512 & $2^{252}$ & $2^{505/3}\approx2^{168.33}$\\
\hline
\end{tabular}
\end{center}
Here $T=q+2L$, so a threshold $T\ge T_0$ implies
$q\ge T_0-2L$ for the adversary's own swap queries. For standard
XOR queries it implies $q\ge(T_0-2L)/2$. These estimates make the
output-length dependence explicit and recover the expected quantum
preimage and collision exponents whenever that length overhead is
negligible at the relevant scale.

These are concrete bounds for the SHA-3 sponge parameters in the random
permutation model. They do not establish that the fixed Keccak permutation
itself behaves as a uniformly random permutation. However, they indicate that the cryptanalysis of SHA3 is unlikely to be affected by even large scale quantum computers, unless they exploit significant structure in the Keccak permutation.

\subsection{Davies--Meyer}
\label{subsec:explicit-davies-meyer}

Let $N=2^n$ and let $\Phi=(\varphi_k)_{k\in\mathcal K}$ be an ideal
cipher on $\{0,1\}^n$. Recall that
\[
 \mathsf{DM}^{\Phi}(k,x)=\varphi_k(x)\oplus x.
\]
A fixed-target preimage of $w$ is a pair $(k,x)$ with
$\varphi_k(x)=x\oplus w$. A collision consists of distinct inputs
$(k,x)\ne(k',x')$ with the same Davies--Meyer output. Keys may be
chosen by the adversary and queried in superposition.

\begin{theorem}[Explicit Davies--Meyer security]
\label{thm:explicit-davies-meyer}
For a $q$-query algorithm, let $T_{\mathrm{pre}}=q+1$ and
$T_{\mathrm{col}}=q+3$. Whenever the relevant $T$ satisfies
$3T-1\le N$, its success probability is bounded by
\begin{align}
\label{eq:dm-pre-exact}
 p_{\mathrm{pre}}
 &\le\left[
 \frac{2\sqrt2\,a_N(T_{\mathrm{pre}})T_{\mathrm{pre}}}{\sqrt N}
 +e_N(T_{\mathrm{pre}})\right]^2,\\
\label{eq:dm-col-exact}
 p_{\mathrm{col}}
 &\le\left[
 \frac{4\sqrt2\,a_N(T_{\mathrm{col}})T_{\mathrm{col}}^{3/2}}
      {3\sqrt N}
 +e_N(T_{\mathrm{col}})\right]^2.
\end{align}
For $N\ge16$ and $1\le T\le\sqrt N/4$, these imply respectively
\begin{align}
\label{eq:dm-pre-simple}
 p_{\mathrm{pre}}&\le\frac{16384(q+1)^2}{2^n},\\
\label{eq:dm-col-simple}
 p_{\mathrm{col}}&\le
 \frac{16(q+3)^3+16384(q+3)^2}{2^n}.
\end{align}
There is no dependence on the number of keys. All statements also hold
for a single random permutation, by taking a singleton key space.
\end{theorem}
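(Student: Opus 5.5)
Following the sponge argument, we would apply \Cref{thm:explicit-search} in its ideal-cipher form to the keyed database, with properties defined on the combined multiset of Davies--Meyer edges. For a keyed partial permutation family $(h_k)_k$, each pair $(k,x,y)$ with $h_k(x)=y$ gives a DM value $x\oplus y$. Let $\mathcal B_{\mathrm{pre}}$ be the keyed databases containing some $(k,x,y)$ with $x\oplus y=w$. Let $\mathcal B_{\mathrm{col}}$ be those containing two distinct triples $(k,x,y)\ne(k',x',y')$ with $x\oplus y=x'\oplus y'$. Both are increasing and exclude the empty database. A successful preimage output $(k,x)$ is checked with one forward query, giving an output list of length $\ell=1$. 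A collision output requires two evaluation queries and two verified pairs. The budgets $T_{\mathrm{pre}}=q+1$ and $T_{\mathrm{col}}=q+3$ are chosen to dominate the total query count $q+\ell$ (plus evaluation queries) in \Cref{lem:explicit-fundamental}. We would track these carefully, replacing every $q$, $\ell$ and $N-t$ by $T$ and $N-T$, exactly as in the proof of \Cref{thm:explicit-sponge}.

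The key step is the two-sided sparsity bound, with a maximum over the queried key and $|h|\le t$ the total size. For preimages, fix a nonsatisfying $h$ and key $k$. A forward extension $x\mapsto y$ satisfies the property only if $y=x\oplus w$, so at most one $y$ qualifies. Symmetrically, an inverse extension satisfies it only if $x=y\oplus w$. Hence $s_t\le1$. For collisions, a new edge must match one of at most $t$ existing DM values $v$, forcing $y=x\oplus v$ (or $x=y\oplus v$). Hence $s_t\le t$. Independence from $|\mathcal K|$ comes from the fact that sparsity only counts extensions of the queried key's permutation, together with \Cref{cor:ideal-cipher-extension} and the ideal-cipher clauses of \Cref{lem:explicit-fundamental,lem:explicit-progress}.

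Substituting into \eqref{eq:explicit-search}, the preimage sum is bounded by
\[
2\sqrt2\sum_{t<T}(N-t)^{-1/2}\le 2\sqrt2\,a_N(T)T/\sqrt N .
\]
The collision sum uses
\[
\sum_{t<T}\sqrt t\le \tfrac23T^{3/2},
\]
which gives the factor $4\sqrt2/3$. The remaining terms $\ell\sqrt{2/(N-q)}+\delta_N(q+\ell)$ are absorbed into $e_N(T)$, since $\ell\le T$. This yields \eqref{eq:dm-pre-exact} and \eqref{eq:dm-col-exact}.

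For the simplified forms, we would apply \eqref{eq:application-error-simplification}: $a_N\le\sqrt2$ and $e_N\le88T/\sqrt N$. For preimages the bracket is at most $(4+88)T/\sqrt N$, and squaring gives $92^2\le16384$. For collisions we would use $(u+v)^2\le2u^2+2v^2$: the first term gives $2\cdot(8/3)^2T^3/N\le16T^3/N$, and the second gives $2\cdot88^2T^2/N\le16384T^2/N$.

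The main obstacle I expect is the bookkeeping of the query budget for collisions. The verification list has two pairs, possibly under different keys and coinciding inputs, so it must be confirmed that the augmented algorithm's query count fits within $q+3$ and that $\ell\le2$ errors are dominated by $e_N(T)$. A secondary obstacle is confirming that the sparsity maximum ranges only over legal extensions, so that coherent key superpositions cost nothing.
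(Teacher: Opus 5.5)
Your route is the paper's: the same two increasing properties, the same sparsity bounds $s_t\le1$ and $s_t\le t$, the same sums, and the same arithmetic for the simplified constants. The gap is the collision query accounting, which you flag as the main obstacle but leave unresolved, and as you set it up it does not fit the stated budget. With two evaluation queries and two verified pairs, the augmented algorithm makes $q'=q+2$ queries and outputs $\ell=2$ pairs. The fundamental lemma then charges $\delta_N(q'+\ell)=\delta_N(q+4)$ and requires $3(q+4)-1\le N$. Neither is implied by the budget $T_{\mathrm{col}}=q+3$, and $\delta_N$ is increasing, so your plan proves the theorem only with $q+4$ in place of $q+3$.

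The missing idea is that one evaluation query suffices. Query $y=\varphi_k(x)$ and supply the list $(k,x,y)$, $(k',x',y\oplus x\oplus x')$. For $(k,x)\ne(k',x')$, the second triple agrees with the real cipher exactly when $\varphi_{k'}(x')\oplus x'=\varphi_k(x)\oplus x$, that is, exactly when the output is a collision. The two triples are distinct and consistent with a partial permutation: if $k=k'$ then $x\ne x'$, so the claimed outputs $y$ and $y\oplus x\oplus x'$ differ. This gives $q'=q+1$, $\ell=2$, and $q'+\ell=q+3=T_{\mathrm{col}}$, and the lemma's hypothesis $3(q'+\ell)-1\le N$ becomes exactly $3T-1\le N$.

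The same point applies to preimages. Your ``checked with one forward query'' must refer to the verifier's query, not an evaluation query. The algorithm supplies the triple $(k,x,x\oplus w)$ at no cost, so $q'=q$, $\ell=1$, and $q'+\ell=q+1=T_{\mathrm{pre}}$; an extra evaluation query would overshoot to $q+2$. With these two adjustments, the rest of your proposal goes through as written.
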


\begin{proof}
For preimages, use the increasing property that some recorded triple
$(k,x,y)$ satisfies $x\oplus y=w$. For a fixed key and a forward
input, exactly one output can satisfy this equation; for an inverse
input, exactly one preimage can satisfy it. Hence $s_t\le1$.
An algorithm outputting $(k,x)$ can supply the claimed triple
$(k,x,x\oplus w)$ without another query. Apply
\Cref{thm:explicit-search} with $\ell=1$ and bound its sum by
$q/\sqrt{N-q+1}$. Enlarging $q$ to $T_{\mathrm{pre}}$ gives
\eqref{eq:dm-pre-exact}.

For collisions, the increasing property is that two distinct recorded
triples have equal values of $x\oplus y$. In a nonsatisfying database
of total size at most $t$, there are at most $t$ such values. For any
fixed key and forward input, each value specifies at most one output
which creates a collision. The same holds for inverse queries.
Consequently $s_t\le t$, even when the colliding triples use different
keys.

An ordinary collision output $(k,x),(k',x')$ need not include the common
hash value. Append one forward query to obtain $y=\varphi_k(x)$, and
supply the two claimed triples
\[
 (k,x,y),\qquad(k',x',y\oplus x\oplus x').
\]
For distinct inputs, these form a consistent collision witness exactly
when the original output is a collision. Apply
\Cref{thm:explicit-search} with $q'=q+1$ and $\ell=2$. Using
$\sum_{t=0}^{q'-1}\sqrt t\le\frac23(q')^{3/2}$ and enlarging the
budget to $T_{\mathrm{col}}$ proves \eqref{eq:dm-col-exact}.

Finally, \eqref{eq:application-error-simplification} bounds the preimage
amplitude by $92T/\sqrt N$, whose square is less than
$16384T^2/N$. For collisions it bounds the amplitude by
$((8/3)T^{3/2}+88T)/\sqrt N$. Applying
$(u+v)^2\le2u^2+2v^2$ gives
\[
 p_{\mathrm{col}}
 \le\frac{(128/9)T^3+15488T^2}{N}
 \le\frac{16T^3+16384T^2}{N}.
\]
\end{proof}

Thus constant success requires $\Omega(2^{n/2})$ queries for a fixed-target
Davies--Meyer preimage and $\Omega(2^{n/3})$ queries for a collision.
The improved soundness error contributes only a quadratic term in the
collision bound.

\section{Statement on AI Use}
The core idea of viewing the compressed oracle as a POVM on the standard purification is due to the authors, as well as the main applications. GPT-6 Astra was used throughout the creation of this manuscript, in particular the calculations establishing the intertwining bound were performed by Astra, as well as working out the explicit constants for SHA-3 parameters in \Cref{subsec:explicit-sha3}. The technical overview and introduction were prepared by an interlaced combination of text from Astra and the authors. Astra further helped format and prepare the bibliography. For instance, a representative prompt is, after defining all relevant notation and machinery in context, ``Compute the spectrum of $G_r$.'' The authors then verified and assembled the output, as well as providing intuition and perspective.

\bibliographystyle{alpha}
\bibliography{ref}

\end{document}